\documentclass[12pt,draftclsnofoot,onecolumn]{IEEEtran}
\usepackage{amsmath,amsfonts,amsthm,amssymb,bbm}
\usepackage[margin=1in]{geometry}
\usepackage[normalem]{ulem}
\usepackage{mathrsfs}


\usepackage[hyphens]{url}
\usepackage{hyperref}
\usepackage{graphicx}
\usepackage{verbatim}
\usepackage{color}


\DeclareMathOperator*{\esssup}{ess\,sup}

\newtheorem{theorem}{Theorem}[section]
\newtheorem{assumption}{Assumption}[section]
\newtheorem{corollary}{Corollary}[section]
\newtheorem{lemma}{Lemma}[section]
\newtheorem{proposition}{Proposition}[section]

\newtheorem{definition}{Definition}[section]
\newtheorem{remark}{Remark}[section]
\newtheorem{example}{Example}[section]

\renewcommand{\P}{\mathbb{P}}

\newcommand{\R}{\mathbb{R}}
\newcommand{\E}{\mathbb{E}}

\newcommand{\N}{\mathbb{N}}

\newcommand{\cN}{\mathcal{N}}

\newcommand{\cD}{\mathcal{D}}
\newcommand{\cG}{\mathcal{G}}

\newcommand{\A}{\mathcal{A}}

\newcommand{\eps}{\varepsilon}

\newcommand{\Pqgv}{\mathsf{P}^{q,g}_{\nu}}

\newcommand{\Pginf}{\mathsf{P}^{q=0,g}_{\infty}}

\newcommand{\Eqv}{\mathbb{E}^{q}_{\nu}}
\newcommand{\Eq}{\mathbb{E}^{q,\emptyset}_0}
\newcommand{\Eqgv}{\mathbb{E}^{q,g}_{\nu}}
\newcommand{\Eqg}{\mathbb{E}^{q,g}_{0}}

\newcommand{\nada}[1]{}

\begin{document}
\title{Asymptotic Optimality in Byzantine Distributed Quickest Change Detection}

\author{Yu-Chih Huang, Yu-Jui Huang, and Shih-Chun Lin

\thanks{This paper was presented in part at the 2019 IEEE International Symposium on Information Theory \cite{ISIS19QCDconverse, ISIS19MQCD}.}

\thanks{Y.-C. Huang is with the Department of Communication Engineering, National Taipei University, 237 Sanxia District, New Taipei City, Taiwan (email: ychuang@mail.ntpu.edu.tw).}

\thanks{Y.-J. Huang is with the Department of Applied Mathematics, University of Colorado at Boulder, CO 80309 , USA (email: yujui.huang@colorado.edu)}

\thanks{S.-C. Lin is with the Department of Electronic and Computer Engineering, National Taiwan University of Science and Technology, 106 Daan District, Taipei City, Taiwan (email: sclin@ntust.edu.tw)}

\thanks{The authors are ordered alphabetically.}
}

\maketitle

\begin{abstract}
The Byzantine distributed quickest change detection (BDQCD) is studied, where a fusion center monitors the occurrence of an abrupt event through a bunch of distributed sensors that may be compromised. We first consider the binary hypothesis case where there is only one post-change hypothesis and prove a novel converse to the first-order asymptotic detection delay in the large mean time to a false alarm regime. This converse is tight in that it coincides with the currently best achievability shown by Fellouris \textit{et al.}; hence, the optimal asymptotic performance of binary BDQCD is characterized. An important implication of this result is that, even with compromised sensors, a 1-bit link between each sensor and the fusion center suffices to achieve asymptotic optimality. To accommodate multiple post-change hypotheses, we then formulate the multi-hypothesis BDQCD problem and again investigate the optimal first-order performance under different bandwidth constraints. A converse is first obtained by extending our converse from binary to multi-hypothesis BDQCD. Two families of stopping rules, namely the simultaneous $d$-th alarm and the multi-shot $d$-th alarm, are then proposed. Under sufficient link bandwidth, the simultaneous $d$-th alarm, with $d$ being set to the number of honest sensors, can achieve the asymptotic performance that coincides with the derived converse bound; hence, the asymptotically optimal performance of multi-hypothesis BDQCD is again characterized. Moreover, although being shown to be asymptotically optimal only for some special cases, the multi-shot $d$-th alarm is much more bandwidth-efficient and energy-efficient than the simultaneous $d$-th alarm. Built upon the above success in characterizing the asymptotic optimality of the BDQCD, a corresponding leader-follower Stackelberg game is formulated and its solution is found.
\end{abstract}

\section{Introduction}
The problem of quickest change detection (QCD), a.k.a. sequential change detection, studies detecting an abnormal event as quickly as possible after its occurrence at a deterministic but unknown time, subject to a certain false alarm rate. It has many applications and has been extensively researched since the early works \cite{page54, lorden71, moustakides86}. In these works, it is assumed that there is only one post-change hypothesis, which we refer to as the binary case. When there are multiple post-change hypotheses, the problem is referred to as multi-hypothesis QCD and has been investigated in \cite{nikiforov95,TV02}. A nice tutorial on QCD can be found in \cite{vvv_tutorial}. However, recent applications of cyber-physical systems (CPS) \cite{wurm2016introduction} typically involve multiple distributed sensors monitoring the event and reporting their observations to the fusion center via bandwidth-limited links. For example, the abnormal changes of voltage waveforms in smart grids are harmful to delicate electronic devices and recent advances of massive machine-type communications (mMTC) or internet of things (IoT) \cite{IoT17} allow the usage of advanced cyber-physical infrastructures for monitoring voltage quality events distributively \cite{AMI_PQ}. Moreover, some sensors, whose identities are unknown to the fusion center, may be compromised and may try to sabotage the detection task. Motivated by these applications, this paper considers the decentralized version of QCD, where a fusion center monitors the event through distributed sensors, with compromised sensors collaboratively forming attack. This problem has been studied in \cite{bayraktar15, BFL16} and is called Byzantine distributed QCD (BDQCD).

In \cite{bayraktar15}, a special case of binary BDQCD, only one compromised sensor is considered. A decision rule called second-alarm rule, where the fusion center declares the occurrence of the event once it receives the second local report from sensors, is proposed and its asymptotic performance is analyzed. In \cite{BFL16}, the general binary BDQCD problem with infinite-bandwidth links and that with 1-bit links are investigated. Multiple rules are proposed and their corresponding asymptotic performance are analyzed. In the presence of infinite-bandwidth links, the low-sum CUSUM scheme proposed in \cite{BFL16} achieves the best asymptotic performance among the schemes in \cite{BFL16}. Among the rules with 1-bit links proposed in \cite{BFL16}, the voting rule that declares the occurrence of the event after the number of received local reports exceeds a certain threshold has the best first-order asymptotic performance. When the threshold is set to be the total number of honest sensors, the asymptotic performance of the voting rule, called the consensus rule in this special case, reaches its maximum and also attains the best asymptotic performance in \cite{BFL16}. Two questions naturally arise from this premise:
\begin{enumerate}
  \item What is the fundamental limit of the first-order asymptotic performance of binary BDQCD?
  \item How to handle this problem when there are multiple post-change hypotheses?
\end{enumerate}

For the first question, despite the exciting results in \cite{bayraktar15,BFL16}, it is thus far unclear what the best first-order asymptotic performance of binary BDQCD is. Although one non-trivial converse can be easily obtained by assuming that a genie reveals to the fusion center the identities of all honest sensors (this simple converse will be presented in Section~\ref{sec:prior}), this converse bound and the best achievable asymptotic performance in \cite{BFL16} do not match. This indicates that either the best achievable scheme thus far is not optimal or the converse is not tight, or both.

The first contribution of this work is to prove a new converse to the first-order asymptotic performance for binary BDQCD. In the proof, we first construct an attack strategy for the compromised sensors and then construct a genie who reveals just enough information to the fusion center. After that, inspired by the proof technique in \cite{IHsiang18}, we transform the original problem into a centralized QCD problem. Last, evaluating the corresponding optimal CUSUM procedure establishes the new converse. The converse turns out to coincide with the best achievable first-order asymptotic performance known to date; thereby, the fundamental limit of the first-order asymptotic performance of binary BDQCD is characterized. Specifically, our converse confirms that both the consensus rule (using only 1-bit links) and the low-sum-CUSUM rule (using infinite-bit links) in \cite{BFL16} achieve the optimal first-order scaling. The first optimality unveils an important implication that, at least asymptotically, {\it 1-bit links suffice even with compromised sensors}. As a byproduct of our proof, we explicitly construct an attack strategy, called the reverse attack, where each compromised sensor generates fake i.i.d. observations according to post-change and pre-change distributions before and after the change time (i.e., with pre-change and post-change distributions swapped), and form local reports based on these fake observations. In spite of abandoning potential cooperation among compromised sensors, this reverse attack turns out to be strong enough for us to prove a tight lower bound on the asymptotic performance of BDQCD\footnote{Throughout the paper, such an attack strategy is said to be an asymptotically worst (to the fusion center) attack, or simply a worst attack.}. We note that although our converse is inspired from \cite{IHsiang18}, the one-shot hypothesis testing problem studied in \cite{IHsiang18} is fundamentally different to our sequential change detection that deals with observation sequences having an unknown change time. More detailed comparisons with \cite{IHsiang18} are provided at the end of Section \ref{subsec:proof_converse_binary}.

Our second contribution is to tackle the second question listed above and extend the framework of BDQCD to the multi-hypothesis setting. To this end, we formulate the multi-hypothesis version of the BDQCD problem. We then demonstrate that blindly adopting the existing procedure in \cite{TV02}\cite{BFL16}\cite{BF16} may result in catastrophic events. Two novel stopping rules are proposed and analyzed, which requires $\log(Q)$-bit and $Q$-bit noiseless links, respectively, for BDQCD with $Q+1$ hypotheses. In our delay analysis, we prove an asymptotic dominance result, which confirms the intuition that although there are multiple hypotheses, for each one being considered, we only have to examine the statistics of another hypothesis that is closest (in the sense of Kullback–Leibler (KL) divergence) to the hypothesis being considered. The converse for the binary case is also extended to the multi-hypothesis setting and it is shown that proposed stopping rules can achieve the optimal first-order asymptotic performance under different bandwidth constraints; therefore, the asymptotically optimal performance of multi-hypothesis BDQCD is again characterized.

Last but not least, we formulate a leader-follower Stackelberg game \cite{basar1999dynamic} where the fusion center and honest sensors form the leader while the compromised sensors form the follower. The first-order optimality mentioned above yields the game solution, where the leader adopts the aforementioned asymptotically optimal stopping rule and the follower employs the corresponding worst attack.

\subsection{Organization}
The rest of the paper is organized as follows. We will separately introduce the problem of binary BDQCD and multi-hypothesis BDQCD in Sections~\ref{subsec:B_BDQCD} and \ref{subsec:M_BDQCD}, respectively, and review the current state-of-the-art in Section~\ref{sec:prior}. We will then split our discussion into two parts, namely the binary BDQCD in Section~\ref{sec:B_BDQCD} and multi-hypothesis BDQCD in Section \ref{sec:mul_BDQCD}, even though the former is a special case of the latter. The main reasons are: 1) binary BDQCD is of substantial interest in its own right and has been a subject of research in the literature \cite{bayraktar15,BFL16}, 2) due to the nature of having multiple post-change hypotheses, one has to consider a sequence of stopping times, which is in sharp contrast to binary BDQCD where only one stopping time is considered, 3) our proofs for tight converse bounds in Sections~\ref{sec:B_BDQCD} and \ref{sec:mul_BDQCD} are quite involved with heavy notations and it is better to start with the binary case in Section~\ref{sec:B_BDQCD} and mention only notable differences later in Section~\ref{sec:mul_BDQCD}, and 4) for achievability, we borrow existing results in \cite{BFL16} for the binary BDQCD, while we devise new efficient stopping rules for the multi-hypothesis BDQCD whose first-order asymptotic performance coincides with our converse bound. Finally, a leader-follower Stackelberg game and its solution are then presented in Section~\ref{sec:game}.

\subsection{Notational conventions}
For a positive integer $K$, define $[K]:=\{1,\ldots,K\}$ and $[K]^+ = \{0\}\cup[K]$. Function $(x)^+$ outputs $x$ if $x \geq 0$ and zero otherwise. For two real functions $f_1(x)$ and $f_2(x)$, as $x\rightarrow \infty$, we write $f_1(x)\sim f_2(x)$ when $f_1(x)/f_2(x)\rightarrow 1$ and $f_1(x)\gtrsim f_2(x)$ when $\liminf (f_1(x)/f_2(x)) \geq 1$. The $o(.)$ and $w(.)$ follow the asymptotic notations in \cite{cormen2009introduction}. 

\section{Problem formulation}\label{sec:problem}
In this section, we formally state the problem of BDQCD. We will first describe binary BDQCD and then formulate the generalization to the multi-hypothesis case.

\subsection{Binary BDQCD}\label{subsec:B_BDQCD}
The binary BDQCD problem consists of a fusion center and $K$ sensors indexed by $[K]$. Among these sensors, there is an \textit{unknown} subset $\cN\subset [K]$ of honest sensors, with the remaining $M:= K-|\cN|$ sensors being potentially compromised. The goal of the honest sensors is to monitor an event and help the fusion center decide whether the event has occurred, while the goal of compromised sensors is to collaboratively confuse the fusion center. Although the exact information about which sensors are honest and which sensors are compromised is unknown, we assume that $M$, the maximum number of sensors the attacker can compromise, is known by the fusion center. Moreover, it is assumed that there are more honest sensors than compromised sensors, i.e., $|\cN| > M$. The observations of all $K$ sensors are sequences of independent random variables with known distributions, subject to the same distribution change at an unknown but deterministic time $\nu$. 
Before the change time $\nu$, sensor $k$'s observations $X^k_1, X^k_2, \ldots, X^k_{\nu}$ are independent and identically distributed (i.i.d.) with the density $P_0$, while $X^k_{\nu+1},X^k_{\nu+2}, \ldots$ are i.i.d. with the density $P_1$. If the change never happens, i.e., $\nu=\infty$, $X^k_t$ are i.i.d. with $P_0$ for all $t$. We denote by $\mathbf{X}_t=[X^1_t,X^2_t,\ldots,X^K_t]$ the collection of observations at time $t$ and we use the notation $\mathbf{X}_{t_1}^{t_2}$ for $t_1<t_2$ to denote the collection $[\mathbf{X}_{t_1},\mathbf{X}_{t_1+1},\ldots,\mathbf{X}_{t_2}]$. Also, we define the KL divergence from $P_0$ to $P_1$ as \cite{CT91} to be $I := \int \log\left(\frac{P_1(x)}{P_0(x)}\right) P_1(x) dx$. Throughout the paper, we assume that $I$ is finite and strictly positive and
\begin{equation}
    \int \log\left(P_1(x)/P_0(x)\right)^2 P_1(x) dx < \infty.
\end{equation}

All the local reports from honest or compromised sensors belong to the set $\mathcal{X}$, which satisfies the underlying bandwidth constraint on the noiseless link between each sensor and the fusion center. It is worth emphasizing that this setting encompasses many scenarios discussed in existing works including $\mathcal{X}=\{0,1\}$ and $\mathcal{X}=\mathbb{R}$ in \cite{BFL16} and $\mathcal{X}$ being a set of finite alphabets in \cite{bounds}. At each time index $t$, the honest sensor $k$ individually makes a local decision by mapping its own observations up to time $t$ to an element in $\mathcal{X}$, and then chooses to report it or not according to the adopted reporting mechanism. Based on the received local reports from all sensors, the fusion center adopts a stopping rule to determine when to declare that the event has occurred. A change detection rule includes such a stopping rule and local rules at honest sensors. The $M$ compromised sensors, on the other hand, try to disrupt/confuse the fusion center by sending attack signals in $\mathcal{X}$. We assume a very powerful attacker that knows the exact change-time $\nu$ and has access to the current and past observations of all nodes. The symbols sent by the compromised sensors at time $t$ are then produced by $g$, a function (called an attack strategy) with inputs $\nu$, $\mathbf{X}_{1}^{t}$, and the change detection rule. We denote by $\cG$ the set of all attack strategies including all possible $g$ with no more than $M$ compromised sensors. Following \cite{BFL16}, we analyze the performance of a rule by its worst-case expected detection delay and mean time to a false alarm in the sense of Lorden \cite{lorden71}, under the worst attack strategy among $\cG$. Specifically, let $T$ be the stopping time of a rule, we define the performance metrics as follows.
\begin{itemize}
  \item \textbf{Detection Delay:} The worst-case mean detection delay
\begin{equation}\label{eq:delay_binary}
    \cD[T]:= \sup_{g\in\cG,\nu} \esssup \E^{g}_{\nu}[ (T-\nu)^+ |\mathbf{X}_1^{\nu}],
\end{equation}
where $\E^{g}_{\nu}[.]$ means the expectation is taken w.r.t. $P_0$ when $t\leq\nu$ and w.r.t. $P_1$ when $t>\nu$ under the attack strategy $g\in\cG$.

  \item \textbf{False Alarm:} Without any abnormal changes (i.e. $\nu=\infty$), the worst-case mean time to a false alarm is
\begin{equation}\label{eq:falseAlarmTime_multiple}
  \A[T]:= \inf_{g\in\cG} \E^{g}_{\infty}[ T],
\end{equation}
where $\E^{g}_{\infty}[.]$ means the expectation is w.r.t. $P_0$ for all $t$ (i.e., $\nu=\infty$) under the attack strategy $g\in\cG$.

\end{itemize}
The main theme of this paper is to investigate the optimal asymptotic behavior of how the expected detection delay scales with the mean time to a false alarm in the worst case. Specifically, for an optimal BDQCD rule with stopping time $T$ satisfying $\A[T] \geq \gamma$, we want to characterize how $\cD[T]$ grows with $\gamma$ as $\gamma\rightarrow\infty$.

\begin{remark}
    We would like to emphasize that our setting is slightly different from that in \cite{BFL16}. In \cite{BFL16}, among honest sensors, some are affected and some are unaffected by the change. Similar scenarios with no Byzantine attack are also considered in the literature, with exactly one unknown sensor \cite{Tarta05,tarta_vvv_book04} or a subset of all sensors \cite{Mei10,XieSiegmund13} being affected. For the unaffected, their observations are sampled i.i.d. from $P_0$ even after the change. In this paper, we do not consider unaffected sensors purely to avoid heavy notation. With slight modifications, our results can be easily extended to include unaffected sensors. This statement remains true even for the multi-hypothesis setting discussed later.
\end{remark}

\subsection{Multi-Hypothesis BDQCD}\label{subsec:M_BDQCD}
For the multi-hypothesis version of BDQCD, we again consider a network with a fusion center and $K$ distributed sensors. There is an \textit{unknown} subset $\cN\subset [K]$ of honest sensors, with the remaining $M:= K-|\cN|$ sensors being compromised. The fusion center tries to monitor an abrupt event and decide whether the event has occurred regardless of which type it is\footnote{This is aligned with \cite{nikiforov95, TV02} . In many applications, once a change has been detected, the operator can respond to it quickly and find out which type it is.}. The observations of all $K$ sensors are sequences of independent random variables with known distributions, subject to the same distribution change at an unknown but deterministic time $\nu$. After this distribution change, there are $Q$ different possible types. Specifically, let $P_0$ be the pre-change probability density function (PDF) and $P_1, \ldots, P_Q$ the post-change PDF corresponding to the states $1,\ldots,Q$, respectively. For each $k\in [K]$, we denote by $X^k_t$ the observation made by sensor $k$ at time $t$. We can now define $Q+1$ different hypotheses as follows. Under the hypothesis $H_q$, $q \in [Q]$, the random variables $X^k_1, X^k_2, \ldots, X^k_{\nu}$ are i.i.d. with the PDF $P_0$, while $X^k_{\nu+1},X^k_{\nu+2}, \ldots$ are i.i.d. with the PDF $P_q$. Under the hypothesis $H_0$, $X^k_t$ are i.i.d. with the PDF $P_0$ for all $t$. We write $\mathbf{X}_t=[X^1_t,\ldots,X^K_t]$ for each $t$ and denote by $\mathbf{X}_{t_1}^{t_2}$ the collection of $\mathbf{X}_{t_1},\mathbf{X}_{t_1+1},\ldots, \mathbf{X}_{t_2}$ for each $t_1,t_2$ with $t_2>t_1$.

As the binary case, there is a noiseless link of a finite or infinite number of bits associated with each sensor to the fusion center. At each time $t$, an honest sensor $k$ makes a local decision individually by mapping its own observations up to time $t$ to an element in $\mathcal{X}$ satisfying the bandwidth constraint. Depending on the adopted reporting mechanism and the bandwidth constraint, each sensor decides whether it should alarm the fusion center through the channel it is associated with. 
The $M$ compromised sensors, on the other hand, try to disrupt/confuse the final decision of fusion center by sending attack signals which again belong to $\mathcal{X}$.

As \cite{nikiforov95, TV02}, let us define the sequence of alarm times $0=T_0<T_1<T_2<\ldots<T_{\rho}<\ldots,$ where $T_{\rho}$ is the alarm time using sensor observations after previous alarm time $T_{\rho-1}$, that is, $\mathbf{X}_{T_{\rho-1}+1}, \mathbf{X}_{T_{\rho-1}+2}, \ldots$; then the stopping time for type $q\in[Q]$ is defined as
\begin{equation}\label{eq:stopping_rule_multi}
    T^{q} = \inf_{\rho\geq 1}\{T_{\rho}:\hat{q}_{\rho}=q\},
\end{equation}
where $\hat{q}_{\rho}$ is the decision at the fusion center declared at time $T_{\rho}$. Here, we use the convention $\inf\{\emptyset\}=\infty$ and it is possible that $T^{q}=\infty$, which corresponds to the case when the fusion center never declares change of type $q$. With a little abuse of notation, when only the first alarm time $T_1$ of a rule $T$ matters, we sometimes simply write $T_1$ as $T$. Let $g$ be an attack strategy of the $M$ compromised sensors. We assume that the attacker knows $\nu$, $\mathbf{X}_1^t$, and the global decision rule (including both the stopping rule at the fusion center and local rule at each sensor), and hence $g$ is a function of these arguments. We also write $g=\emptyset$ when all the compromised sensors are absent. When a change under hypothesis $H_q$, $q\in[Q]$, happens at time $\nu$ and the strategy employed by the $M$ compromised sensors is $g$, the underlying probability measure is denoted by $\Pqgv$. Moreover, when no change ever happens, i.e., $\nu=\infty$, we denote by $\Pginf$ the underlying probability measure.

Following the single-sensor case \cite{nikiforov95, TV02}, we define the performance metrics as follows:
\begin{itemize}
  \item \textbf{Detection Delay:} The worst-case mean detection delay is given by
\begin{equation}\label{eq:ess_delay_multiple}
    \cD[T] :=\sup_{q \in [Q]} \sup_{g,\nu}\esssup \Eqgv[ (T-\nu)^+|\mathbf{X}_1^{\nu}],
\end{equation}

  \item \textcolor{black}{\textbf{False Alarm or False Isolation:} The worst-case mean time to a false alarm or a false isolation is given by
\begin{equation}\label{eq:falseIsolationTime_multiple}
  \A[T] := \inf_{q \in [Q]^+} \inf_{g} \inf_{\hat{q}\in[Q]\setminus\{q\}}\Eqg[ T^{\hat{q}} ].
\end{equation}}
\end{itemize}
Our objective is again to design fault-tolerant decision rules such that $\cD[T]$ can be minimized, with large $\A[T] \geq \gamma$.

\begin{remark}
The detection delay defined in \cite{nikiforov95, TV02} for the scenario with one single honest sensor is given by
\begin{equation}
    \sup_{q \in [Q]} \sup_{\nu}\esssup \Eqv[ T-\nu|T>\nu,X^1_1,\ldots,X^1_{\nu}],
\end{equation}
where $X^1_1,\ldots,X^1_{\nu}$ are the observations of the (single and honest) first sensor up to time $\nu$. One may notice that \eqref{eq:ess_delay_multiple} is not quite of the same form as the above definition. In Lemma~\ref{lma:equivalence} in Appendix~\ref{apx:lemma}, we prove that even with multiple honest and compromised sensors, these two forms are equivalent and one is free to work with either of them.
\end{remark}

Before leaving this section, we quickly review the asymptotically optimal matrix CUSUM algorithm in \cite{TV02} when there is just a single honest sensor $|\cN|=1, M=0,$ and $Q \geq 1$. For each hypothesis $q\in[Q]$, this honest sensor (with index $k=1$)  computes the CUSUM statistics $Y^k_t(q,j)$ for every $j\neq q\in[Q]^+$ at time $t$, recursively through $Y^k_0(q,j)=0$ and
\begin{equation}\label{eq_CUSUM_element}
Y^k_t(q,j)=\left(Y^k_{t-1}(q,j) + \ell^k_t(q,j)\right)^+,
\end{equation}
where $\ell^k_t(q,j)= \log \frac{P_{q}(X^k_t)}{P_{j} (X^k_t)}$ is the log-likelihood ratio (LLR) between $P_q$ and $P_j$. The results are put into a $Q\times Q$ matrix $\mathbf{Y}_t$ with the $q$th row given by
\begin{eqnarray} \label{eq_matrix_CUSUM_matirx}
  \mathbf{Y}^k_t:=[Y^k_t(q,0),\cdots,Y^k_t(q,j),\cdots,Y^k_t(q,Q)].
\end{eqnarray}
Let $Y^k_{t,q} = \min_{j\in [Q]^+, \; j \neq q} Y^k_t(q,j)$ be the minimum of the $q$th row. The matrix CUSUM procedure in \cite{TV02} locally determines that the event has occurred at the first time that any $Y^k_{t,q}, q \in [Q]$ exceeds a pre-defined threshold $h$. A hard decision is then alarmed, which means that the procedure terminates after this alarm and no other decisions will be further made.

\section{Prior work}\label{sec:prior}
In this section, we review some prior results directly relevant to the present work. We again split our discussion into the binary and multi-hypothesis cases.

\subsection{Binary BDQCD}\label{subsec:binary_prior}
For the considered binary BDQCD problem with $|\cN|=1$ and $M=0$ (i.e., no compromised sensor and thereby no Byzantine attack), the problem reduces to the standard QCD problem for which it was shown in \cite{lorden71,moustakides86} that Page's CUSUM procedure $T_{\textrm{single}}$ \cite{page54} achieves the optimal scaling that for $\A[T_{\textrm{single}}]=\gamma$, the expected detection delay scales like $\cD[T_{\textrm{single}}]\sim \log(\gamma) / I$ as $\gamma\rightarrow \infty$. For $M=0$ and general $|\cN|$, Mei in \cite{bounds} developed a scheme $T_{\textrm{consensus}}$, called the consensus rule, where each sensor performs CUSUM according to its local observations and sends a binary report to the fusion center, which declares the occurrence of the event when all the $|\cN|$ sensors simultaneously say so. It was then shown in \cite{bounds} that this scheme is asymptotically optimal that under $\A[T_{\textrm{consensus}}]=\gamma$, the expected detection delay scales like
\begin{equation}
    \cD[T_{\textrm{consensus}}]\sim \frac{\log(\gamma)}{|\cN|I}, \quad\text{as $\gamma\rightarrow \infty$.}
\end{equation}

In \cite{BF16}, Banerjee and Fellouris proposed two families of stopping rules for the same $M=0$ and general $|\cN|$ case. In the first family of stopping rules, which we refer to as the one-shot $d$-th alarm, each sensor performs the CUSUM procedure locally and only reports an alarm once at the first time the local CUSUM statistic exceeds a predefined threshold; the fusion center then stops and declares the event as soon as receiving $d\le |\cN|$ reports. In the second family of stopping rules, which is referred to as the $d$-voting rule, each sensor again performs the CUSUM procedure locally but gets to report multiple times whenever the local CUSUM statistic exceeds the threshold. The fusion center then stops and declares the event as soon as receiving $d\le|\cN|$ reports \textit{simultaneously}. The authors of \cite{BF16} analyzed the second-order asymptotic performance and the results revealed that even though it was shown in \cite{bounds} that the $d$-voting rule with $d=|\cN|$ (i.e., the consensus rule) achieves the first-order asymptotic performance, it might be better in practice for it to wait for only the majority of sensors' reports, i.e., setting $d=\lceil (|\cN|+1)/2 \rceil$.

Very recently, in \cite{BFL16}, binary BDQCD with general $|\cN|$ and $M$ was discussed and multiple schemes were analyzed. Among these schemes, the $d$-voting rule $\tau_{(d)}$, achieves the best scaling when $d=|\cN|$ is chosen\footnote{This is also called the consensus rule in \cite{BFL16}. But it is noted that here, we only wait for $|\cN|$, the number of honest sensors, responses rather than all $K$ responses.}. Specifically, it was shown in \cite{BFL16} that the following asymptotic performance can be achieved: 
\begin{theorem}[{\cite[Theorem 26]{BFL16}}]\label{thm:achieve}
    Let $d$ be an integer satisfying $M < d\leq |\cN|$. For $\A[\tau_{(d)}]=\gamma$, the worst-case mean detection delay of the $d$-voting rule scales like
    \begin{equation}\label{eqn:opt_scale}
        \cD[\tau_{(d)}]\sim \frac{\log\gamma}{(d-M)I}, \quad\text{ as $\gamma\rightarrow \infty$.}
    \end{equation}
\end{theorem}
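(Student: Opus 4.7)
The plan is to decompose the analysis of $\tau_{(d)}$ into two worst-case subproblems and, for each one, identify the attacker's optimal non-adaptive strategy. First I would argue that to minimize $\A[\tau_{(d)}]$ the attacker's best move is the \emph{always-alarm} attack, in which every one of the $M$ compromised sensors transmits the alarm symbol at every time step: since no change ever occurs, any extra alarm from a compromised sensor can only hasten the $d$-voting rule. Symmetrically, to maximize $\cD[\tau_{(d)}]$ the attacker's best move is the \emph{silent} attack, in which no compromised sensor ever alarms, because any alarm emitted by a compromised sensor only reduces the number of honest alarms needed to reach the count $d$. Each of these two reductions can be made rigorous by a short monotonicity argument in the count of received alarms at each time step.

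Under the always-alarm attack, $\tau_{(d)}$ equals the first time at which at least $d - M$ of the $|\cN|$ mutually independent honest local CUSUM statistics $\{S^k_t : k \in \cN\}$ simultaneously exceed $h$. I would then combine (i) the classical Siegmund-type estimate that the CUSUM of $P_0$-distributed observations has stationary exceedance probability $\P(S^k_\infty \geq h) \asymp e^{-h}$ as $h \to \infty$, (ii) independence across $k \in \cN$, giving a joint stationary probability of $d - M$ simultaneous exceedances of order $e^{-(d-M)h}$, and (iii) a renewal-theoretic argument that converts this stationary probability into an expected first-time $\asymp e^{(d-M)h}$ for a simultaneous $(d - M)$-exceedance. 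Setting this mean time equal to $\gamma$ calibrates the threshold as $h \sim (\log \gamma)/(d-M)$.

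With $h$ fixed by this calibration, under the silent attack and after the change at $\nu$, each honest CUSUM drifts upward at rate $I$, so by the standard CUSUM delay analysis (SLLN plus uniform integrability of the overshoot) the first-passage time $\tau^k_h$ satisfies $\esssup \E^{g}_\nu[(\tau^k_h - \nu)^+ \mid \bX_1^\nu] = h/I + o(h)$ uniformly in $\nu$ and $k \in \cN$. Since each honest CUSUM, once past $h$, stays above $h$ with probability tending to one as $h \to \infty$, the $|\cN| \geq d$ honest alarms overlap in time within $h/I + o(h)$ steps, yielding $\cD[\tau_{(d)}] \leq h/I + o(h)$. The matching lower bound is immediate because the silent attack itself forces $\cD[\tau_{(d)}] \geq h/I - o(h)$ through the same first-passage asymptotics applied to any one honest sensor. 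Combining with the calibration gives $\cD[\tau_{(d)}] \sim h/I \sim (\log \gamma)/((d-M)I)$.

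The principal obstacle is step (iii) of the false-alarm analysis: verifying that the effective exponent governing the mean inter-arrival time of $(d-M)$-fold simultaneous exceedances is exactly $d - M$, rather than some strictly smaller value. This requires controlling both the across-time dependence induced by the renewal structure of each honest CUSUM's above-threshold excursions and the across-subset indeterminacy coming from the fact that neither the fusion center nor the attacker knows which subset $\cN$ is honest, so any $d - M$ among $|\cN|$ honest sensors may supply the simultaneous alarm. Handling these dependencies rigorously -- for instance by coupling the system to a product of independent alternating renewal processes -- is the technically delicate part that pins down the precise exponent $d - M$.
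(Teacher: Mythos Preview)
This theorem is not proved in the paper; it is quoted verbatim from \cite[Theorem~26]{BFL16}. The paper does, however, prove the $(Q+1)$-hypothesis generalization (Theorems~\ref{CoroSimu_Qarydelay} and~\ref{CoroSimu_QaryFalse} together with Proposition~\ref{Thm_multi_finite_h}), which specializes to the present statement when $Q=1$. Your decomposition into the two extremal attacks---always-alarm for $\A$ and silent for $\cD$---matches exactly what the paper establishes in Proposition~\ref{Thm_multi_finite_h}, so the reduction step is on target.

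The substantive methodological difference is in the false-alarm half. You propose a renewal-theoretic route: stationary exceedance probability $\asymp e^{-h}$, independence across sensors, then a renewal argument to convert the joint stationary probability into a mean first-passage time, and you correctly flag step~(iii) as the delicate part. The paper (Appendix~\ref{apx:proof_false}) bypasses this entirely with an elementary layer-cake plus union-bound computation: it writes $\Eq[S^{\hat q}_{d-M}(h)]=\sum_{t\ge 0}\Pq(S^{\hat q}_{d-M}(h)>t)$, bounds $\Pq(S^{\hat q}_{d-M}(h)\le t)$ by $\binom{|\cN|}{d-M}\,t\,e^{-(d-M)h}$ via a union bound over time steps and over $(d-M)$-subsets of $\cN$, and then integrates. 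No coupling to alternating renewal processes is needed, and the exponent $d-M$ falls out immediately. Your approach would also work, but the union bound is both shorter and directly resolves the obstacle you identify.

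One small gap: to obtain the two-sided $\sim$ in \eqref{eqn:opt_scale} you need, in addition to the lower bound $\A[\tau_{(d)}]\gtrsim e^{(d-M)h}$, a matching upper bound $\A[\tau_{(d)}]\lesssim e^{(d-M)h}$ (up to polynomial factors) under the always-alarm attack; otherwise the calibration $h\sim(\log\gamma)/(d-M)$ is only one-sided. Your outline gives $\cD\sim h/I$ but only $\A\gtrsim e^{(d-M)h}$, which yields $\cD\lesssim(\log\gamma)/((d-M)I)$ rather than $\sim$. The missing direction is handled in \cite{BFL16} and is not reproduced in the present paper.
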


The best asymptotic performance reported in \cite{BFL16} is the above one with $d=|\cN|$ (i.e., the consensus rule), which also coincides with another scheme in \cite{BFL16}, low-sum-CUSUM that requires infinite bandwidth. This leads us to conjecture that \eqref{eqn:opt_scale}, with $d=|\cN|$, is the optimal first-order behavior. A tight converse is then necessary to verify this conjecture.

We would like to point out that a non-trivial converse can be obtained by revealing the identities of all $|\cN|$ honest sensors and using the asymptotic optimality in \cite{bounds}, as detailed below.
\begin{theorem}[Simple converse]\label{thm:simple_converse}
    For any binary BCQCD rule $T$, with $\A[T] \geq \gamma$, the worst-case mean detection delay meets
    \begin{equation}
        \mathcal{D}[T]\gtrsim \frac{\log\gamma}{|\cN|I}, \quad\text{as $\gamma\rightarrow\infty$.}
    \end{equation}
\end{theorem}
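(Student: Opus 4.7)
The plan is to reduce the BDQCD problem to the classical (non-adversarial) distributed QCD problem with $|\cN|$ honest sensors by restricting the attacker to a single, benign strategy, and then to invoke the matching converse that is already available for that simpler problem via \cite{bounds}.

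Fix any BDQCD rule $T$ with $\A[T]\geq\gamma$. Since $\A[\cdot]$ is an infimum and $\cD[\cdot]$ a supremum over attack strategies $g\in\cG$, the specific choice $g=\emptyset$, in which every compromised sensor remains silent, yields
\[
\E^{g=\emptyset}_\infty[T]\;\geq\; \A[T]\;\geq\;\gamma,\qquad \cD[T]\;\geq\; \sup_\nu\esssup\E^{g=\emptyset}_\nu\!\left[(T-\nu)^+\mid\bX_1^\nu\right].
\]
Under $g=\emptyset$ the fusion center receives no report from the compromised sensors, so $T$ is measurable with respect to the reports generated by the honest sensors alone, which in turn are functions of $\{X_t^k:k\in\cN\}$. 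Hence, viewed under the law with $g=\emptyset$, $T$ is a valid stopping rule for a standard distributed QCD problem with $|\cN|$ honest sensors, pre-change density $P_0$, and post-change density $P_1$.

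The asymptotic lower bound accompanying the optimality result of \cite{bounds} states that, for any stopping time $\tau$ on this standard distributed QCD problem with $\E_\infty[\tau]\geq\gamma$, the worst-case delay satisfies
\[
\sup_\nu \esssup \E_\nu\!\left[(\tau-\nu)^+\mid\bX_1^\nu\right] \;\gtrsim\; \frac{\log\gamma}{|\cN| I},\quad\text{as }\gamma\to\infty.
\]
Specializing $\tau=T$ and chaining with the two displayed inequalities above delivers $\cD[T]\gtrsim \log\gamma/(|\cN| I)$, which is the claim.

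The only point requiring a bit of care is a measurability check: under $g=\emptyset$, $T$ does not depend on the compromised sensors' observations, so the essential supremum conditioned on the full vector $\bX_1^\nu$ coincides with the essential supremum conditioned only on the honest coordinates $\{X_t^k:k\in\cN,\,t\leq\nu\}$. This equivalence is what lets us identify the BDQCD delay restricted to $g=\emptyset$ with the Lorden-style delay metric of the honest-only distributed QCD problem, and thereby import the converse of \cite{bounds} verbatim; no genuinely new estimate is needed. This is also why the resulting bound, although honest-sensor optimal, is not expected to be tight in the presence of a nontrivial attacker, and motivates the sharper converse developed later in Section~\ref{sec:B_BDQCD}.
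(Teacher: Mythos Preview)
Your proposal is correct and matches the paper's approach: both reduce to the non-Byzantine $|\cN|$-sensor QCD problem and invoke the converse from \cite{bounds}. The only cosmetic difference is that the paper phrases the reduction via a genie that reveals the identities of all honest sensors (so the fusion center can simply discard compromised reports), whereas you achieve the same effect by restricting to the single attack $g=\emptyset$ and using that $\A$ is an infimum and $\cD$ a supremum over $g$; either mechanism renders the compromised sensors irrelevant and lands on the same $|\cN|$-sensor lower bound.
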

\noindent Unfortunately, this converse is not tight compared to \eqref{eqn:opt_scale}.

\subsection{Multi-hypothesis BDQCD}\label{subsec:multi_prior}
To the best of our knowledge, the present work is the first to formulate and study the multi-hypothesis BDQCD. Prior to this work, the single honest sensor QCD problem with multiple hypothesis was first investigated in \cite{nikiforov95}, in which Nikiforov extended Lorden's framework to include multiple post-change hypotheses. Nikiforov in \cite{nikiforov95} also proposed an algorithm based on the concept of generalized likelihood ratio and showed the asymptotic optimality of this algorithm. In \cite{TV02}, by cleverly switching the order of $\max$ and $\min$ in the algorithm in \cite{nikiforov95}, Oskiper and Poor developed the matrix CUSUM algorithm that admits a recursive formula and hence can be efficiently implemented. Moreover, it was shown that, in addition to its low complexity, the matrix CUSUM procedure is also asymptotically optimal.   

\section{Binary BDQCD}\label{sec:B_BDQCD}
We consider the problem of binary DBQCD in this section. We first present the main result, that is, a tight converse to the first-order asymptotic performance of the worst-case detection delay, in Section~\ref{subsec:result_binary}. The proof of the main result is then given in Section~\ref{subsec:proof_converse_binary}.

\subsection{Main results of this section}\label{subsec:result_binary}
Here, we present the main result of this section, which is a new converse of the first-order asymptotic performance of BDQCD.
\begin{theorem}[Tight converse]\label{thm:new_converse}
    For any binary BDQCD rule $T$, with $\A[T] \geq \gamma$, the worst-case mean detection delay is lower bounded as
    \begin{equation} \label{eq:new_converse}
        \mathcal{D}[T]\gtrsim \frac{\log\gamma}{(|\cN|-M)I}, \quad\text{as $\gamma\rightarrow\infty$}.
    \end{equation}
\end{theorem}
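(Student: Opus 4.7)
The plan is to prove the lower bound by (a) fixing a specific attacker strategy, namely the reverse attack $g_R\in\cG$ flagged in the introduction; (b) strengthening the fusion center with an appropriately chosen genie so that the resulting problem collapses onto a centralized quickest change detection problem with effective per-step Kullback--Leibler divergence $(|\cN|-M)I$; and (c) invoking the classical Lorden--Moustakides converse for centralized QCD. Because $\cD[T]$ is a supremum over $\cG$, it suffices to lower-bound the delay under a single explicit attack, and since extra side information can only shrink the fusion center's delay, a lower bound for the genie-aided fusion center transfers back to $\cD[T]$.

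I would formalize the reverse attack as follows: each compromised sensor $c$ maintains a private stream of fake i.i.d.\ observations $\tilde{X}^c_t$ with $\tilde{X}^c_t\sim P_1$ for $t\leq\nu$ and $\tilde{X}^c_t\sim P_0$ for $t>\nu$ (the pre- and post-change densities swapped), and produces its reports by running the prescribed honest local rule on these fake observations. The false-alarm constraint $\A[T]\geq\gamma$ then forces $\E^{g_R}_\infty[T]\geq\gamma$, and $\cD[T]\geq \sup_{\nu}\esssup \E^{g_R}_{\nu}[(T-\nu)^+\mid\mathbf{X}_1^\nu]$, so the entire remaining argument only needs to be made under $g_R$.

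The heart of the proof is the genie-aided reduction. Following the transformation idea of \cite{IHsiang18}, I would pair each of the $M$ compromised sensors with one distinguished honest sensor, call the remaining $|\cN|-M$ honest sensors the unpaired set $\cU$, and have the genie reveal to the fusion center both the pairing itself and the identities of the sensors in $\cU$, but \emph{not} which sensor within each pair is the compromised one. Under the reverse attack, within-pair roles are statistically indistinguishable from the fusion center's vantage point: in each pair one sensor is always drawn from $P_0$ and the other from $P_1$ before the change, and this assignment merely swaps after the change. A likelihood computation should then show that, up to $o(\log\gamma)$ terms, the paired observations do not contribute to the Kullback--Leibler rate between pre- and post-change distributions; only the $|\cU|=|\cN|-M$ unpaired honest sensors contribute, each at rate $I$. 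The centralized QCD that emerges therefore has per-step KL divergence $(|\cN|-M)I$, and the Lorden--Moustakides converse (the same result underlying \thmref{thm:simple_converse}) delivers the desired $\log\gamma/((|\cN|-M)I)(1+o(1))$ asymptotic lower bound.

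The main obstacle I anticipate is in justifying that the paired observations contribute nothing asymptotically. Marginal cancellation at each fixed time $t$ is not by itself enough, because the attacker's within-pair assignment is frozen across the entire time horizon, so pair observations carry cross-time correlations that a clever fusion center could in principle try to exploit. One must verify via an explicit coupling between the two within-pair configurations, or via a direct mixture log-likelihood computation, that these correlations contribute only a sub-logarithmic term to the cumulative log-likelihood ratio and hence do not affect the first-order rate $(|\cN|-M)I$. This is where the technique of \cite{IHsiang18} is genuinely needed; propagating the coupling through the Lorden-style essential supremum over pre-change observation sequences, rather than a one-shot likelihood ratio test, is what makes the argument nontrivial relative to the hypothesis-testing setting treated there.
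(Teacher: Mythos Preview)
Your proposal follows the same skeleton as the paper: fix the reverse attack, introduce a genie that reveals the identities of $|\cN|-M$ honest sensors, reduce via a symmetry argument to a centralized QCD, and invoke the Lorden--Moustakides converse. Two points of comparison are worth making. First, your pairing genie (revealing which compromised sensor is partnered with which honest one) is slightly more generous than the paper's, which merely isolates a block of $2M$ sensors of unknown status without any pairing; both are valid for a converse and yield the same bound, and your version arguably makes the symmetrization simpler (swap within each pair rather than permute over all $\binom{2M}{M}$ assignments). Second, and more importantly, you anticipate having to show that the paired observations contribute only a sub-logarithmic term, but in fact the contribution is \emph{exactly zero}, and seeing this is what dissolves your cross-time correlation worry cleanly. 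The paper's route is: (i) any rule can be symmetrized over permutations of the $2M$ unknown-identity sensors without increasing delay or decreasing false-alarm time (Lemma~\ref{lem:general}); (ii) a symmetric rule depends only on the order statistics $\tau_{2M}(\mathbf{X}_t)$, whose law $\tilde{P}_\theta$ is independent of the hidden assignment $s$ and hence genuinely i.i.d.\ across $t$---so there is nothing for the fusion center to learn about $s$ over time; (iii) the log-likelihood ratio of $\tilde{P}_1$ versus $\tilde{P}_0$ factors, and under the reverse attack the $2M$-block factor cancels identically via the bijection $s\leftrightarrow\bar{s}$ that swaps honest and compromised labels, leaving precisely $(|\cN|-M)I$. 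In your pairing version the same mechanism works verbatim: after symmetrizing within each pair and passing to the within-pair order statistic, the ordered pair consists of one draw from $P_0$ and one from $P_1$ both before and after the change, so its KL contribution is identically zero---no coupling or sub-logarithmic estimate is needed.
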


A sketch of the proof of the new converse is outlined in Fig.~\ref{fig:proof} and the details are given in the next subsection. To prove this theorem, we first note that if the optimal asymptotic scaling is lower bounded by $\eta(\gamma)$ under an attack strategy, then it is also lower bounded by $\eta(\gamma)$ under the {\it worst} attack. We thus proceed by constructing an attack strategy in Section~\ref{sec:rev_attack}, called the {\it reverse attack}, which is later shown to be an asymptotically worst attack. We then, in Section~\ref{sec:genie}, construct a genie providing the identities of $|\cN|-M$ out of $|\cN|$ honest sensors and the local observations used for generating the local report at every sensor. After that, by absorbing the impact of the reverse attack into pre/post-change distributions, the problem is transformed into an equivalent centralized QCD problem in Section~\ref{sec:tQCD} for which CUSUM is known to be optimal. Finally, in Section~\ref{sec:est_converse}, evaluating the CUSUM procedure for the transformed problem reveals the connection to another non-Byzantine QCD with only $|\cN|-M$ honest sensors.
\begin{figure}
    \centering
    \includegraphics[width=3in]{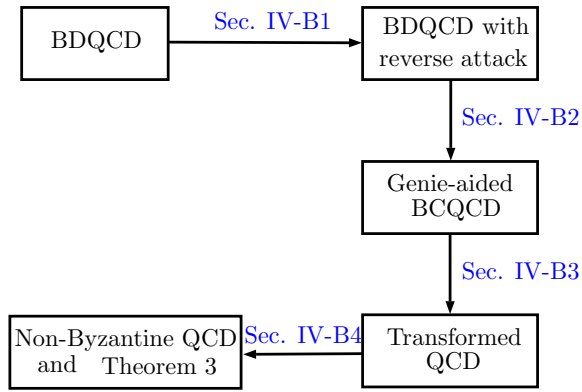}
    \caption{Diagram of proof steps.}
    \label{fig:proof}
\end{figure}

When comparing the main result of this section presented above and the achieviability result Theorem~\ref{thm:achieve}, one immediately characterizes the optimal first-order behavior of binary BDQCD as follows.
\begin{corollary}
    For an optimal binary BDQCD rule $T^*$, subject to $\A[T^*] \geq \gamma$, the first-order asymptotic worst-case mean detection delay is given by
    \begin{equation}
        \mathcal{D}[T^*] \sim \frac{\log\gamma}{(|\cN|-M)I},\quad\text{as $\gamma\rightarrow\infty$.}
    \end{equation}
\end{corollary}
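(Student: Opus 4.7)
The plan is to obtain the corollary as a direct consequence of the two first-order bounds already in hand: the achievability bound in \thmref{thm:achieve} and the converse bound in \thmref{thm:new_converse}. Since $|\cN|>M$ is assumed throughout the binary BDQCD setup, the choice $d=|\cN|$ is admissible in \thmref{thm:achieve} (it satisfies $M<d\le |\cN|$), so the $d$-voting rule $\tau_{(|\cN|)}$ is a legitimate BDQCD rule whose false alarm time can be calibrated to any desired $\gamma$.

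First, I would interpret the phrase ``optimal binary BDQCD rule $T^*$'' in the precise first-order asymptotic sense: $T^*$ is any rule with $\A[T^*]\ge \gamma$ achieving $\cD[T^*]\le \cD[T]$ (up to $o(\log\gamma)$) for every competing rule $T$ with $\A[T]\ge \gamma$. Under this reading, applying $\tau_{(|\cN|)}$ at threshold calibrated so that $\A[\tau_{(|\cN|)}]=\gamma$ and invoking \thmref{thm:achieve} yields
\begin{equation}
    \cD[T^*]\le \cD[\tau_{(|\cN|)}](1+o(1)) \sim \frac{\log\gamma}{(|\cN|-M)I}, \qquad \gamma\to\infty,
\end{equation}
which is the ``$\lesssim$'' half of the desired asymptotic equivalence.

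Second, I would invoke \thmref{thm:new_converse}, which applies to \emph{every} binary BDQCD rule and in particular to $T^*$, giving
\begin{equation}
    \cD[T^*]\gtrsim \frac{\log\gamma}{(|\cN|-M)I},\qquad \gamma\to\infty,
\end{equation}
the matching ``$\gtrsim$'' half. Combining the two bounds using the definition of $\sim$ (namely, the ratio tends to $1$) yields the claim $\cD[T^*]\sim \log\gamma/((|\cN|-M)I)$.

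There is no genuine obstacle: the proof is a one-step combination of the achievability and converse results. The only point that deserves a sentence of care is the admissibility check $M<|\cN|$ for selecting $d=|\cN|$ in \thmref{thm:achieve}, which is guaranteed by the standing assumption that honest sensors strictly outnumber compromised ones. If desired, the same two-inequality sandwich could also be phrased around the consensus/low-sum-CUSUM rule, since both are noted to attain the same first-order scaling.
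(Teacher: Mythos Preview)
Your proposal is correct and matches the paper's approach: the corollary is obtained precisely by combining the achievability of the $|\cN|$-voting rule from \thmref{thm:achieve} (with $d=|\cN|$, admissible since $|\cN|>M$) and the converse of \thmref{thm:new_converse}. The paper states exactly this one-line combination without further argument.
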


\begin{remark}
    Supposed that, as in \cite{BFL16}, there is a subset $\mathcal{B}\subseteq\cN$ such that only sensors in $\mathcal{B}$ are affected by the change and those in $\cN\setminus\mathcal{B}$ have observations drawn i.i.d. according to $P_0$ all the time. We can slightly alter our genie in our proof so that it reveals the identities of $|\mathcal{B}|-M$ affected sensors and  the $|\cN\setminus\mathcal{B}|$ unaffected sensors. One can then follow the same technique to prove the following converse,
    \begin{equation}
        \mathcal{D}[T]\gtrsim \frac{\log\gamma}{(|\mathcal{B}|-M)I},\quad\text{as $\gamma\rightarrow\infty$.}
    \end{equation}
    Moreover, setting $d=|\mathcal{B}|$ in the $d$-voting rule achieves the above first-order scaling; hence, the optimal first-order asymptotic performance of this setting is also characterized as
    \begin{equation}
        \mathcal{D}[T^*]\sim \frac{\log\gamma}{(|\mathcal{B}|-M)I},\quad\text{as $\gamma\rightarrow\infty$.}
    \end{equation}
\end{remark}

\subsection{Proof of the converse for binary BDQCD}\label{subsec:proof_converse_binary}
The proof presented in this section follows closely the steps shown in Fig.~\ref{fig:proof}.

\subsubsection{The reverse attack}\label{sec:rev_attack}
For the ease of presentation in this proof, we define $P_{0,1}=P_0$ and $P_{1,1}=P_1$. Recall that each honest sensor $k$'s observation sequence $X_t^k$ is drawn i.i.d. according to $P_{0,1}$ before the change time $\nu$ and i.i.d. according to $P_{1,1}$ after $\nu$. We construct an attack strategy as follows. For each compromised sensors $k'$, it generates a fake observation sequence $X_t^{k'}$, which is then input to the assigned local decision function for forming the fake report. The fake observation sequence is generated i.i.d. according to $P_{0,2}$ and $P_{1,2}$ before and after the change time $\nu$, respectively. That is, the compromised sensors form fake reports according to observations based on wrong distributions. To establish the tight converse, we will set $P_{0,2}=P_{1,1}=P_1$ and $P_{1,2}=P_{0,1}=P_0$ in the very end of the proof; therefore, we call this attack strategy the ``reverse attack". However, most of the steps in the proof stay valid for general densities $P_{0,2}$ and $P_{1,2}$. Next, we will show that under this reverse attack, for any detection rule with mean time to a false alarm no less than $\gamma$, the mean detection delay is lower-bounded by the RHS of \eqref{eq:new_converse}. Note that by definition, the worst case delay in \eqref{eq:delay_binary} will also be lower-bounded by \eqref{eq:new_converse} automatically.

\subsubsection{Genie-aided Byzantine centralized QCD}\label{sec:genie}
First note that the worst case happens when there are $M$ compromised sensors. Also since the identities of the sensors are unknown, the fusion center cannot enhance the worst-case performance by selectively accepting reports. If the fusion center accepts reports from $K-K'$, $K'\leq |\cN|$, sensors only, in the worst case, the problem reduces to the BDQCD with $M$ compromised sensors and $|\cN|-K'$ honest sensors, which results in a worse performance. Moreover, when $K'>|\cN|$, we are left with only compromised sensors in the worst case, which is obviously worse than accepting all reports. We therefore only have to consider the fusion center taking reports from all $K$ sensors for detection in what follows.

The $K$ sensors are divided into three groups. Each of the first two groups consists of $M$ sensors, while the last group contains $|\cN|-M$ sensors. All sensors in the first and third groups are honest, while those in the second group are compromised. Assume that there is a genie giving away the identities of $|\cN|-M$ honest sensors to the fusion center. For the rest $M$ honest sensors and $M$ compromised sensors, the identities are unknown to the fusion center. Without loss of generality, we assume that sensors in the first two groups have indices $[2M]$. We also give the observations used at each sensor (fake observations if the sensor is compromised) for generating its local report and the densities $P_{0,2}$ and $P_{1,2}$ to the fusion center. Let $s:[K]\rightarrow [2]$ be a function that assigns each sensor to index 1 or 2 (meaning ``honest'' or ``compromised'') in such a way that exactly $M$ out of the first $2M$ sensors are assigned to index 2, and the last $|\cN|-M$ sensors are all assigned to index 1. Let $\mathcal{S}$ be the collection of all possible assignments $s$. Clearly, there are total $|\mathcal{S}|=\binom{2M}{M}$ such assignments. For $\theta\in\{0,1\}$, the product density under the compromised group assignment $s$ is
\begin{equation}\label{eqn:prod_distribution}
    P_{\theta,s}(\mathbf{X}_t) = \prod_{k'=1}^{2M} P_{\theta,s(k')}(X_t^{k'}) \prod_{k=2M+1}^K P_{\theta,1}(X_t^{k}).
\end{equation}
Now, we are facing a composite change detection problem, which we refer to as genie-aided Byzantine centralized QCD (BCQCD). Before the change time $\nu$, the random vectors $\mathbf{X}_1, \mathbf{X}_2,
\ldots, \mathbf{X}_{\nu}$ are i.i.d. over time with density $P_{0,s}$
while $\mathbf{X}_{\nu+1},\mathbf{X}_{\nu+1}, \ldots$ are generated
with density $P_{1,s}$, for some $s \in \mathcal{S}$.
In this genie-aided version, the fusion center knows everything about the compromised sensors except for their exact locations. With slight abuse of notations, as \eqref{eq:delay_binary},
the mean detection delay of this problem is given by
\begin{equation} \label{eq_composite_delay}
    \cD_{\textrm{genie}}[T]:= \sup_{s\in\mathcal{S},\nu} \esssup\E^{s}_{\nu}[ (T-\nu)^+ | \mathbf{X}_1^{\nu} ];
\end{equation}
also as \eqref{eq:falseAlarmTime_multiple}, the mean time to false
alarm  is
\begin{equation} \label{eq_composite_alarm}
  \A_{\textrm{genie}}[T]:= \inf_{s \in \mathcal{S}} \E^{s}_{\infty}[ T].
\end{equation}

\begin{example}
    An example of genie-aided BCQCD with $|\cN|=4$ honest sensors and $M=2$ compromised sensors is provided in Fig.~\ref{fig:group_assignment}. In this figure, we use empty circles, crossed circles, and gray circles to represent honest sensors, compromised sensors, and honest sensors whose identities are revealed by the genie, respectively. The distribution that each sensor's observation follows under hypothesis $\theta\in\{0,1\}$ is presented. In both Fig.~\ref{fig:group_assignment}-a) and Fig.~\ref{fig:group_assignment}-b), we note that the last $|\cN|-M=2$ sensors are always honest and their identities are revealed to the fusion center. Hence, their observations always follow $P_{\theta}$ independently. For the first $2M=4$ sensors, the distributions under assignments $\{1,2,2,1\}$ and $\{1,1,2,2\}$ are shown in Fig.~\ref{fig:group_assignment}-a) and Fig.~\ref{fig:group_assignment}-b), respectively. We note that there are total $\binom{2M}{M}= 6$ different assignments and we only show 2 of them for demonstration.
\end{example}

\begin{figure}
    \centering
    \includegraphics[width=3in]{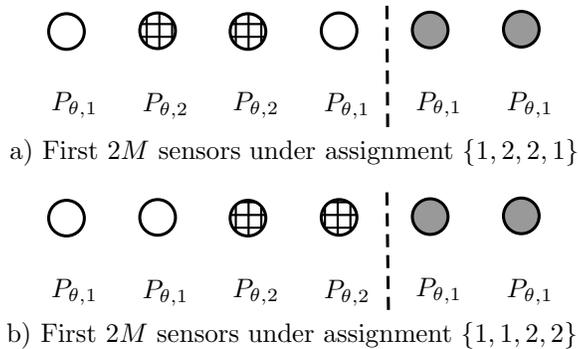}
    \caption{An example of genie-aided BCQCD with $|\cN|=4$ and $M=2$ under two different assignments. Here, empty circles and crossed circles are honest and compromised sensors under the assignment $s$, respectively, and thus follow $P_{\theta,1}$ and $P_{\theta,2}$, respectively. Moreover, gray circles are those honest sensors whose identities are revealed to the fusion center by the genie; thereby, their observations follow $P_{\theta,1}$ always, regardless of assignment.}
    \label{fig:group_assignment}
\end{figure}

\subsubsection{Transformed Centralized QCD}\label{sec:tQCD}
We transform the genie-aided BCQCD problem into an equivalent centralized QCD problem for which CUSUM is known to be optimal. Recall that unlike \cite{IHsiang18}, now only $2M$ sensors' identities are unknown to the fusion center and we define ``masked"-symmetric strategy as follows. Let $\tau_{2M}(\mathbf{X}_t)$ be the masked ordering map that puts the first $2M$ elements of its input $\mathbf{X}_t$ in descending order while keeps the other $|\cN|-M$ positions unchanged. A decision rule $T(.)$ of the genie-aided BCQCD problem is said to be masked symmetric if it can be represented as $T(\{\mathbf{X}_t\}_{t \geq 1})= \tilde{T}(\{\tau_{2M}(\mathbf{X}_t)\}_{t \geq 1})$ for some decision rule $\tilde{T}$.

In the transformed centralized QCD problem, the fusion center observes $\tilde{\mathbf{X}}_t=\tau_{2M}(\mathbf{X}_t)$ at time $t$. Let $\tilde{P}_{\theta}(\tilde{\mathbf{X}}_t)$ be the density of $\tau_{2M}(\mathbf{X}_t)$, where $\mathbf{X}_t$ is generated according to density $P_{\theta, s}, \theta\in\{0,1\}$.  Before the change, the observations $\{\tilde{\mathbf{X}}_t\}$ follow $\tilde{P}_{0}$ while after the change, they follow $\tilde{P}_{1}$. Also,
\begin{align}\label{eq_aux_PDF}
    \tilde{P}_{\theta}(\tilde{\mathbf{X}}_t) = \sum_{\mathbf{X}_t: \tau_{2M}(\mathbf{X}_t)=\tilde{\mathbf{X}}_t }P_{\theta, s}\left( \mathbf{X}_t \right),
\end{align}
where the equality follows from that the absolute value of the
Jacobian of a permutation is always 1. Following the proof of part
1 of \cite[Lemma 4.1]{IHsiang18}, we can easily show that for all
assignments $s \in \mathcal{S}$, the density
$\tilde{P}_{\theta}(\tilde{\mathbf{X}}_t)$ does not depend on $s$.
Suppose the change occurs at the time $\nu$. Under hypothesis
$\tilde{H}_1$, the random vectors $\tilde{\mathbf{X}}_1,
\tilde{\mathbf{X}}_2, \ldots, \tilde{\mathbf{X}}_{\nu}$ are drawn
i.i.d. over time with density $\tilde{P}_{0}$ while
$\tilde{\mathbf{X}}_{\nu+1},\tilde{\mathbf{X}}_{\nu+2}, \ldots$
are generated i.i.d. with density $\tilde{P}_{1}$. Under
hypothesis $\tilde{H}_0$, there is no change, i.e. $\nu=\infty$,
and $\tilde{\mathbf{X}}_t$ are drawn i.i.d. with density
$\tilde{P}_{0}$ for all $t$.

We first focus on a masked symmetric rule $T$, and show that the detection delay of the genie-aided BCQCD is identical to that of the transformed QCD problem, defined as $\cD_{\textrm{trans}}[\tilde{T}]:=\sup_\nu\esssup\E_{\nu}[(\tilde{T}-\nu)^+|\tilde{\mathbf{X}}_1^{\nu}]$. Compared to the transformation in the one-shot hypothesis testing \cite{IHsiang18}, our delay in \eqref{eq_composite_delay} involves all pre-change observations by taking an essential supreme over the distributions of them. This difference complicates the transformation. Specifically, we will show
\begin{align} \label{eq_no_infProd}
  \cD_{\textrm{genie}}[T] &= \sup_{s \in \mathcal{S},\nu} \esssup\E^{s}_{\nu}[ (\tilde{T}(\{\tau_{2M}(\mathbf{X}_t)\}_{t \geq 1}) -\nu)^+ |\mathbf{X}_1^{\nu}]  \notag \\
  &\overset{(a)}{=}\sup_{\nu}\esssup\E_{\nu}[ (\tilde{T}(\{\tilde{\mathbf{X}}_t\}_{t \geq 1})-\nu)^+ |\tilde{\mathbf{X}}_1^{\nu}]=\cD_{\textrm{trans}}[\tilde{T}].
\end{align}
The first equality is from the definition in \eqref{eq_composite_delay} and we will devote ourselves to proving equality (\ref{eq_no_infProd}a). To this end, note that
\begin{align}\label{eq_no_infProd1}
&\E^s_{\nu}[ (\tilde{T}(\{\tau_{2M}(\mathbf{X}_t)\}_{t\geq 1})-\nu)^+|\mathbf{X}_1^ \nu  ]    \notag\\
=  &\sum_{z=0}^\infty 1-\P\left(\left(\tilde{T}(\{\tau_{2M}(\mathbf{X}_t)\}_{t \geq 1})-\nu \right)^+ \leq z \big|\mathbf{X}_1^ \nu \right) \notag \\
=  &\sum_{z=0}^\infty 1- \! \int \! 1_{\left\{ \left(\tilde{T}(\{\tau_{2M}(\mathbf{x}_t)\}_{t \geq 1})-\nu \right)^+ \leq z \right\}} \! \prod_{t=\nu+1}^{\nu+z} \!\!\! P_{1,s}(\mathbf{x}_t)d{\mathbf{x}}_{\nu+1}^{\nu+z} \notag \\
\overset{(a)}{=}  &\sum_{z=0}^\infty 1- \int 1_{\left\{\left(\tilde{T}(\{\tilde{\mathbf{x}}_t\}_{t \geq 1})-\nu \right)^+ \leq z \right\}} \prod^{\nu+z}_{t = \nu+1} \tilde{P}_{1}(\tilde{\mathbf{x}}_t)d\tilde{\mathbf{x}}_{\nu+1}^{\nu+z}\notag \\
=  &\sum_{z=0}^\infty 1- \P\left(\left(\tilde{T}(\{\tilde{\mathbf{X}}_t\}_{t \geq 1})-\nu \right)^+ \leq z \big|\tilde{\mathbf{X}}_1^ \nu \right) \notag \\
=  &\E_{\nu}[ (\tilde{T}(\{\tilde{\mathbf{X}}_t\}_{t \geq 1})-\nu)^+ |\tilde{\mathbf{X}}_1^{\nu}],
\end{align}
where $\P$ is the associated probability measure and $1_{\{.\}}$
is the indicator function; and (a) follows from the change of
variables in integration \cite{billingsley2008probability} and the
fact that $\tilde{P}_{1}(.)$ does not depend on $s$.

Now, for a fixed $\nu$ and for each $s \in\mathcal{S}$, let $\P^s$ and $\tilde{\P}$ denote the probability measures on $\R^{K\times\nu}$ with densities specified by \eqref{eqn:prod_distribution} and \eqref{eq_aux_PDF} with $\theta=0$, respectively.
To establish (\ref{eq_no_infProd}a), observe that for any $x\in
\R^{K\times\nu}$, from \eqref{eq_no_infProd1}, we have
\begin{align}\label{1}
& \E^{s}_{\nu}[ (\tilde{T}(\{\tau_{2M}(\mathbf{X}_t)\}_{t \geq 1}) -\nu)^+ |\mathbf{X}_1^{\nu}] (x) =\notag \\
& \E_{\nu}[ (\tilde{T}(\{\tilde{\mathbf{X}}_t\}_{t \geq 1})-\nu)^+ |\tilde{\mathbf{X}}_1^{\nu}] (\tau_{2M}(x)).
\end{align}
Let $D_M$ denote $\esssup\E^{s}_{\nu}[
(\tilde{T}(\{\tau_{2M}(\mathbf{X}_t)\}_{t \geq 1}) -\nu)^+
|\mathbf{X}_1^{\nu}]$ where the essential supremum is taken under
$\P^s$. By definition, there exists
$\Omega\subseteq\R^{K\times\nu}$ with $\P^s (\Omega)=1$ such that
\[
D_M \ge \E^{s}_{\nu}[ (\tilde{T}(\{\tau_{2M}(\mathbf{X}_t)\}_{t
\geq 1}) -\nu)^+ |\mathbf{X}_1^{\nu}](x)
\]
for all $x\in \Omega$. By \eqref{1}, we have
\[
D_M \ge \E_{\nu}[ (\tilde{T}(\{\tilde{\mathbf{X}}_t\}_{t \geq 1})-\nu)^+ |\tilde{\mathbf{X}}_1^{\nu}] (\tau_{2M}(x)),\quad \forall x\in \Omega.
\]
Note that
\[
\tilde{\P}(\tau_{2M}(\Omega)) = \int_{\tau_{2M}(\Omega)} \tilde{P}_{0}(y) dy =  \int_{\Omega} P_{0,s}(x) dx =  \P^{s} (\Omega)=1,
\]
where the densities $\tilde{P}_{0}$ and $P_{0,s}$ are given by \eqref{eq_aux_PDF} and \eqref{eqn:prod_distribution} respectively. We therefore conclude that $D_M \ge \esssup\E_{\nu}[ (\tilde{T}(\{\tilde{\mathbf{X}}_t\}_{t \geq 1})-\nu)^+ |\tilde{\mathbf{X}}_1^{\nu}]$ where the essential supremum here is taken under $\tilde{\P}$. Noting that this is true for every $s\in\mathcal{S}$ gives the relation ``$\geq$'' in (\ref{eq_no_infProd}a). By using the same argument as above, but switching the roles of the left-hand side and right-hand side of (\ref{eq_no_infProd}a), we obtain the relation ``$\le$''.

We have shown that the detection delay of genie-aided BCQCD \eqref{eq_composite_delay} is equal to that of transformed QCD under masked symmetric rules. One can similarly prove that the mean time to false alarm $\mathcal{A}_{\textrm{genie}}[T]$ in \eqref{eq_composite_alarm} is equal to that of the new problem $\mathcal{A}_{\textrm{trans}}[\tilde{T}]:=\mathbb{E}_{\infty}[\tilde{T}]$. The rest is to show that for any fusion rule $T'(.)$, there is a masked symmetric rule $T(.)$ that is not worse than $T'(.)$. This is shown in Lemma~\ref{lem:general} in Appendix and then the transformation of QCD is established.

\subsubsection{Establishing the converse}\label{sec:est_converse}
Let $T^*$, $T_{\textrm{genie}}^*$, and $T_{\textrm{trans}}^*$ be
optimal stopping rules for BDQCD, genie-aided BCQCD, and
transformed QCD, respectively. So far, we have established the
following relationships among the aforementioned problems
\begin{equation} \label{eq_bin_bound_Trans}
    \mathcal{D}[T^*]\geq \mathcal{D}_{\textrm{genie}}[T_{\textrm{genie}}^*] = \mathcal{D}_{\textrm{trans}}[T_{\textrm{trans}}^*],
\end{equation}
and
\begin{equation} \label{eq_bin_bound_Trans_alarm}
    \mathcal{A}[T^*]\leq \mathcal{A}_{\textrm{genie}}[T_{\textrm{genie}}^*] = \mathcal{A}_{\textrm{trans}}[T_{\textrm{trans}}^*].
\end{equation}

We can therefore establish a converse bound by evaluating the performance of the transformed QCD, which is a standard QCD problem with observations following $\tilde{P}_0$ and $\tilde{P}_1$ before and after the change point $\nu$, respectively. For such the problem, it is well known from \cite{moustakides86}, \cite[Lemma 2]{bounds} that an optimal strategy is Page's CUSUM procedure given by $\tilde{\sigma}(h)=\inf\{t\in\mathbb{N}:\tilde{Y}_t\geq h\}$, where $\tilde{Y}_t = (\tilde{Y}_{t-1}+\tilde{\ell}_t)^+$ with $\tilde{Y}_0=0$, and from \eqref{eq_aux_PDF}
\begin{align}\label{eqn:cusum_to_mcusum}
  \tilde{\ell}_t &= \log\frac{ \sum_{\mathbf{X}_t:\tau_{2M}(\mathbf{X}_t)=\tilde{\mathbf{X}}_t} P_{1,s}(\mathbf{X}_t) }{ \sum_{\mathbf{X}_t:\tau_{2M}(\mathbf{X}_t)=\tilde{\mathbf{X}}_t} P_{0,s}(\mathbf{X}_t) }\nonumber \\
   &= \log\frac{ \sum_{\pi\in\Pi_{2M}} P_{1,s}(\pi(\tilde{\mathbf{X}}_t)) }{ \sum_{\pi\in\Pi_{2M}} P_{0,s}(\pi(\tilde{\mathbf{X}}_t)) } \textcolor{black}{= \log\frac{ \sum_{\pi\in\Pi_{2M}} P_{1,s\circ\pi^{-1}}(\tilde{\mathbf{X}}_t) }{ \sum_{\pi\in\Pi_{2M}} P_{0,s\circ\pi^{-1}}(\tilde{\mathbf{X}}_t) } } \nonumber \\
   &\overset{(a)}{=} \log\frac{ \sum_{s'\in\mathcal{S}} P_{1,s'}(\tilde{\mathbf{X}}_t) \frac{ 2M!}{\binom{2M}{M}} }{ \sum_{s'\in\mathcal{S}} P_{0,s'}(\tilde{\mathbf{X}}_t)\frac{2M!}{\binom{2M}{M}} }.
\end{align}
where $\pi:[K]\rightarrow[K]$ is a masked permutation function that permutes the first $2M$ entries while keeps the remaining $|\cN|-M$ entries unchanged, $\Pi_{2M}$ is the collection of all ($2M!$ in total) such $\pi$, and $\circ$ is the function composition operator; (a) follows from the fact that for a compromised group assignment $s$, summing over all the permuted versions $s \circ\pi^{-1}$ is equivalent to summing over all the assignments $s'$ with each $s'$ being involved $2M!/\binom{2M}{M}$ times. In what follows, we set $P_{0,2}=P_{1,1}$ and $P_{1,2}=P_{0,1}$ according to the reverse attack described in Sec. \ref{sec:rev_attack}. We now rewrite the likelihood in \eqref{eqn:cusum_to_mcusum} as
\begin{align}
  \tilde{\ell}_t &= \log\frac{\sum_{s\in\mathcal{S}} P_{1,s}(\tilde{\mathbf{X}}_t)}{\sum_{s\in\mathcal{S}} P_{0,s}(\tilde{\mathbf{X}}_t)} \label{eq_mixtureCUSUM} \\
    &\overset{(a)}{=} \log\frac{\left(\sum_{s\in\mathcal{S}} \prod_{k'=1}^{2M} P_{1,s(k')}(\tilde{X}_t^{k'})\right) \prod_{k=2M+1}^K P_{1,1}(\tilde{X}_t^{k})}{\left(\sum_{s\in\mathcal{S}} \prod_{k'=1}^{2M} P_{0,s(k')}(\tilde{X}_t^{k'})\right) \prod_{k=2M+1}^K P_{0,1}(\tilde{X}_t^{k})} \nonumber \\
    &\overset{(b)}{=} \log\frac{\prod_{k=2M+1}^K P_{1,1}(\tilde{X}_t^{k})}{\prod_{k=2M+1}^K P_{0,1}(\tilde{X}_t^{k})},\label{eq_mixtureCUSUM1}
\end{align}
where (a) follows from \eqref{eqn:prod_distribution} and (b) is
because of the fact that for every $s$, there exists a $\bar{s}$
such that $\bar{s}(k')=2$ whenever $s(k')=1$ and $\bar{s}(k')=1$
whenever $s(k')=2$; therefore,
\begin{align}
    \sum_{s\in\mathcal{S}}\prod_{k'=1}^{2M} P_{1,s(k')}(\tilde{X}_t^{k'}) &= \sum_{s\in\mathcal{S}} \prod_{k'=1}^{2M} P_{0,\bar{s}(k')}(\tilde{X}_t^{k'})\nonumber \\
    &= \sum_{\bar{s}\in\mathcal{S}} \prod_{k'=1}^{2M} P_{0,\bar{s}(k')}(\tilde{X}_t^{k'}),
\end{align}
where the first equality is from $P_{1,1}=P_{0,2}$ and $P_{1,2}=P_{0,1}$. Note that when $k=2M+1 \ldots K$, $\tilde{X}_t^{k}$ is equal to the honest observation $X_t^{k}$ in the BCQCD before transformation. Hence, the optimal test reduces to the standard centralized CUSUM procedure for the change detection with $|\cN|-M$ honest sensors. Applying the results in \cite{bounds} then shows \eqref{eq:new_converse}. 

\subsubsection{Discussions}\label{sec:Conclude}
A byproduct obtained along the proof is that the ``reverse attack" proposed in Section~\ref{sec:rev_attack} is an asymptotically worst attack for the original BDQCD problem. This observation is further exploited using the game theory, as described in Lemma \ref{lma:scaling_lower_bound} and Theorem \ref{Thm_Binarygame_cost} of Section \ref{sec:game}.

Note that although the idea of the transformation in Sec.
\ref{sec:tQCD} is inspired by \cite{IHsiang18}, our proofs
presented above are quite different. In \cite{IHsiang18}, all
sensors' identities are not revealed to the fusion center, while
it is essential for us to construct a strong, but not so powerful,
genie that reveals identities of some sensors for proving a tight
converse, as in Sec. \ref{sec:genie}. This new genie is the key to
validate \eqref{eq_mixtureCUSUM1} in Sec. \ref{sec:est_converse},
which shows that the optimal test under the proposed reverse
attack constructed in Sec. \ref{sec:rev_attack} only relates to
the revealed $|\cN|-M$ honest sensors. Furthermore, the
transformation in Sec. \ref{sec:tQCD} is more involved than that
in \cite{IHsiang18}. The difficulty comes from the fundamental
difference between the one-shot hypothesis testing problem in
\cite{IHsiang18} and our sequential change detection that deals
with observation sequences with an unknown change time. Finally,
in the coming Section \ref{subsec:converse}, we extend our
converse to the multiple-hypothesis case and face new challenges
compared with the binary hypothesis problem in \cite{IHsiang18}.

\section{Multi-Hypothesis BDQCD}\label{sec:mul_BDQCD}
In this section, we consider the multi-hypothesis BDQCD. Again, we first present our main result of this section in Section~\ref{subsec:result_M}, which is the characterization of the asymptotic performance of the worst-case detection delay subject to a mean time to a false alarm or a false isolation. We then prove a converse for the considered problem in Section~\ref{subsec:converse}, followed by the proposed stopping rules and their performance analysis in Section~\ref{subsec:stop_rule_M}. Throughout the section, we define for each pair of $q,j\in[Q]^+$, $q\neq j$, the KL divergence from $P_j$ to $P_q$ as
\begin{equation} \label{eq_KL_eachsensor}
 I(q,j) := \int \log\left(P_q(x)/P_j(x)\right) P_q(x)dx.
\end{equation}
 Let $\sigma^2(q,j)$ be the second moment of $I(q,j)$ defined as
\begin{equation}
    \sigma^2(q,j) := \E_q\left[ \left(\log\left(\frac{P_q(x)}{P_j(x)}\right) - I(q,j) \right)^2\right].
\end{equation}
We then make the following assumption:
\begin{assumption}\label{asm:I}
For any $q\in[Q]$,
\begin{itemize}
\item [(i)] $0<I(q,j) <\infty$ and $\sigma^2(q,j) <\infty$, $\forall j\in [Q]^+, j\neq q$.
\item [(ii)] Let $I^q := \min_{0\le j\le Q,\ j\neq q} I(q,j)$. Assume $I^q$ admits a unique minimizer $j_q^*\in [Q]^+ \setminus\{q\}$.
\end{itemize}
Now, consider $q=0$.
\begin{itemize}
    \item [(iii)] We define $I^0 := \min_{1\le j \le Q} I(j,0)$, and assume that $I^0$ admits a unique minimizer $j_0^*\in [Q]$.
\end{itemize}
\end{assumption}

\subsection{Main results of this section}\label{subsec:result_M}
Our first result for multi-hypothesis BDQCD with $Q+1$ hypotheses is the characterization of the converse as follows.
\begin{theorem}\label{thm:scaling_M}
   Consider the multi-hypothesis BDQCD with $Q+1$ hypotheses. For any rule $T$ with $\A[T] \geq \gamma$, the worst-case mean detection delay is lower bounded as
    \begin{equation} \label{eqn:scaling_M}
        \mathcal{D}[T]\gtrsim \frac{\log\gamma}{(|\cN|-M)I^*}, \quad\text{as $\gamma\rightarrow\infty$}.
    \end{equation}
    where
    \begin{equation} \label{eqn:I*}
    I^*=\min_q I^q= \min_{q \in [Q]} \min_{j \in [Q]^+ \setminus\{q\}} I(q,j)
    \end{equation}
\end{theorem}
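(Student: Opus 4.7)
The plan is to adapt the four-step template of Section~\ref{subsec:proof_converse_binary} (reverse attack $\to$ genie-aided BCQCD $\to$ masked-symmetric transformation $\to$ CUSUM lower bound), now targeted at the single pairwise comparison that realizes $I^*$. By Assumption~\ref{asm:I}, there is a pair $(q^*,j^*)$ with $q^*\in[Q]$ and $j^*\in[Q]^+\setminus\{q^*\}$ achieving $I(q^*,j^*)=I^*$. Since $\cD[T]$ in \eqref{eq:ess_delay_multiple} already involves $\sup_{q\in[Q]}$, it suffices to lower bound the delay at the specific true hypothesis $H_{q^*}$ under a judiciously chosen attack, and the resulting bound then lifts to \eqref{eqn:scaling_M}.

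\emph{Reverse attack and genie.} Mirroring Section~\ref{sec:rev_attack}, I would specify the following attack: under $H_{q^*}$, every compromised sensor generates fake i.i.d.\ observations from $P_0$ before time $\nu$ and from $P_{j^*}$ afterwards, feeding them into its prescribed local rule. In other words, the compromised sensors impersonate honest witnesses of the rival hypothesis $H_{j^*}$; this recovers the binary reverse attack when $j^*=0$. I would then reuse the three-group partition and genie of Section~\ref{sec:genie}: $M$ hidden honest sensors, $M$ hidden compromised sensors, and $|\cN|-M$ revealed honest sensors, with the hidden assignment $s\in\mathcal{S}$ of size $\binom{2M}{M}$.

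\emph{Transform and evaluate.} The essential-supremum/change-of-variables manipulation \eqref{eq_no_infProd1} should carry over verbatim, because the target likelihood comparison is again between two product densities that are invariant under masked permutations; thus, as in Section~\ref{sec:tQCD}, the composite problem reduces to a centralized QCD with per-slot likelihood of the form \eqref{eqn:cusum_to_mcusum}. The key multi-hypothesis analogue of the cancellation \eqref{eq_mixtureCUSUM}--\eqref{eq_mixtureCUSUM1} should go through because swapping honest/compromised labels within the first $2M$ coordinates bijects the post-change product distribution under $(H_{q^*},\text{attack})$ with that under $(H_{j^*},\text{reverse attack})$; the log-likelihood ratio between $H_{q^*}$ and $H_{j^*}$ therefore collapses to the product involving only the $|\cN|-M$ revealed sensors, with pairwise KL divergence $I(q^*,j^*)=I^*$. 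Invoking Moustakides' CUSUM optimality together with \cite{bounds} (and the matrix-CUSUM viewpoint of \cite{TV02}) on the transformed problem then yields the delay bound $\log\gamma/((|\cN|-M)I^*)$, which transfers back to the original BDQCD along the chain analogous to \eqref{eq_bin_bound_Trans}--\eqref{eq_bin_bound_Trans_alarm}.

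\emph{Main obstacle.} I expect the most delicate point to be aligning this single-pair argument with the multi-hypothesis false-isolation criterion \eqref{eq:falseIsolationTime_multiple}, which takes an infimum over all wrong types $\hat q\in[Q]\setminus\{q\}$ simultaneously; correspondingly, the matrix CUSUM of \eqref{eq_CUSUM_element}--\eqref{eq_matrix_CUSUM_matirx} monitors every off-diagonal pair, not just $(q^*,j^*)$. One must therefore argue, in the spirit of the asymptotic dominance result advertised in Section~\ref{subsec:multi_prior}, that the worst-case first-order constant for the transformed multi-hypothesis problem is controlled by the single minimizing pair $(q^*, j_{q^*}^*)$ rather than by an aggregate over all rival rows and columns; this is what guarantees that $I^*$, not some larger combination of KL divergences, shows up in the denominator of \eqref{eqn:scaling_M}. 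Once this dominance is in hand, the remainder of the proof is a line-by-line adaptation of the binary template in Section~\ref{subsec:proof_converse_binary}.
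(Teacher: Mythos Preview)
Your four-step template matches the paper, but the execution diverges from it at two places and one of them is a genuine gap.

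First, a smaller issue: your reverse attack is only specified under $H_{q^*}$ (and, via ``$P_0$ before $\nu$'', under $H_0$). The transformed centralized problem, however, is a \emph{full} $(Q{+}1)$-hypothesis QCD: to make sense of $\A_{\textrm{trans}}$ (which involves $\inf_{q\in[Q]^+}$) and of $\tilde I^*$, you need $\tilde P_q$ for every $q$, hence an attack density $P_{q,2}$ for every $q\in[Q]^+$. The paper does this explicitly, choosing $P_{q,2}=P_{j^*_q}$ in general and enforcing the symmetric swap $P_{q_m,2}=P_{j^*_{q_m}}$, $P_{j^*_{q_m},2}=P_{q_m}$ at the minimizing pair; this is what makes $\tilde I(q_m,j^*_{q_m})=(|\cN|-M)I^*$ an \emph{equality} while all other $\tilde I(q,j)\ge (|\cN|-M)I(q,j)$. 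Incidentally, your claim that ``$P_0$ before~$\nu$, $P_{j^*}$ after'' recovers the binary reverse attack is false: in the binary case $j^*=0$ and your rule has compromised sensors use $P_0$ throughout, whereas the binary reverse attack uses $P_1$ before~$\nu$ and $P_0$ after.

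Second, and more serious: your final step---``invoke Moustakides' CUSUM optimality on the transformed problem''---does not go through, and the paper explicitly abandons this route. The obstruction is exactly the one you flag as the main obstacle but do not resolve: in the multi-hypothesis problem the delay $\cD[T]$ is measured at the \emph{first alarm}~$T$ (of any type), while the constraint $\A[T]\ge\gamma$ controls $T^{\hat q}$, the first alarm \emph{of type~$\hat q$}. Since $T\le T^{\hat q}$, you cannot extract a binary sub-problem in which the same stopping time appears on both sides; lower-bounding $\E[(T-\nu)^+]$ by a binary CUSUM delay would require $\E_{j^*}[T]\ge\gamma$, which $\A[T]\ge\gamma$ does not give (the first alarm under $H_{j^*}$ may well be the correct one). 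The paper instead keeps the transformed problem genuinely multi-hypothesis and invokes Nikiforov's converse \cite[Theorem~2]{nikiforov95} (stated here as Lemma~\ref{lem:extend_Nikogonov}), which directly yields $\cD_{\textrm{trans}}\gtrsim\log\gamma/\tilde I^*$; the computation $\tilde I^*=(|\cN|-M)I^*$ then finishes the proof. The ``asymptotic dominance'' you hope to invoke is thus an \emph{output} of Nikiforov's bound combined with the KL calculation, not an independent tool that lets you short-circuit to a binary comparison.
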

\noindent This converse can be proved in a similar way to Theorem~\ref{thm:new_converse}; hence, we only list the major differences between the two proofs in Section~\ref{subsec:converse}.

For the achievability, we propose in Section~\ref{subsec:stop_rule_M} a family of stopping rules, called the simultaneous $d$-th alarm $\tau_{(d)}^s$, and show that this stopping rule achieves the first-order scaling of \eqref{eqn:scaling_M} when $d$ is set to be $|\cN|$. This simultaneous $d$-th alarm rule requires each sensor to send a $Q$-bit signal constantly through the noiseless link to the fusion center.
To reduce the demanding bandwidth and energy requirements, another family of stopping rules, called multi-shot $d$-th alarm $\tau_{(d)}^m$, is proposed in Section~\ref{subsec:stop_rule_M}. This rule is more economic in that it only requires the sensor sending $\lceil\log_2Q\rceil$-bit signal occasionally. In what follows, we present the asymptotic performance of the two proposed families of rules and refer the reader to Section~\ref{subsec:stop_rule_M} for their proofs.
\begin{theorem}\label{thm:scaling_M_achiv}
    With the wort-case mean time to a false alarm or isolation no smaller than than $\gamma$, we have\\
    (a) among the proposed simultaneous $d$-th alarm rule $\tau_{(d)}^s$, the best first-order asymptotic worst-case mean detection delay is achieved when $d=|\cN|$ and is given by
    \begin{equation}
        \mathcal{D}[\tau_{(|\cN|)}^s] \lesssim \frac{\log\gamma}{(|\cN|-M)I^*}, \quad\text{as $\gamma\rightarrow\infty$};
    \end{equation}
    (b) among the proposed multi-shot $d$-th alarm rule $\tau_{(d)}^m$, the first-order asymptotic worst-case mean detection delay when $d \geq M+1$ is given by
    \begin{equation}
        \mathcal{D}[\tau_{(d)}^m] \lesssim \frac{\log\gamma}{I^*}, \quad\text{as $\gamma\rightarrow\infty$}.
    \end{equation}
\end{theorem}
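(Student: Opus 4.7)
The plan is to prove both upper bounds by the classical CUSUM recipe: calibrate the threshold $h$ so that the worst-case false-alarm/isolation constraint $\A[T]\geq\gamma$ is tight, then read off the detection delay from the post-change drift of the local matrix-CUSUM statistics and maximize over $q\in[Q]$. The shared backbone of both parts is an \emph{asymptotic dominance} statement: under $\Pqgv$ with any $g\in\cG$, each honest sensor's statistic $Y^k_{t,q}=\min_{j\neq q} Y^k_t(q,j)$ has almost-sure drift $I^q=I(q,j^*_q)$, because under Assumption~\ref{asm:I} the inner minimizer is eventually attained at $j^*_q$. Combined with standard CUSUM-crossing concentration \cite{moustakides86,TV02,bounds}, this yields for every honest $k$ that $\inf\{t>\nu:Y^k_{t,q}\geq h\}\leq \nu+h/I^q+o(h)$ a.s., and maximizing over the $|\cN|$ honest sensors (whose observations are mutually independent) preserves this to leading order. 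Taking the worst $q$ then produces the $1/I^*$ factor common to both parts.

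\textbf{Part (a).} I take $\tau^s_{(|\cN|)}$ to be the rule in which every sensor broadcasts at every slot the $Q$-bit vector $(\mathbf{1}\{Y^k_{t,q}\geq h\})_{q\in[Q]}$, and the fusion center declares $\hat q=q$ at the first $t$ at which at least $|\cN|$ sensors simultaneously flag $q$. The delay side is immediate from the dominance statement: all $|\cN|$ honest sensors cross by $\nu+h/I^q+o(h)$ a.s.\ regardless of $g$, giving $\cD[\tau^s_{(|\cN|)}]\leq h/I^*+o(h)$. The false-alarm/isolation side is harder: the worst attack keeps all $M$ compromised sensors permanently flagging a single target $q$, reducing the binding event to the coincident crossing of $|\cN|-M$ independent honest CUSUMs under $P_0$. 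A product-type extension of Wald's identity (in the spirit of Mei \cite{bounds} and the consensus analysis of \cite{BFL16}) shows this coincidence time has expectation $\gtrsim e^{(|\cN|-M)h}/\mathrm{poly}(h)$, and a union bound over the $Q$ targets costs only a constant. Setting $h=\log\gamma/(|\cN|-M)+o(\log\gamma)$ therefore secures $\A[\tau^s_{(|\cN|)}]\geq\gamma$ and yields $\cD[\tau^s_{(|\cN|)}]\leq \log\gamma/((|\cN|-M)I^*)+o(\log\gamma)$; a routine check that smaller $d$ enforces a larger threshold confirms that $d=|\cN|$ is indeed the best choice within the family.

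\textbf{Part (b).} For $\tau^m_{(d)}$ I take the rule in which each honest sensor reports only at times when some $Y^k_{t,q}$ first exceeds $h$ (with a local reset after the report), sending the $\lceil\log_2 Q\rceil$-bit label $\hat q^k=\argmin_q Y^k_{t,q}$, and the fusion center declares $q$ upon receiving $d$ votes for $q$. The attacker can inject the $M$ compromised votes for any target $q$ at time $t=1$, so the binding false-isolation event is the arrival of the $(d-M)$th honest report; the first-crossing time of a single honest CUSUM under $P_0$ has expectation $\sim e^h$, so with $1\leq d-M\leq|\cN|$ the $(d-M)$th arrival remains $\Theta(e^h)$, and $h=\log\gamma+o(\log\gamma)$ secures $\A[\tau^m_{(d)}]\geq\gamma$. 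On the delay side, $d\geq M+1$ means only a single honest report after $\nu$ is required; the dominance statement guarantees the first honest crossing by $\nu+h/I^q+o(h)$, so $\cD[\tau^m_{(d)}]\leq h/I^q+o(h)=\log\gamma/I^*+o(\log\gamma)$ after taking the worst $q$.

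\textbf{Main obstacle.} The hardest step in both parts is the worst-attack false-isolation analysis. For (a) it demands a multi-sensor coincident-crossing tail bound together with a justification that a stationary attack (compromised sensors fixing one target $q$) is asymptotically worst among all $g\in\cG$; the cleanest route is to mirror the mixture-of-likelihoods symmetrization used in the binary converse in Section~\ref{sec:tQCD}. For (b) one must additionally rule out attacks that split compromised votes across distinct hypotheses or stagger them in time to induce an earlier false isolation. Once these worst-attack reductions are in place, the delay side is routine, hinging only on the asymptotic dominance of $Y^k_{t,q}$ by $I(q,j^*_q)$ guaranteed by Assumption~\ref{asm:I}(ii)--(iii), and the threshold calibrations above close each bound.
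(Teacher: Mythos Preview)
Your overall strategy---calibrate $h$ via a worst-attack false-alarm/isolation lower bound, then read off the delay from the per-sensor drift $I^q$ using the asymptotic-dominance fact (the paper's Lemma~\ref{lem:sigmas equal}), and finally choose $h\sim\log\gamma/(|\cN|-M)$ for (a) and $h\sim\log\gamma$ for (b)---is exactly the paper's approach, and your part (a) is essentially correct.

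There is, however, a genuine error in your delay argument for part (b). You write that ``$d\geq M+1$ means only a single honest report after $\nu$ is required.'' This is backwards: for the \emph{worst-case} delay the compromised sensors refuse to vote for the true $q$ (Proposition~\ref{Thm_multi_finite_h} makes this explicit---they output $H_0$), so the fusion center must collect $d$ \emph{honest} votes, not $d-M$ and certainly not one. Your conclusion $\cD[\tau^m_{(d)}]\leq h/I^q+o(h)$ happens to be correct, but for a reason you have not stated: the $d$th order statistic among $|\cN|$ independent honest crossing times is \emph{also} $h/I^q$ to first order, with $d$ affecting only the $\sqrt{h}$ term (Theorem~\ref{Thm_QaryDelay}, via \cite{BF16}). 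Without this observation your delay bound for (b) does not follow.

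Two smaller points. Your description of $\tau^m_{(d)}$ (one label $\argmin_q Y^k_{t,q}$ with local reset) is not the paper's rule; in the paper each sensor reports \emph{every} hypothesis $H_q$ once at its own acceptance time $\sigma^q_k(h)$, with no reset---this is precisely the ``soft'' matrix CUSUM that avoids the undecidable event. And you overestimate the difficulty of the false-alarm side: no symmetrization from Section~\ref{sec:tQCD} is needed. The paper (Theorems~\ref{Thm_one_QaryFalse} and~\ref{CoroSimu_QaryFalse}) uses a direct union bound over size-$(d-M)$ subsets of honest sensors together with the single-sensor tail $\mathsf P(Y^1_s(q^*,0)\geq h)\leq e^{-h}$; that all $M$ compromised sensors concentrate on one target $\hat q$ is the worst attack is immediate, since splitting or staggering their votes can only reduce the vote count for each candidate.
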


Combining the results in Theorems \ref{thm:scaling_M} and \ref{thm:scaling_M_achiv}, we arrive at the following result of the optimal scaling for multi-hypothesis BDQCD with $Q+1$ hypotheses.
\begin{corollary}\label{thm:opt_scaling_M_achiv}
For multi-hypothesis BDQCD with $Q+1$ hypotheses and the number of honest sensors $|\cN|\geq M+1$, if the noiseless link of each sensor can support (at least) $Q$ bits, the first-order asymptotic worst-case mean detection delay of an optimal stopping rule $T^*$ subject to $\A[T^*] \geq \gamma$ is precisely
 \begin{equation}
        \mathcal{D}[T^*] \sim \frac{\log\gamma}{(|\cN|-M)I^*}, \quad\text{as $\gamma\rightarrow\infty$}.
    \end{equation}
Moreover, if $|\cN|=M+1$, the optimal scaling
 \begin{equation}
        \mathcal{D}[T^*] \sim \frac{\log\gamma}{I^*}, \quad\text{as $\gamma\rightarrow\infty$},
    \end{equation}
    can be achieved by a stopping rule that requires only  $\lceil\log_2Q\rceil$-bit links.
 \end{corollary}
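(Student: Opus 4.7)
The plan is to prove Corollary~\ref{thm:opt_scaling_M_achiv} by combining the converse of Theorem~\ref{thm:scaling_M} with the matched achievability bounds of Theorem~\ref{thm:scaling_M_achiv}, since both ingredients are already in hand. The only work is to check that each achievability rule attains precisely the rate at which the converse saturates, under the stated bandwidth budget.

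First I would handle the general statement under the $Q$-bit link budget. For $|\cN| \geq M+1$ and any rule with $\mathcal{A}[T] \geq \gamma$, Theorem~\ref{thm:scaling_M} asserts $\mathcal{D}[T] \gtrsim \log\gamma / ((|\cN|-M) I^*)$. On the achievability side, the simultaneous $d$-th alarm rule $\tau_{(|\cN|)}^s$ from Theorem~\ref{thm:scaling_M_achiv}(a), which by its construction in Section~\ref{subsec:stop_rule_M} transmits a $Q$-bit indicator vector (one coordinate per post-change hypothesis) and therefore fits within a $Q$-bit link, attains $\mathcal{D}[\tau_{(|\cN|)}^s] \lesssim \log\gamma / ((|\cN|-M) I^*)$. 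Combining $\gtrsim$ with $\lesssim$ yields the first $\sim$ statement in the corollary.

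Next I would specialize to $|\cN| = M+1$, in which case $|\cN|-M = 1$ and the converse collapses to $\mathcal{D}[T] \gtrsim \log\gamma / I^*$. Here Theorem~\ref{thm:scaling_M_achiv}(b) becomes applicable: choosing $d = |\cN| = M+1$ satisfies the hypothesis $d \geq M+1$, and produces $\mathcal{D}[\tau_{(M+1)}^m] \lesssim \log\gamma / I^*$ using only $\lceil \log_2 Q \rceil$-bit links, since each sensor need only transmit a tentative hypothesis index drawn from $[Q]$. Matching this against the specialized converse gives the second $\sim$ statement and certifies $\tau_{(M+1)}^m$ as first-order optimal in this regime.

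The only aspect that is not purely a matching exercise is the bandwidth accounting for the two rules constructed in Section~\ref{subsec:stop_rule_M}, namely that the simultaneous $d$-th alarm genuinely fits in $Q$ bits per slot and that the multi-shot $d$-th alarm fits in $\lceil \log_2 Q \rceil$ bits per transmission. Both are design consistency checks rather than mathematical obstacles, so I expect no real difficulty; once they are verified, the corollary follows immediately from the two theorems.
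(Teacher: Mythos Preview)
Your proposal is correct and matches the paper's approach exactly: the paper presents Corollary~\ref{thm:opt_scaling_M_achiv} as an immediate consequence of combining the converse in Theorem~\ref{thm:scaling_M} with the achievability bounds in Theorem~\ref{thm:scaling_M_achiv}, without any additional argument. Your bandwidth accounting for the two rules also agrees with the discussion in Section~\ref{subsec:stop_rule_M}.
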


\subsection{Proof of Theorem \ref{thm:scaling_M}, the converse for multi-hypothesis BDQCD in Theorem \ref{thm:scaling_M_achiv}}\label{subsec:converse}

For extending the converse from the binary case to the
multi-hypothesis case with $Q+1$ hypotheses, we encounter two main
challenges. First, the asymptotically worst attack adopted in the
binary case, namely the reverse attack, cannot be
straightforwardly applied. As there are $Q$ post-change
distributions, we have to carefully choose one of them for
swapping in order to make the attack strategy asymptotically
worst. Second, after we manage to construct the attack strategy
and complete the transformation, there are $Q+1$ hypotheses in the
transformed QCD and hence $Q^2$ LLRs (see
\eqref{eq_CUSUM_element}) to be tracked in the asymptotically
optimal matrix CUSUM procedure. It is difficult to make each LLR
relate to only observations of (a subset of) honest sensors as we
have done in \eqref{eq_mixtureCUSUM1} for the binary case.

In what follows, to solve the first issue, we modify the reverse attack in upcoming Section \ref{sec:mul_rev_attack} by swapping the $P_0$ with the post-change distribution that is closest to $P_0$ in the sense of having the minimum KL divergence. To circumvent the second issue, we abandon the approach of evaluating the optimal detecting procedure in Section~\ref{sec:est_converse} and directly perform the delay analysis based on \cite[Theorem 2]{nikiforov95}.

\subsubsection{The reverse attack for $(Q+1)$-hypotheses}\label{sec:mul_rev_attack}
To prove this converse, under the true hypothesis $H_q, q\in[Q]^+$ (defined in Section \ref{subsec:M_BDQCD}), we define $P_{q,1}=P_{q}$. The distribution $P_{q,2}$ for fake observation in the proposed attack strategy is constructed as follows. Let $q_m=\arg_{q \in [Q]^+} \min I_q$, with the corresponding $j^*_{q_m}$ defined in Assumption \ref{asm:I} (ii) and (iii). We define
\begin{equation}
    P_{q,2} = \begin{cases}
    P_{j_q^*}, \quad &\hbox{if $q\neq j^*_{q_m}$,} \\
    P_{q_m},\quad &\hbox{if $q= j^*_{q_m}$.} \label{eq_mul_rev_attacksym}
    \end{cases}
\end{equation}
The main intuition behind this choice is that we want the compromised sensors to follow the distribution that is closest to the true one under the KL divergence. This is done in the first case above. For the exception in the second case, i.e., $q= j^*_{q_m}$, it is for having the symmetry $P_{j^*_{q_m},2}=P_{q_m}$ and $P_{q_m,2}=P_{j_{q_m}^*}$, which will become handy later in the proof.


\subsubsection{Genie-aided Byzantine centralized $(Q+1)$-hypotheses QCD}\label{sec:mul_genie}
\textcolor{black}{To establish the tight converse, we assume that a genie gives the fusion center the identities of $|\cN|-M$ out of $|\cN|$ honest sensors. Without loss of generality, we let these $|\cN|-M$ sensors have the indices $2M+1,\ldots,K$, respectively. The genie also provides the fusion center with the observations used at each sensor for generating its local reports. Similarly to \eqref{eqn:prod_distribution} in the binary case, with the help of this genie, the problem becomes the genie-aided BCQCD with distribution for each $q\in[Q]^+$ as
\begin{equation}\label{eq:til_P_s}
    P_{q,s}(\mathbf{X}_t)=\prod_{k'=1}^{2M} P_{q,s(k')}(X_t^{k'})\prod_{2M+1}^{K} P_{q,1}(X_t^{k}).
\end{equation}
For this genie-aided BCQCD, replacing $g\in\mathcal{G}$ with $s\in\mathcal{S}$ in \eqref{eq:ess_delay_multiple} and \eqref{eq:falseIsolationTime_multiple}, we obtain the detection delay and mean time to a false alarm or a false isolation given by
\begin{equation} \label{eq_composite_delay_multi}
    \cD_{\textrm{genie}}[T] = \sup_{q \in [Q]} \sup_{s\in\mathcal{S},\nu}\esssup \mathbb{E}^{q,s}_{\nu}[ (T-\nu)^+|\mathbf{X}_1^{\nu}],
\end{equation}
and
\begin{equation} \label{eq_composite_false_multi}
    \A_{\textrm{genie}}[T] = \inf_{q \in [Q]^+} \inf_{s\in\mathcal{S}} \inf_{\hat{q}\in[Q]\setminus\{q\}}\mathbb{E}_0^{q,s}[ T^{\hat{q}} ],
\end{equation}
respectively.
}
\subsubsection{Transformed Centralized $(Q+1)$-hypotheses QCD}\label{sec:mul_tQCD}
We can now follow the proof of the binary case to transform the genie-aided BCQCD into a multiple-hypothesis QCD having distributions generalizing \eqref{eq_aux_PDF} as
\begin{equation}\label{eq:til_P}
    \tilde{P}_{q}(\tilde{\mathbf{X}}_t) = \sum_{\mathbf{X}_t:\tau_{2M}(\mathbf{X}_t)=\tilde{\mathbf{X}}_t} P_{q,s}(\mathbf{X}_t),
\end{equation}
where $q\in[Q]^+$ and $\tau_{2M}$ is again the masked ordering map. Let us define the KL divergence between transformed distributions $\tilde{P}_{q}$ and $\tilde{P}_{j}$ for $q\in[Q]$ and $j\in[Q]^+$ as
\begin{equation}\label{eq:til_I}
    \tilde{I}(q,j) = \int\log\left(\frac{\tilde{P}_{q}(\tilde{\mathbf{x}})}{\tilde{P}_{j}(\tilde{\mathbf{x}})}\right)\tilde{P}_{q}(\tilde{\mathbf{x}})d\tilde{\mathbf{x}}.
\end{equation}
Moreover, let $\tilde{I}_q^*=\min_{j \in [Q]^+ \setminus\{q\}} \tilde{I}(q,j)$ and
\begin{equation}\label{eq:til_I*}
    \tilde{I}^* = \min_{q\in[Q]} \tilde{I}_q^* = \min_{q\in[Q]}\min_{j \in [Q]^+ \setminus\{q\}} \tilde{I}(q,j).
\end{equation}

Note that in the binary case \eqref{eq_bin_bound_Trans}, we have shown that the transformed QCD would have the same detection delay with the genie-aided BCQCD for any post-change distribution $P_1$. Extending from the binary case \eqref{eq_composite_delay} to the multi-hypothesis case \eqref{eq_composite_delay_multi}, we just need to take an additional supremum over all post-change distributions $P_{q}$, $q\in[Q]$; therefore, the equivalence
\begin{align}
    \cD_{\textrm{genie}}[T_{\textrm{genie}}^*] &= \cD_{\textrm{trans}}[T_\textrm{trans}^*]  \nonumber \\
    &:=\sup_{q \in [Q]} \sup_{\nu}\esssup \mathbb{E}^{q}_{\nu}\left[ \left( T_\textrm{trans}^*(\{\tilde{\mathbf{X}}_t\}_{t\geq 1})-\nu \right)^+ \bigg| \tilde{\mathbf{X}}_1^{\nu} \right],
\end{align}
still holds under multiple hypotheses for optimal rules $T_\textrm{genie}^*$ and $T_\textrm{trans}^*$  in genine-aided BCQCD and transformed QCD, respectively.

Regarding the mean time to a false alarm or false isolation, it is a bit more involved than the proof for detection delay since in addition to false alarm considered in Section \ref{sec:tQCD}, we also need to deal with false isolation. {We again first focus on a masked symmetric rule $T$ satisfying $T(\{\mathbf{X}_t\}_{t\geq 1}) = \tilde{T}(\{\tau_{2M}(\mathbf{X}_t)\}_{t\geq 1})$ for some rule $\tilde{T}$. We aim to prove
\begin{equation} \label{eq_multi_trans_false}
\A_{\textrm{genie}}[T]=\A_{\textrm{trans}}[\tilde{T}]
\end{equation}
Consider $\A_{\textrm{genie}}[T]$ in \eqref{eq_composite_false_multi}, for each $q\in[Q]^+$ and $s\in\mathcal{S}$, we have
\begin{align}
  &\inf_{\hat{q}\in[Q]\setminus\{q\}}\mathbb{E}^{q,s}_{0}[ T^{\hat{q}} ]= \inf_{\hat{q}\in[Q]\setminus\{q\}}\mathbb{E}^{q,s}_{0} \left[ \inf_{\rho\geq 1} \frac{T_{\rho}}{1_{\{\hat{q}_{\rho}=\hat{q}\}}} \right] \nonumber \\
  &=\inf_{\hat{q}\in[Q]\setminus\{q\}}\sum_{z=0}^{\infty}1-\mathbb{P}\left( \inf_{\rho\geq 1} \frac{T_{\rho}}{1_{\{\hat{q}_{\rho}=\hat{q}\}}} \leq z \right), \label{eq:symmetric_false_iso}
  \end{align}
where the first equality is from \eqref{eq:stopping_rule_multi} and the convention $a/0:=\infty$ for positive $a\in\mathbb{R}$. Observe that
\begin{align}\label{eq:P_Ez}
    \mathbb{P}\left( \inf_{\rho\geq 1} \frac{T_{\rho}}{1_{\{\hat{q}_{\rho}=\hat{q}\}}}\leq z \right)&=\int 1_{\left\{\inf_{\rho\geq 1} \frac{T_{\rho}(\{\mathbf{X}_t\}_{t\geq 1})}{1_{\{\hat{q}_{\rho}=\hat{q}\}}}\leq z\right\}} \prod_{t=1}^z P_{q,s}(\mathbf{x}_t) d\mathbf{x}_1^z \nonumber \\
    &= \int 1_{\left\{\inf_{\rho\geq 1} \frac{\tilde{T}_{\rho}(\{\tau_{2M}(\mathbf{X}_t)\}_{t\geq 1})}{ 1_{\{\hat{q}_{\rho}=\hat{q}\}}}\leq z\right \}}  \prod_{t=1}^z P_{q,s}(\mathbf{x}_t) d\mathbf{x}_1^z.
\end{align}
Plugging \eqref{eq:P_Ez} into \eqref{eq:symmetric_false_iso}, then \eqref{eq:symmetric_false_iso} equals to
\begin{align}
     &\inf_{\hat{q}\in[Q]\setminus\{q\}}\sum_{z=0}^{\infty}1-\int 1_{\left\{\inf_{\rho\geq 1} \frac{\tilde{T}_{\rho}(\{\tilde{\mathbf{X}}_t\}_{t\geq 1})}{ 1_{\{\hat{q}_{\rho}=\hat{q}\}}}\leq z\right \}} \prod_{t=1}^z \tilde{P}_{q,s}(\tilde{\mathbf{x}}_t) d\tilde{\mathbf{x}}_1^z \nonumber \\
    &= \inf_{\hat{q}\in[Q]\setminus\{q\}}\sum_{z=0}^{\infty}1- \mathbb{P}\left( \inf_{\rho\geq 1} \frac{\tilde{T}_{\rho}}{1_{\{\hat{q}_{\rho}=\hat{q}\}}}\leq z \right) \nonumber \\
    &= \inf_{\hat{q}\in[Q]\setminus\{q\}}\mathbb{E}^{q,s}_{0}\left[ \inf_{\rho\geq 1} \frac{\tilde{T}_{\rho}}{1_{\{\hat{q}_{\rho}=\hat{q}\}}}\right ] \nonumber \\
    &= \inf_{\hat{q}\in[Q]\setminus\{q\}}\mathbb{E}^{q,s}_{0}[ \tilde{T}^{\hat{q}} ].
\end{align}
Taking infimum over $q\in[Q]^+$ and $s\in\mathcal{S}$ on the above shows \eqref{eq_multi_trans_false} for a masked symmetric rule $T$. In Lemma \ref{lem:general_mult_false}, we show that it suffices to consider masked symmetric rules as for any general rule $T'$, there exists a symmetrized rule which is not worse than it in $\A_{\textrm{genie}}[T']$.

We have extended \eqref{eq_bin_bound_Trans_alarm} and again shown that
$\A_{\textrm{genie}}[T_{\textrm{genie}}^*]=\A_{\textrm{genie}}[T_{\textrm{trans}}^*]$
for optimal rules $T_{\textrm{genie}}^*$ and
$T_{\textrm{trans}}^*$ in the multi-hypothesis cases and completed the transformation. To
establish the converse, we now provide a converse to the
asymptotic performance of the transformed QCD in Lemma \ref{lem:extend_Nikogonov}. That is, the first-order scaling of an optimal stopping rule $T^*_{\mathrm{QCD}}$ with $\A_{\textrm{trans}}[T^*_{\mathrm{QCD}}] \geq \gamma$ is given by
\begin{equation} \label{Delay_tildeI*}
    \mathcal{D}_{\textrm{trans}}[T^*_{\mathrm{QCD}}]\sim \frac{\log\gamma}{\tilde{I}^*}.
\end{equation}

\subsubsection{Establishing the converse by evaluating the transformed delay} \label{sec:mul_est_converse}

To establish the converse, we now look into the structure of $\tilde{I}^*$ in  \eqref{Delay_tildeI*}, which is defined in \eqref{eq:til_I*}. First, we note from \eqref{eq:til_P}-\eqref{eq:til_I} that for any pair of hypothesis indexes $(q,j)$
\begin{align}\label{eq:til_I_qj}
    &\tilde{I}(q,j) 
    = \int \log\left(\frac{\sum_{\mathbf{x}:\tau_{2M}(\mathbf{x})=\tilde{\mathbf{x}}} P_{q,s}(\mathbf{x})}{\sum_{\mathbf{x}:\tau_{2M}(\mathbf{x})=\tilde{\mathbf{x}}} P_{j,s}(\mathbf{x})}\right)\tilde{P}_{q}(\tilde{\mathbf{x}}) d\tilde{\mathbf{x}} \nonumber \\
    &\overset{(a)}{=} \int \log\left(\frac{\sum_{s\in\mathcal{S}} P_{q,s}(\tilde{\mathbf{x}})\frac{2M!}{\binom{2M}{M}}}{\sum_{s\in\mathcal{S}} P_{j,s}(\tilde{\mathbf{x}})\frac{2M!}{\binom{2M}{M}}}\right) \tilde{P}_{q}(\tilde{\mathbf{x}}) d\tilde{\mathbf{x}} \nonumber \\
    & \overset{(b)}{=} \int \log\left(\frac{\sum_{s\in\mathcal{S}} \prod_{k'=1}^{2M} P_{q,s(k')}(\tilde{x}^{k'})\frac{2M!}{\binom{2M}{M}} }{ \sum_{s\in\mathcal{S}} \prod_{k'=1}^{2M} P_{j,s(k')}(\tilde{x}^{k'})\frac{2M!}{\binom{2M}{M}}}\right) \tilde{P}_{q}(\tilde{\mathbf{x}}) d\tilde{\mathbf{x}}  \nonumber \\
    & \hspace{0.2in}+\int \log\left(\frac{\prod_{2M+1}^{K} P_{q,1}(\tilde{x}^k)  }{\prod_{2M+1}^{K} P_{j,1}(\tilde{x}^k)}\right) \tilde{P}_{q}(\tilde{\mathbf{x}}) d\tilde{\mathbf{x}}  \nonumber \\
    & \overset{(c)}{=} \int \log\left(\frac{\hat{P}_{q,s}(\tilde{x}^1,\ldots,\tilde{x}^{2M})}{\hat{P}_{j,s}(\tilde{x}^1,\ldots,\tilde{x}^{2M})}\right) \hat{P}_{q,s}(\tilde{x}^1,\ldots,\tilde{x}^{2M}) d\tilde{x}^1\ldots d\tilde{x}^{2M}  \nonumber \\
    & \hspace{0.2in}+\int \log\left(\frac{\prod_{2M+1}^{K} P_{q,1}(\tilde{x}^k)  }{\prod_{2M+1}^{K} P_{j,1}(\tilde{x}^k)}\right) \tilde{P}_{q}(\tilde{\mathbf{x}}) d\tilde{\mathbf{x}}  \nonumber \\
    &\overset{(d)}{\geq} \int \log\left(\frac{\prod_{2M+1}^{K} P_{q,1}(\tilde{x}^k)  }{\prod_{2M+1}^{K} P_{j,1}(\tilde{x}^k)}\right) \prod_{2M+1}^K P_{q,1}(\tilde{x}^k) d\tilde{x}^{2M+1}\ldots d\tilde{x}^{K}\nonumber \\
    & = \sum_{k=2M+1}^K\int \log\left(\frac{P_{q,1}(x^k)  }{ P_{j,1}(x^k)}\right) P_{q,1}(x^k) dx^k\nonumber \\
    &\overset{(e)}{=}  (|\cN|-M) I(q,j),
\end{align}
where (a) follows from the same steps reaching \eqref{eqn:cusum_to_mcusum}, that is,
\begin{align} \label{eq_equi_transformedPDF}
    \sum_{\mathbf{x}:\tau_{2M}(\mathbf{x})=\tilde{\mathbf{x}}} P_{q,s}(\mathbf{x}) = \sum_{s\in\mathcal{S}}P_{q,s}(\tilde{\mathbf{x}})\frac{2M!}{\binom{2M}{M}};
\end{align}
(b) is because of the independence in \eqref{eq:til_P_s}; (c) holds by $\tilde{P}_{q}(\tilde{\mathbf{x}})$ equals to \eqref{eq_equi_transformedPDF} and marginalizing $\tilde{x}^{2M+1},\ldots, \tilde{x}^{K}$ out for the first integration, where we define a new PDF
\begin{equation}
    \hat{P}_{q,s}(\tilde{x}^1,\ldots,\tilde{x}^{2M}) = \sum_{s\in\mathcal{S}} \prod_{k'=1}^{2M} P_{q,s(k')}(\tilde{x}^{k'})\frac{2M!}{\binom{2M}{M}};
\end{equation}
(d) follows by noting that the first integration in (c) is the KL divergence between $\hat{P}_{q,s}$ and $\hat{P}_{j,s}$, which is always non-negative \cite{CT91}, together with marginalizing $\tilde{x}^{1},\ldots, \tilde{x}^{2M}$ in the second integration and the fact that the sensors in the third group are always honest in \eqref{eq:til_P} (see \eqref{eq:til_P_s}); and (e) follows from the selection of reverse attack in Section \ref{sec:mul_rev_attack} and definition \eqref{eq_KL_eachsensor}. Now, recall $q_m = \arg_q \min I_q$. For the pair $(q_m,j^*_{q_m})$ defined in Section \ref{sec:mul_rev_attack}, besides the inequality in \eqref{eq:til_I_qj}, we can further show the following equality
\begin{align}\label{eq:til_I_qj*}
    &\tilde{I}(q_m,j^*_{q_m})
    = \int \log\left(\frac{\sum_{s\in\mathcal{S}} P_{q_m,s}(\tilde{\mathbf{x}})}{\sum_{s\in\mathcal{S}} P_{j^*_{q_m},s}(\tilde{\mathbf{x}})}\right) \tilde{P}_{q_m}(\tilde{\mathbf{x}}) d\tilde{\mathbf{x}} \nonumber \\
    &\overset{(a)}{=} \int \log\left(\frac{\prod_{2M+1}^{K} P_{q_m,1}(\tilde{x}^k)  }{\prod_{2M+1}^{K} P_{j^*_{q_m},1}(\tilde{x}^k)}\right)\prod_{2M+1}^K P_{q,1}(\tilde{x}^k) d\tilde{x}^{2M+1}\ldots d\tilde{x}^{K}\nonumber \\
    &=\sum_{k=2M+1}^K\int \log\left(\frac{P_{q_m,1}(x^k)  }{ P_{j^*_{q_m},1}(x^k)}\right) P_{q_m,1}(x^k) dx^k\nonumber \\
    &= (|\cN|-M) I(q_m,j^*_{q_m}) = (|\cN|-M) \min_{q \in [Q]} I_q,
\end{align}
where equality (a) follows from the symmetry enforced in the second case of the reverse attack in \eqref{eq_mul_rev_attacksym} and the steps for reaching \eqref{eq_mixtureCUSUM1}. More specifically, in (\ref{eq:til_I_qj} b)
\[
\sum_{s\in\mathcal{S}} \prod_{k'=1}^{2M} P_{q_m,s(k')}(\tilde{x}^{k'}) =  \sum_{s\in\mathcal{S}} \prod_{k'=1}^{2M} P_{j^*_{q_m},s(k')}(\tilde{x}^{k'})
\]
from $P_{q_m,1}=P_{q_m}=P_{j^*_{q_m},2}$ and $P_{q_m,2}=P_{j_{q_m}^*}=P_{j^*_{q_m},1}$.} Now plugging \eqref{eq:til_I_qj} and \eqref{eq:til_I_qj*} into \eqref{eq:til_I*} shows that
\begin{equation}
    \tilde{I}^* = (|\cN|-M) I^*,
\end{equation}
since $(|\cN|-M) I^* \leq (|\cN|-M) I(q,j) \leq \tilde{I}(q,j), \forall q,j$. Noting that $\mathcal{D}[T]\geq \mathcal{D}_{\textrm{trans}}[T^*_{\mathrm{QCD}}]\sim\frac{\log\gamma}{(|\cN|-M) I^*}$ under $\mathcal{A}[T] \geq r$, as $\gamma\rightarrow\infty$, it completes the proof of the converse part.

\subsection{Proof of the acheivability for multi-hypothesis BDQCD in Theorem \ref{thm:scaling_M_achiv}}\label{subsec:stop_rule_M}
Now, we first describe the local decision rule at each honest sensor, and then propose two global fault-tolerant decision rules for the acheivability in part (a) and (b) of Theorem \ref{thm:scaling_M_achiv} respectively.
\subsubsection{Local decision rule: ``Soft" Matrix CUSUM}
Since the honest sensors are not allowed to cooperate with each other, it is natural to adopt the matrix CUSUM algorithm reviewed in Section \ref{subsec:M_BDQCD}. Note that for the original matrix CUSUM in \cite{TV02}, since there is only one honest node, it makes perfect sense for the procedure to make a hard decision and terminate after the alarm; however, in our setting, the task is not done yet until the fusion center has determined the occurrence of the event. Therefore, we adapt the matrix CUSUM procedure to the ``soft" version as follows. Whenever a $Y^k_{t,q}$ under \eqref{eq_matrix_CUSUM_matirx} exceeds the threshold $h$ at time index $t$, the hypothesis $H_{q}$ is {\it softly} decided by informing the fusion center that this hypothesis is {\it acceptable} at the sensor $k$. Now each honest sensor may keep monitoring the event and report multiple hypotheses to the fusion center. Later in Sec. \ref{subsec_fusionrules_analy}, this soft version will help us resolve the ``undecidable event", which disables the fusion center to make a conclusive decision.

Formally, for the soft matrix CUSUM procedure, a hypothesis $H_{q}$ is acceptable by the node $k$ at time
\begin{equation}  \label{eq_MatrixCUSUMacceptabletime}
    \sigma_k^{q}(h) := \inf \left\{ t\in\mathbb{N}: Y^k_{t,q}\geq h \right\}.
\end{equation}
In contrast, for the original matrix CUSUM \cite{TV02}, a hypothesis $H_q$ is hard decided at time $\sigma_k^{q}(h)$ if $\sigma_k^{q}(h)$ equals to
\[
 \sigma_k(h) : =  \min_{\hat{q} \in [Q]}\sigma_k^{\hat{q}}(h) \; \mbox{and} \; q=\arg \max_{\hat{q} \in [Q]} \left( Y^k_{t,\hat{q}}|_{t=\sigma_k(h)} \right).
\]

\subsubsection{Global fault-tolerant decision rules} \label{subsec_fusionrules}
As a baseline, the one-shot rule which uses the original matrix CUSUM is first introduced
as \\
\noindent One-shot $d$-th alarm: This family of rules is a direct extension of the one-shot rule for the binary case in \cite{BFL16}, \cite{BF16} to the multi-hypothesis setting. Each sensor adopts the original matrix CUSUM \cite{TV02} as its local report mechanism and reports the first acceptable non-zero hypothesis as soon as the sensor finds it. The fusion center declares that an abrupt event has occurred at the first time that a hypothesis, say $H_q$, has received $d$ local reports. It also declares that the hypothesis $H_q$ is true. 

Now, we propose two rules based on the ``soft" matrix CUSUM. \\
\noindent {\bf i) Multi-shot $d$-th alarm $\tau^m_{(d)}(h)$}: This family of rules requires each sensor to adopt the soft version of matrix CUSUM and to alarm whenever a hypothesis $H_{\hat{q}}$, $\hat{q}\in [Q]$, is acceptable. Formally, for each $k\in \cN$, sensor $k$ reports $H_{\hat{q}}$ at the time index $\sigma_k^{\hat{q}}(h)$, for every $\hat{q}\in[Q]$. In this reporting mechanism, we stipulate that for each sensor, every hypothesis can be reported at most once, and a reported hypothesis cannot be withdrawn. In other words, once reported by a sensor, a hypothesis will be promoted as a candidate by that sensor ever since. If a tie happens at an honest node $k$, then all the hypothesis indexes have the same $\sigma_k^{\hat{q}}(h)$ will be reported one after another, starting from the one with the largest $Y_{t,\hat{q}}^k$. For the case where two or more hypotheses have the same $Y_{t,\hat{q}}^k$, we break the tie randomly. Consecutive ties and/or multi-way ties can be easily resolved by equipping each node with a queue of size $Q-1$ and clearing the queue on the first-come first-serve basis. The fusion center declares that an abrupt event has occurred at the first time that a hypothesis, say $H_q$, has been deemed acceptable by $d$ sensors. It also declares that the hypothesis $H_q$ is true.\\
\noindent {\bf ii) Simultaneous $d$-th alarm $\tau^s_{(d)}(h)$}: Each sensor constantly transmits $Q$ bits local decision at time index $t$ to indicate whether $H_{\hat{q}}$ is acceptable, $\forall \hat{q} \in [Q]$. The fusion center declares that an abrupt event of type $q$ has occurred at the first time that a hypothesis, say $H_q$, has been simultaneously accepted by no less than $d$ sensors. 

We note that the three families of rules have different bandwidth and/or energy requirements. The one-shot scheme is the most bandwidth- and energy-efficient one as it requires each link to support $\lceil\log_2Q\rceil$ bits and this link is used only once. The multi-shot scheme also requires links to support $\lceil\log_2Q\rceil$ bits, but each link may be used up to $Q$ times. As for the simultaneous rule, it requires each link to support $Q$ bits and each link is constantly used. Also, it is worth noting that the soft matrix CUSUM reduces to the original CUSUM adopted in \cite{BFL16} when $Q=1$, i.e. binary hypothesis. Thus, the proposed multi-shot and simultaneous $d$-th alarm include the one-shot and voting rules in \cite{BFL16}, \cite{BF16} as special cases, respectively. Moreover, depending on the application at hand, the fusion center can opt to stop only once or multiple times. When the fusion center chooses to stop only once as \cite{BFL16}, $T_\rho=\infty$ in \eqref{eq:stopping_rule_multi} for $\rho>1$. For the scenario where the fusion center makes multiple alarms, it restarts with the same global decision rule after each alarm at $T_\rho$. Our upcoming Proposition \ref{Thm_multi_finite_h} applies to both scenarios.

\subsubsection{Performance analysis} \label{subsec_fusionrules_analy}
We now carry out the worst-case analysis on the performance of the multi-shot $d$-th alarm and simultaneous $d$-th alarm rules. For the one-shot $d$-th alarm, we point out a notable difference from the binary counterpart \cite{BFL16} which significantly degrade the performance from the converse in Theorem \ref{thm:scaling_M}. The \textit{undecidable event} may happen: it is possible that there is no non-zero hypothesis index with enough local alarms for making a decision, even though all honest sensors have raised alarms; thereby, the detection delay is infinity.
Unfortunately, even more advanced multi-shot $d$-th alarm can only achieve the converse when $|\cN|=M+1$ from the upcoming analysis. When $|\cN|=M+1,$ if the one-shot $d$-th alarm is used, the compromised sensors can easily trigger the undecidable event.

To further characterize  \eqref{eq:ess_delay_multiple} and
\eqref{eq:falseIsolationTime_multiple} for the two proposed rules,
we will prove asymptotic dominance results in upcoming Lemma
\ref{lem:sigmas equal}, which greatly simplifies the delay
analysis. Intuitively, although there are multiple sensors and
$Q+1$ hypotheses, for each honest sensor being considered, we only
have to examine the statistics between the $q$-th hypothesis and
the one that is ``\textit{closest}" to $q$, for every $q \in [Q]$.
Note that this intuition also comply with our asymptotic converse.
Before introducing Lemma \ref{lem:sigmas equal} and the complete
analysis, some definitions and a proposition regarding the
compromised sensors will be provided first. Similarly to
\cite{BFL16}, from \eqref{eq_MatrixCUSUMacceptabletime}, we define
the ordered time indexes
$\sigma^q_{(1)}(h)\leq\ldots\leq\sigma^q_{(|\cN|)}(h)$,  for all
$q \in [Q]$, over $|\cN|$ honest sensors as if there is no
compromised sensor, for softly deciding hypothesis $H_q$ (cf. the
$q$th row of CUSUM matrix \eqref{eq_matrix_CUSUM_matirx})
We also let $S_\ell^q(h)$ be the first time that the hypothesis $H_q$ is simultaneously softly-decided by $\ell$ honest sensors, defined as
\begin{align} \label{eq_simuMatrixCUSUMacceptabletime}
\inf\!\left\{t\in\N \!: \! Y^{k}_{t,q} \ge h\ \forall k\in \! \mathcal{L},\ \!\!\mbox{for}~\mbox{some}\!\ \mathcal{L} \! \subset \! [\cN],\ \!\!|\mathcal{L}|=\ell\right\}.
\end{align}
Finally, we will use $\Eqv$ to represent the expectation when the change with hypothesis index $q$ happens at time $\nu$ and the compromised sensors are absent.

To continue the worst-case analysis in  \eqref{eq:ess_delay_multiple} and \eqref{eq:falseIsolationTime_multiple}, recall that all the compromised sensors know the actual $\nu$ and the actual hypothesis $q$. They can then collaboratively attack/confuse the fusion center. Thus, it is obvious that choosing any $d \leq M$ is bad for false alarm or false isolation in \eqref{eq:falseIsolationTime_multiple}, while any $d > |\cN|$ is bad for detection delay in \eqref{eq:ess_delay_multiple}. We therefore confine the choice of $d$ to some reasonable region and obtain the following result.

\begin{proposition} \label{Thm_multi_finite_h}
Fix $h>0$. For any positive integer $Q$,
and $d\in\{M+1,...,|\cN|\}$, for multi-shot $d$-th alarm, we have
\begin{align}
\A[\tau^m_{(d)}(h)] &\geq \min_{q\in[Q]^+} \min_{\hat{q}\in [Q]\setminus\{q\}}\Eq[\sigma^{\hat{q}}_{(d-M)}(h)], \label{eq_MultishotWorstI}\\
\cD[\tau^m_{(d)}(h)] &\leq  \max_{q} \Eq[\sigma^q_{(d)}(h)]+Q-1; \label{eq_MultishotWorst}
\end{align}
while for simultaneous $d$-th alarm
\begin{align}
\A[\tau^s_{(d)}(h)] &\geq\min_{q\in[Q]^+}\min_{\hat{q}\in[Q]\setminus\{q\}} \Eq[S^{\hat{q}}_{d-M}(h)]; \label{eq_SimutWorstI}\\
\cD[\tau^s_{(d)}(h)] &\leq  \max_{q} \Eq[S^q_d(h)]. \label{eq_SimutWorst}
\end{align}
\end{proposition}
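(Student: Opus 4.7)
The plan is to handle the four bounds using a single structural observation: the $M$ compromised sensors can contribute at most $M$ ``accepts'' for any fixed hypothesis, while the honest sensors' acceptance times depend only on their own observations of $P_q$ (or $P_0$) and are therefore independent of the attack strategy $g$. This reduces each bound to an order-statistic calculation over the honest sensors only. I would treat the false-alarm/isolation lower bounds \eqref{eq_MultishotWorstI}, \eqref{eq_SimutWorstI} and the detection-delay upper bounds \eqref{eq_MultishotWorst}, \eqref{eq_SimutWorst} as two separate blocks.

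For the false-alarm/isolation bounds, fix a true hypothesis $q\in[Q]^+$ and a wrong target hypothesis $\hat q\in[Q]\setminus\{q\}$. Under any attack $g$, a declaration of $\hat q$ by the fusion center requires $d$ accepts for $\hat q$, of which at most $M$ can come from the compromised sensors, so at least $d-M$ must come from honest ones. Hence $T^{\hat q}\ge \sigma^{\hat q}_{(d-M)}(h)$ pointwise in the multi-shot case (permanent votes accumulate) and $T^{\hat q}\ge S^{\hat q}_{d-M}(h)$ pointwise in the simultaneous case (instantaneous co-acceptance). Taking expectations under $\Eqg$ and exploiting independence of the honest CUSUM trajectories from $g$, one gets $\Eqg[T^{\hat q}]\ge \Eq[\sigma^{\hat q}_{(d-M)}(h)]$ (resp.\ $\Eq[S^{\hat q}_{d-M}(h)]$). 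Taking the infimum over $g$, then over the pair $(q,\hat q)$, yields the two lower bounds. Tightness is witnessed by the attack in which every compromised sensor constantly signals ``$\hat q$ acceptable''.

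For the detection-delay upper bounds, fix a true hypothesis $q\in[Q]$, a change time $\nu$, and a realization of $\mathbf{X}_1^\nu$. The extremal attack for delay is the ``silent'' one, in which compromised sensors never declare anything: any non-silent behavior can only inflate the accept counts at the fusion center and therefore weakly decreases the first alarm time $T_1$. Under the silent attack, once $d$ honest sensors have accepted $H_q$, the fusion center must have declared some hypothesis, so $T_1$ is bounded above by $\sigma^q_{(d)}(h)$ (multi-shot) or $S^q_d(h)$ (simultaneous). For multi-shot, the per-sensor FIFO queue holds at most $Q-1$ pending hypotheses, so the transmission of the $H_q$ report is delayed by at most $Q-1$ slots, which accounts for the additive term. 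The last step is a standard Lorden-style essential-supremum argument exploiting the reset of $Y^k_{t,q}$ at the change time $\nu$: among all pre-change conditionings $\mathbf{X}_1^\nu$, the worst case for the relevant row of the CUSUM matrix is the one that leaves all $Y^k_{\nu,q}=0$, and this in turn equals in distribution a fresh start at time $0$, yielding $\esssup \Eqv[(\sigma^q_{(d)}(h)-\nu)^+ \mid \mathbf{X}_1^\nu] \le \Eq[\sigma^q_{(d)}(h)]$. Taking $\sup_g$, $\sup_\nu$, and finally $\sup_q$ delivers the stated maxima.

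The main obstacle I anticipate is the careful justification of the two extremal attack claims, since $g$ is allowed to depend causally on $\mathbf{X}_1^t$ and on the decision rule itself; the monotonicity of accept counts in the attacker's actions is intuitive but needs a coupling between the processes induced by an arbitrary $g$ and by the ``silent'' (resp.\ ``always-accept-$\hat q$'') attack in order to convert pointwise statements such as $T^{\hat q}\ge \sigma^{\hat q}_{(d-M)}(h)$ into bounds on $\Eqg[T^{\hat q}]$. A secondary subtlety is making the multi-shot queueing analysis precise when several hypotheses become acceptable at the same time, which the rule resolves by first-come-first-serve with at most $Q-1$ items ever waiting. Once these two points are pinned down, the remaining ingredients are order-statistic expectations and the standard CUSUM reset argument.
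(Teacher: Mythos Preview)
Your proposal is correct and follows essentially the same approach as the paper's proof: the pigeonhole argument that a declaration of $\hat q$ needs at least $d-M$ honest accepts gives the pointwise lower bounds $T^{\hat q}\ge \sigma^{\hat q}_{(d-M)}(h)$ and $T^{\hat q}\ge S^{\hat q}_{d-M}(h)$; the silent (equivalently, always-$H_0$) attack combined with monotonicity of the accept counts and the CUSUM reset-at-$\nu$ argument gives the delay upper bounds, with the $Q-1$ term arising from the queue/tie mechanism exactly as you describe. The paper's argument is terser (it invokes \cite[Lemma~3]{BFL16} for the Lorden reduction rather than spelling it out), but the structure and key ideas match yours.
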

\begin{IEEEproof}
    See Appendix~\ref{apx:proof_prop}.
\end{IEEEproof}


Based on Proposition \ref{Thm_multi_finite_h}, in what follows, we provide explicit upper bounds on the detection delay. Lemma \ref{lem:sigmas equal} is shown first, which states that for each hypothesis $q\in[Q]$, although there are total $Q+1$ hypotheses, one only has to worry about the one that is ``\textit{closest}" to $q$ in terms of the KL divergence.
\noindent By writing $\P_q$ for $\mathsf{P}_{0}^{q,g=\emptyset}$ and recall the definition of $j^*_q$ in Assumption~\ref{asm:I}, we have:
\begin{lemma}\label{lem:sigmas equal}
Suppose $h$ is large enough and Assumption~\ref{asm:I} holds. For
any $q\in[Q]$, it holds $\P_q$-a.s. that \\
(i) The first time $H_{q}$ is softly decided at the honest sensor $k$,
$\sigma^q_k(h)$ in \eqref{eq_MatrixCUSUMacceptabletime}, equals to
\begin{equation}\label{eq_closet_accept_time}
\sigma^{q,j^*_q}_k(h):= \inf\{t\in\N : Y^k_{t}(q,j^*_q)\ge h\}.
\end{equation}
(ii) For any $|\cN| \geq d \geq 1$, the first time $H_{q}$ is
simultaneously softly-decided by $d$ honest sensors, $S^q_d(h)$,
equals to
\begin{equation}
S^{q,j^*_q}_d(h):= \inf\left\{t\in\N :
Y^{(|\cN|-d+1)}_{t}(q,j^*_q)\ge h\right\},
\end{equation}
where $Y^{(1)}_{t}(q,j^*_q) \leq \ldots \leq
Y^{(|\cN|)}_{t}(q,j^*_q)$ are the ordered CUSUM statistics of
$Y^k_{t}(q,j^*_q)$, for hypotheses $q$ and $j^*_q$ at time $t$.
\end{lemma}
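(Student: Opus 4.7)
My plan is to reduce both parts to a pathwise comparison of growth rates of the CUSUM entries $Y^k_t(q,j)$, using the strong law of large numbers (SLLN) together with the uniqueness of the minimizer $j^*_q$ in Assumption~\ref{asm:I}(ii). First I would invoke the standard representation $Y^k_t(q,j)=S^k_t(q,j)-\min_{0\le s\le t}S^k_s(q,j)$, where $S^k_t(q,j):=\sum_{u=1}^{t}\ell^k_u(q,j)$ is the cumulative LLR; this identity follows by induction on $t$ from the recursion \eqref{eq_CUSUM_element}. Under $\P_q$ the increments $\ell^k_u(q,j)$ are i.i.d.\ with positive mean $I(q,j)$ and finite variance by Assumption~\ref{asm:I}(i), so SLLN gives $S^k_t(q,j)/t\to I(q,j)$ $\P_q$-a.s. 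The positivity forces $\min_{0\le s\le t}S^k_s(q,j)$ to be $\P_q$-a.s.\ finite, and hence $Y^k_t(q,j)/t\to I(q,j)$ $\P_q$-a.s. A routine first-passage argument (using $Y^k_{\sigma^{q,j}_k(h)}(q,j)/h\to 1$) then converts this into $\sigma^{q,j}_k(h)/h\to 1/I(q,j)$ $\P_q$-a.s.\ as $h\to\infty$.

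For part (i), the trivial bound $Y^k_{t,q}=\min_{j\neq q}Y^k_t(q,j)\le Y^k_t(q,j^*_q)$ immediately yields $\sigma^q_k(h)\ge\sigma^{q,j^*_q}_k(h)$. For the reverse, Assumption~\ref{asm:I}(ii) makes $j^*_q$ the \emph{unique} minimizer of $j\mapsto I(q,j)$, so $I(q,j)>I(q,j^*_q)$ for every $j\in[Q]^+\setminus\{q,j^*_q\}$. Intersecting the finitely many full-measure sets on which the asymptotics above hold (indexed by such $j$), I obtain a $\P_q$-full measure event on which, for every $h$ beyond a pathwise threshold $h_0(\omega)$, the strict ordering $\sigma^{q,j}_k(h)<\sigma^{q,j^*_q}_k(h)$ holds for every $j\neq q,j^*_q$. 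Evaluating at $t=\sigma^{q,j^*_q}_k(h)$ then forces $Y^k_t(q,j)\ge h$ for every $j\neq q$, whence $Y^k_{t,q}\ge h$, so $\sigma^q_k(h)\le\sigma^{q,j^*_q}_k(h)$, closing (i).

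Part (ii) will follow by applying (i) simultaneously to every $k\in\cN$. Because $|\cN|$ is finite, the union of the $|\cN|$ associated null sets is still null, and on the corresponding $\P_q$-full measure event there is a common threshold $h_0(\omega)$ beyond which $\sigma^q_k(h)=\sigma^{q,j^*_q}_k(h)$ for every $k\in\cN$. This identity implies that at every time $t$ relevant to the first-$d$-hit event the two subsets of sensors $\{k\in\cN:Y^k_{t,q}\ge h\}$ and $\{k\in\cN:Y^k_t(q,j^*_q)\ge h\}$ coincide; hence the first time either set has cardinality $\ge d$ is the same, which is by definition $S^q_d(h)$ on one side and $\inf\{t:Y^{(|\cN|-d+1)}_t(q,j^*_q)\ge h\}=S^{q,j^*_q}_d(h)$ on the other. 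The main subtlety I anticipate is interpreting ``$h$ large enough'' in a way that is consistent with the $\P_q$-a.s.\ assertion: the natural reading is pathwise (a $\P_q$-full measure event on which a finite random threshold $h_0(\omega)$ suffices), since SLLN delivers only $\omega$-dependent bounds, and this is all that the downstream delay analysis in Proposition~\ref{Thm_multi_finite_h} needs.
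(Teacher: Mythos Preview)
There is a genuine gap in your argument, and it appears in both parts at the same place: you implicitly treat the CUSUM statistic $t\mapsto Y^k_t(q,j)$ as if it were nondecreasing. It is not. In part~(i) you deduce from the first-passage ordering $\sigma^{q,j}_k(h)<\sigma^{q,j^*_q}_k(h)$ that $Y^k_t(q,j)\ge h$ at the later time $t=\sigma^{q,j^*_q}_k(h)$. But $Y^k_\cdot(q,j)$ can (and typically will) dip below $h$ after its first crossing, so the implication fails. Similarly, in part~(ii) you argue that the equality of first-hitting times $\sigma^q_k(h)=\sigma^{q,j^*_q}_k(h)$ for each $k$ forces the level sets $\{k:Y^k_{t,q}\ge h\}$ and $\{k:Y^k_t(q,j^*_q)\ge h\}$ to coincide ``at every time $t$ relevant to the first-$d$-hit event.'' They need not: equality of \emph{first} hitting times says nothing about what happens to either process after that time, and $S^{q,j^*_q}_d(h)$ can be strictly larger than each individual $\sigma^{q,j^*_q}_k(h)$.

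The fix is already contained in the ingredients you set up, but you should use them directly rather than route through first-passage orderings. From your own SLLN statement $Y^k_t(q,j)/t\to I(q,j)$ $\P_q$-a.s.\ and $\sigma^{q,j^*_q}_k(h)/h\to 1/I^q$ $\P_q$-a.s., you get immediately
\[
\frac{Y^k_{\sigma^{q,j^*_q}_k(h)}(q,j)}{h}
=\frac{Y^k_{\sigma^{q,j^*_q}_k(h)}(q,j)}{\sigma^{q,j^*_q}_k(h)}\cdot\frac{\sigma^{q,j^*_q}_k(h)}{h}
\longrightarrow\frac{I(q,j)}{I^q}>1
\]
for every $j\in[Q]^+\setminus\{q,j^*_q\}$, which is exactly the inequality the paper obtains (citing the perturbed-random-walk limit theorem in Gut). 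This gives part~(i) without any monotonicity assumption. For part~(ii), the clean route is to strengthen this to the pathwise statement that, $\P_q$-a.s., $Y^k_t(q,j)>Y^k_t(q,j^*_q)$ for all $j\neq q,j^*_q$ once $t$ exceeds some random $T_0(\omega)$ (immediate from the ratio $Y^k_t(q,j)/Y^k_t(q,j^*_q)\to I(q,j)/I^q>1$); hence $Y^k_{t,q}=Y^k_t(q,j^*_q)$ for all $t\ge T_0(\omega)$ and all $k\in\cN$. Since $S^{q,j^*_q}_d(h)\to\infty$ as $h\to\infty$, for $h$ large enough the defining events of $S^q_d(h)$ and $S^{q,j^*_q}_d(h)$ coincide. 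Your overall strategy is the same as the paper's; only this one step needs to be rewritten.
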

\begin{IEEEproof}
    See Appendix~\ref{sec:app}.
\end{IEEEproof}

We are now ready to present the results on the asymptotic delay performance. Let $Z_{(1)}$, $Z_{(2)}$, \ldots, $Z_{(|\cN|)}$ be the order statistics of independent standard  normal random variables. For each $d\in\{1,2, \ldots,|\cN|\}$, we denote by $\xi_d$ the expected value of $Z_{(d)}$. Moreover, for each $q\in[Q]$, we set
$D^q_{d:|\cN|} := \xi_d \sqrt{\frac{\sigma^2(q,j^*_q)}{I^q}}$. 

\begin{theorem}\label{Thm_QaryDelay}
Suppose Assumption~\ref{asm:I} holds. As $h\to\infty$, for any $q \in [Q]$ and $1\leq d \leq |\cN|$, we have
\begin{equation}\label{eq_Qarydelayq}
  \Eq[\sigma^q_{(d)}(h)] = \frac{h}{I^q} + D^q_{d:|\cN|} \sqrt{h}(1+o(1)), 
\end{equation}
and the detection delay of the multi-shot $d$-th alarm in \eqref{eq_MultishotWorst} is upper-bounded as
\begin{equation}\label{eq_Qarydelay}
\cD[\tau^m_{(d)}(h)] \leq \max_{q} \left (\frac{h}{I^q} + D^q_{d:|\cN|} \sqrt{h}(1+o(1))\right).
\end{equation}
\end{theorem}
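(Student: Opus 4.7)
My plan is to establish \eqref{eq_Qarydelayq} by reducing $\sigma^q_{(d)}(h)$ under $\P_q$ to the order statistics of $|\cN|$ i.i.d.\ single-pair CUSUM first-passage times, and then invoking second-order renewal-theoretic asymptotics. Equation \eqref{eq_Qarydelay} will then follow directly from Proposition~\ref{Thm_multi_finite_h}.

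First, I would invoke Lemma~\ref{lem:sigmas equal}(i), which shows that under $\P_q$ the soft-acceptance time $\sigma^q_k(h)$ coincides $\P_q$-a.s.\ with the classical CUSUM time $\sigma^{q,j^*_q}_k(h)=\inf\{t\in\N:Y^k_t(q,j^*_q)\ge h\}$. The LLR increments $\ell^k_t(q,j^*_q)$ are i.i.d.\ under $P_q$ with mean $I^q=I(q,j^*_q)>0$ and variance $\sigma^2(q,j^*_q)<\infty$ by Assumption~\ref{asm:I}. Hence, for each $k\in[|\cN|]$, $\sigma^{q,j^*_q}_k(h)$ is the first-passage time of a reflected random walk with positive drift, and these times are mutually independent across $k$ since the sensors' observation streams are independent.

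Next, I would apply the standard second-order asymptotics for CUSUM first-passage times (nonlinear renewal theory, e.g.\ Siegmund 1985, Woodroofe 1982, Lai--Siegmund): as $h\to\infty$,
\begin{equation*}
\frac{\sigma^{q,j^*_q}_k(h)-h/I^q}{\sqrt{h}}\ \Rightarrow\ c_q\,Z_k,
\end{equation*}
where $Z_1,\ldots,Z_{|\cN|}$ are i.i.d.\ $\mathcal N(0,1)$ and $c_q$ is the standard-deviation constant that, paired with $\xi_d$, produces $D^q_{d:|\cN|}$ in the theorem statement. The finite-variance hypothesis on the LLR, together with classical tail bounds (Kolmogorov-type maximal inequalities and Wald-type identities), provides the uniform integrability needed to upgrade this weak convergence to $L^1$ convergence. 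Because the $d$-th order statistic $\sigma^q_{(d)}(h)$ is a continuous functional of the $|\cN|$ independent first-passage times, the continuous mapping theorem delivers
\begin{equation*}
\frac{\sigma^q_{(d)}(h)-h/I^q}{\sqrt{h}}\ \Rightarrow\ c_q\,Z_{(d)},
\end{equation*}
with $Z_{(d)}$ the $d$-th order statistic of $|\cN|$ i.i.d.\ standard normals. Since $|Z_{(d)}|\le\sum_k|Z_k|$, per-sensor uniform integrability passes to the order statistic; taking expectations yields \eqref{eq_Qarydelayq} with coefficient $\xi_d c_q=D^q_{d:|\cN|}$.

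Finally, \eqref{eq_Qarydelay} is obtained by substituting \eqref{eq_Qarydelayq} into \eqref{eq_MultishotWorst} of Proposition~\ref{Thm_multi_finite_h}, taking the maximum over $q\in[Q]$, and absorbing the additive $Q-1$ into the $o(\sqrt{h})$ remainder. The main obstacle will be the uniform integrability step: weak convergence alone is insufficient to conclude the convergence of the mean $\E^{q,\emptyset}_0[\sigma^q_{(d)}(h)]$, so one must establish moment tightness of the normalized first-passage times uniformly in $h$, which in turn relies on carefully exploiting Assumption~\ref{asm:I}(i) together with the reflection structure of the CUSUM recursion to rule out abnormally long excursions away from the drift $h/I^q$.
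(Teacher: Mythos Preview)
Your approach is correct and structurally identical to the paper's: reduce $\sigma^q_{(d)}(h)$ to the single-pair order statistic $\sigma^{q,j^*_q}_{(d)}(h)$ via Lemma~\ref{lem:sigmas equal}(i), establish the second-order expansion for the latter, and plug into \eqref{eq_MultishotWorst} absorbing the $Q-1$ into $o(\sqrt{h})$. The only difference is that the paper does not redo the renewal-theoretic analysis you outline (weak convergence of normalized first-passage times, continuous mapping for order statistics, uniform integrability): it simply quotes \cite[Theorem~3.1]{BF16} for the ready-made identity $\Eq[\sigma^{q,j^*_q}_{(d)}(h)] = h/I^q + D^q_{d:|\cN|}\sqrt{h}(1+o(1))$. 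Your sketch is exactly how that cited result is proved, so you are reinventing a wheel the paper borrows; if you have access to \cite{BF16} you can replace your middle two paragraphs with a one-line citation.
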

\begin{IEEEproof}
    See Appendix~\ref{apx:proof_delay}.
\end{IEEEproof}

\begin{theorem}\label{CoroSimu_Qarydelay}
Suppose Assumption~\ref{asm:I} holds. As $h\to\infty$, for any $q \in [Q]$ and $1\leq d \leq |\cN|$, we have
\begin{equation}\label{eqn:simul_delay0}
    \Eq\left[S^q_{d}(h)\right] \leq \frac{h}{I^q} + D^q_{d:|\cN|} \sqrt{h}(1+o(1)),
\end{equation}
and the detection delay of the simultaneous $d$-th alarm in \eqref{eq_SimutWorst} is upper-bounded as
\begin{equation}\label{eqn:simul_delay}
\cD[\tau^s_{(d)}(h)] \leq \max_{q} \left (\frac{h}{I^q} + D^q_{d:|\cN|} \sqrt{h}(1+o(1))\right).
\end{equation}
\end{theorem}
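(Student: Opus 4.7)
My plan is to reduce Theorem~\ref{CoroSimu_Qarydelay} to Theorem~\ref{Thm_QaryDelay} via a level-inflation argument. By Lemma~\ref{lem:sigmas equal}, under $\P_q$ both $\sigma^q_{(d)}(h)$ and $S^q_d(h)$ admit clean descriptions in terms of the scalar pair-CUSUMs $\{Y^k_t(q,j^*_q)\}_{k\in\cN}$: the former is the $d$-th smallest first-passage time above $h$ of these walks, while the latter is the first time their $d$-th largest order statistic exceeds $h$. Requiring $d$ walks to stand above $h$ \emph{simultaneously} is strictly more demanding than requiring $d$ walks to each have visited $h$ at some past moment, so the pathwise ordering $S^q_d(h)\ge \sigma^q_{(d)}(h)$ holds; the real task is to produce a matching upper bound of the same first order, for which Theorem~\ref{Thm_QaryDelay} is the natural engine.

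The centerpiece is to introduce a slack $\delta_h\to\infty$ with $\delta_h=o(\sqrt{h})$ (for instance $\delta_h=h^{1/4}$) and to show that, with probability tending to one, $S^q_d(h)\le\sigma^q_{(d)}(h+\delta_h)$. For each $k\in\cN$, let $\tau_k:=\sigma^{q,j^*_q}_k(h+\delta_h)$, the first crossing of the inflated level. Iterating the CUSUM recursion $Y^k_t=(Y^k_{t-1}+\ell^k_t(q,j^*_q))^+$ from time $\tau_k$ yields
\[
Y^k_{\tau_k+n}(q,j^*_q)\ \ge\ (h+\delta_h)+\min_{0\le m\le n}\sum_{i=1}^{m}\ell^k_{\tau_k+i}(q,j^*_q).
\]
Under $\P_q$ the post-$\tau_k$ increments are i.i.d. with positive mean $I(q,j^*_q)=I^q$ and finite variance $\sigma^2(q,j^*_q)$, so by standard positive-drift random-walk theory the running infimum $R_k:=-\inf_{m\ge 0}\sum_{i=1}^{m}\ell^k_{\tau_k+i}(q,j^*_q)$ has a proper distribution that is independent of $h$ and possesses moments of every order. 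A union bound then gives $\P_q(\max_{k\in\cN}R_k\le\delta_h)\to 1$, and on this event every walk that has ever crossed $h+\delta_h$ remains strictly above $h$ forever after. In particular, at time $\sigma^q_{(d)}(h+\delta_h)$ the $d$ crossing sensors are all still above $h$, forcing $S^q_d(h)\le\sigma^q_{(d)}(h+\delta_h)$.

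Taking expectations and absorbing the complementary low-probability event by Cauchy--Schwarz against a deterministic majorant of $S^q_d(h)$ (itself controlled by higher moments of individual first-passage times via the same nonlinear renewal theory that underpins Theorem~\ref{Thm_QaryDelay}), I obtain
\[
\Eq[S^q_d(h)]\ \le\ \Eq[\sigma^q_{(d)}(h+\delta_h)]+o(\sqrt{h}).
\]
Applying Theorem~\ref{Thm_QaryDelay} at level $h+\delta_h$ and noting $(h+\delta_h)/I^q=h/I^q+o(\sqrt{h})$ and $\sqrt{h+\delta_h}=\sqrt{h}\,(1+o(1))$ yields \eqref{eqn:simul_delay0}; the worst-case bound \eqref{eqn:simul_delay} is then immediate by taking $\max_q$ and invoking \eqref{eq_SimutWorst} of Proposition~\ref{Thm_multi_finite_h}. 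The main obstacle I anticipate is the bookkeeping of the residual tail: one must show that the expected contribution of $\{\max_{k\in\cN}R_k>\delta_h\}$ is $o(\sqrt{h})$, which hinges on balancing the tail decay of $R_k$ (needing $\delta_h$ to grow) against the polynomial-in-$h$ growth of $\Eq[\sigma^q_{(|\cN|)}(h+\delta_h)]$ (needing $\delta_h/I^q=o(\sqrt{h})$). The choice $\delta_h=h^{1/4}$ should meet both demands under the second-moment regularity imposed by Assumption~\ref{asm:I}.
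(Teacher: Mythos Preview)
Your route is genuinely different from the paper's, and considerably harder. The paper does not compare $S^q_d(h)$ with $\sigma^q_{(d)}(h+\delta_h)$ at all; it simply mirrors the proof of Theorem~\ref{Thm_QaryDelay}: cite \cite[Theorem~3.2]{BF16} to get $\Eq[S^{q,j^*_q}_d(h)]\le h/I^q + D^q_{d:|\cN|}\sqrt{h}(1+o(1))$ for the single-pair simultaneous rule, and then invoke Lemma~\ref{lem:sigmas equal}(ii) together with monotonicity in $h$ to pass from $S^{q,j^*_q}_d(h)$ to $S^q_d(h)$. In other words, the work you are trying to do via level inflation is exactly the content of \cite[Theorem~3.2]{BF16}, which the paper takes as a black box. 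Your approach buys some self-containment (only \cite[Theorem~3.1]{BF16} is needed, via Theorem~\ref{Thm_QaryDelay}), but at significant cost.

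More importantly, your residual analysis does not close under Assumption~\ref{asm:I}. Two concrete issues. First, the claim that $R_k=-\inf_{m\ge 0}\sum_{i=1}^m\ell^k_{\tau_k+i}(q,j^*_q)$ ``possesses moments of every order'' is not true when only the second moment of the LLR is assumed finite: for a positive-drift random walk with $\E[\xi^2]<\infty$, the global infimum is guaranteed only a finite first moment, so Markov's inequality gives at best $\P_q(\max_k R_k>\delta_h)=O(1/\delta_h)$. With $\delta_h=h^{1/4}$ this yields $\P_q(A_h^c)=O(h^{-1/4})$, far from the $o(h^{-1})$ you would need after Cauchy--Schwarz against an $O(h^2)$ second moment. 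Second, you invoke ``a deterministic majorant of $S^q_d(h)$'' controlled by individual first-passage moments, but $S^q_d(h)$ is \emph{not} bounded by $\sigma^q_{(|\cN|)}(h)=\max_k\sigma^{q,j^*_q}_k(h)$: at the time the last sensor first crosses $h$, earlier crossers may have already dipped back below $h$. So no readily available majorant with the required second-moment control is on hand, and the Cauchy--Schwarz step is floating. The level-inflation idea is sound in spirit, but to make it rigorous here you would need either stronger moment assumptions on the LLR than Assumption~\ref{asm:I} provides, or a different handling of the bad event (e.g., a regenerative bound on $S^q_d(h)$). The paper sidesteps all of this by citing \cite[Theorem~3.2]{BF16} directly.
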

\begin{IEEEproof}
    See Appendix~\ref{apx:proof_delay}.
\end{IEEEproof}

The asymptotic performance of the mean time to false alarm/isolation of the proposed families of rules are given in the following. 
\begin{theorem} \label{Thm_one_QaryFalse}
Fix $h>0$. If $M<d\leq|\cN|$, the mean time to a false alarm or a false isolation $\A[\tau^m_{(d)}(h)]$ in \eqref{eq_MultishotWorstI} for the multi-shot $d$-th alarm is lower-bounded by
\begin{equation} \label{eq_QaryFalse}
 \frac{d-M}{(d-M+1)}\left(\begin{array}{c}
|\cN| \\ d-M \end{array} \right)^{\frac{-1}{d-M}}\exp(h).
\end{equation}
\end{theorem}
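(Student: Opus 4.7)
The plan is to invoke Proposition~\ref{Thm_multi_finite_h} via \eqref{eq_MultishotWorstI}, reducing matters to establishing a lower bound on $\Eq[\sigma_{(d-M)}^{\hat q}(h)]$ that is uniform in $q\in[Q]^+$ and $\hat q\in[Q]\setminus\{q\}$. Write $n:=|\cN|$ and $\ell:=d-M$, which lies in $\{1,\ldots,n\}$ since $M<d\le|\cN|$. Under $\Pq$ no compromised sensor is present, so the honest sensors' observations are independent across $k\in\cN$ and i.i.d.\ across $t$; consequently $\{\sigma_k^{\hat q}(h)\}_{k\in\cN}$ is an i.i.d.\ family of $n$ non-negative stopping times, and $\sigma_{(\ell)}^{\hat q}(h)$ is simply its $\ell$-th order statistic.

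The key technical ingredient is the per-sensor tail bound
\begin{equation*}
\Pq\bigl(\sigma_k^{\hat q}(h)\le t\bigr)\le t\,\exp(-h)\qquad \text{for all } t\ge 0.
\end{equation*}
To prove it, set $j^{\dagger}:=q$ when $q\in[Q]$ and $j^{\dagger}:=0$ when $q=0$; in both cases $j^{\dagger}\in[Q]^{+}\setminus\{\hat q\}$ and each $X_t^k$ is drawn from $P_{j^{\dagger}}$ under $\Pq$. Since $Y_{t,\hat q}^{k}=\min_{j\neq \hat q}Y_t^{k}(\hat q,j)\le Y_t^{k}(\hat q,j^{\dagger})$, the domination $\sigma_k^{\hat q}(h)\ge \tilde\sigma_k:=\inf\{t\in\N:Y_t^{k}(\hat q,j^{\dagger})\ge h\}$ holds. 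The scalar process $Y_t^{k}(\hat q,j^{\dagger})$ is Page's CUSUM driven by increments $\ell_t^{k}(\hat q,j^{\dagger})=\log(P_{\hat q}/P_{j^{\dagger}})(X_t^{k})$, and since $\E_{j^{\dagger}}[P_{\hat q}/P_{j^{\dagger}}]=1$ the exponential $\exp\bigl(\sum_{s\le t}\ell_s^{k}(\hat q,j^{\dagger})\bigr)$ is a $\Pq$-martingale. The classical renewal-cycle analysis of \cite{lorden71,moustakides86} then applies: $\tilde\sigma_k$ decomposes into i.i.d.\ renewal cycles of length at least $1$; by Wald's likelihood-ratio identity the probability that any single cycle crosses the threshold $h$ is at most $\exp(-h)$; and since at most $t$ cycles can be completed by time $t$, a union bound yields $\Pq(\tilde\sigma_k\le t)\le t\exp(-h)$, and consequently the same bound for $\sigma_k^{\hat q}(h)$.

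With the per-sensor tail bound in hand, a subset union bound over size-$\ell$ subsets of $\cN$ combined with the i.i.d.\ property gives
\begin{equation*}
\Pq\bigl(\sigma_{(\ell)}^{\hat q}(h)\le t\bigr)\le \binom{n}{\ell}\bigl(\Pq(\sigma_k^{\hat q}(h)\le t)\bigr)^{\ell}\le \binom{n}{\ell}\,t^{\ell}\exp(-h\ell).
\end{equation*}
Let $t^{\ast}:=\binom{n}{\ell}^{-1/\ell}\exp(h)$, the value at which this upper bound equals $1$. Using $\Eq[\sigma_{(\ell)}^{\hat q}(h)]=\int_0^{\infty}\Pq(\sigma_{(\ell)}^{\hat q}(h)>t)\,dt$ and bounding the integrand from below by $\max\{0,\,1-\binom{n}{\ell}t^{\ell}\exp(-h\ell)\}$,
\begin{equation*}
\Eq[\sigma_{(\ell)}^{\hat q}(h)]\ge \int_0^{t^{\ast}}\!\Bigl(1-\binom{n}{\ell}t^{\ell}\exp(-h\ell)\Bigr)dt = t^{\ast}-\frac{t^{\ast}}{\ell+1}=\frac{\ell}{\ell+1}\binom{n}{\ell}^{-1/\ell}\exp(h).
\end{equation*}
Substituting $\ell=d-M$ and $n=|\cN|$ and minimizing over $(q,\hat q)$ yields exactly the lower bound in the statement.

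The principal obstacle is the per-sensor tail inequality, because the soft matrix-CUSUM rule is governed by the minimum across $j$ of the entries $Y_t^{k}(\hat q,j)$ and is not literally a scalar CUSUM. The crucial simplification is the monotone domination $Y_{t,\hat q}^{k}\le Y_t^{k}(\hat q,j^{\dagger})$ with the canonical choice $j^{\dagger}=q$ (or $0$ when $q=0$), which turns the relevant entry into an honest scalar CUSUM under its correct null law $P_{j^{\dagger}}$; once this reduction is made, the Wald/renewal-cycle bound is classical and the remaining order-statistic step is routine integration.
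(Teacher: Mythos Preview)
Your proof is correct and follows essentially the same route as the paper: reduce via \eqref{eq_MultishotWorstI}, bound the distribution of the $\ell$-th order statistic by a subset union bound using independence across sensors, dominate the per-sensor soft-CUSUM stopping time by a scalar CUSUM against a single reference hypothesis, invoke the classical $t\,e^{-h}$ tail bound, and integrate. Your choice $j^{\dagger}=q$ (rather than the paper's fixed $j=0$) is in fact the right one when the true hypothesis $q\in[Q]$, since the martingale/Doob inequality for $Y_t^{k}(\hat q,j)$ requires the observations to follow $P_j$; with that correction in place the two arguments coincide.
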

\begin{IEEEproof}
    See Appendix~\ref{apx:proof_false}.
\end{IEEEproof}

\begin{theorem} \label{CoroSimu_QaryFalse}
Fix $h>0$. If $ M< d \leq |\cN|$, the mean time to a false alarm or a false isolation $\A[\tau^s_{(d)}(h)]$ in \eqref{eq_SimutWorstI} is lower-bounded by
\begin{equation}\label{eqn:simul_false}
\frac{1}{2}\left(\begin{array}{c} |\cN| \\
d-M
\end{array}\right)^{-1}\exp\left((d-M)h\right).
\end{equation}
\end{theorem}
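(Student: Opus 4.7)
The plan is to parallel the strategy of Theorem \ref{Thm_one_QaryFalse}, suitably adapted to the simultaneity constraint. By \eqref{eq_SimutWorstI} of Proposition \ref{Thm_multi_finite_h}, it suffices to show, for any fixed $q\in[Q]^+$ and $\hat{q}\in[Q]\setminus\{q\}$, that
\[
\Eq\bigl[S^{\hat{q}}_{d-M}(h)\bigr]\ \ge\ \tfrac{1}{2}\binom{|\cN|}{d-M}^{-1}\exp\bigl((d-M)h\bigr).
\]
From \eqref{eq_simuMatrixCUSUMacceptabletime}, $S^{\hat{q}}_{d-M}(h)=\min_{\mathcal{L}}\sigma^{\mathcal{L}}$, where $\mathcal{L}$ ranges over the $\binom{|\cN|}{d-M}$ subsets of $\cN$ with $|\mathcal{L}|=d-M$ and $\sigma^{\mathcal{L}}:=\inf\{t\in\N:Y^k_{t,\hat{q}}\ge h\ \forall k\in\mathcal{L}\}$.

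The central estimate is the tail bound
\[
\Pq\bigl(S^{\hat{q}}_{d-M}(h)\le n\bigr)\ \le\ n\binom{|\cN|}{d-M}e^{-(d-M)h},\qquad n\ge 1,
\]
which I will obtain by three successive bounds: (i) a union bound over the $\binom{|\cN|}{d-M}$ subsets $\mathcal{L}$; (ii) a union bound over time indices $t\le n$, giving $\Pq(\sigma^{\mathcal{L}}\le n)\le\sum_{t=1}^n\Pq(Y^k_{t,\hat{q}}\ge h\ \forall k\in\mathcal{L})$; and (iii) the independence of honest sensors under $\Pq$ (no attacker, $g=\emptyset$), which factorises the joint probability into $\prod_{k\in\mathcal{L}}\Pq(Y^k_{t,\hat{q}}\ge h)$. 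The remaining ingredient is the single-sensor estimate $\Pq(Y^k_{t,\hat{q}}\ge h)\le e^{-h}$ for every $t\ge 1$. Because $\hat{q}\ne q$ and $Y^k_{t,\hat{q}}=\min_{j\ne\hat{q}}Y^k_t(\hat{q},j)\le Y^k_t(\hat{q},q)$, it is enough to control $\Pq(Y^k_t(\hat{q},q)\ge h)$. The L\'evy-type identity $Y^k_t(\hat{q},q)=S^k_t-\min_{0\le s\le t}S^k_s$, where $S^k_t:=\sum_{u=1}^t\ell^k_u(\hat{q},q)$, combined with the reversibility of i.i.d.\ increments under $\Pq$, shows that $Y^k_t(\hat{q},q)$ has the same $\Pq$-law as $\max_{0\le m\le t}\widetilde{S}^k_m$ for a random walk $\widetilde{S}^k_m$ whose increments share the distribution of $\ell^k_u(\hat{q},q)$. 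Since $\Eq[e^{\ell^k_u(\hat{q},q)}]=\int P_{\hat{q}}=1$, the process $\{e^{\widetilde{S}^k_m}\}$ is a nonnegative $\Pq$-martingale with unit mean, and Doob's maximal inequality then delivers $\Pq(\max_{0\le m\le t}\widetilde{S}^k_m\ge h)\le e^{-h}$.

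To conclude, I convert the tail bound into the expectation lower bound by writing $\Eq[S^{\hat{q}}_{d-M}(h)]\ge\sum_{n=1}^{N}\Pq(S^{\hat{q}}_{d-M}(h)\ge n)\ge\sum_{n=1}^{N}\bigl(1-(n-1)p\bigr)=N-p\tfrac{N(N-1)}{2}$ with $p:=\binom{|\cN|}{d-M}e^{-(d-M)h}$, and optimising the right-hand side over $N\in\N$ (the maximiser is $N\approx 1/p$) yields $\Eq[S^{\hat{q}}_{d-M}(h)]\ge \tfrac{1}{2p}$, which is exactly the claimed bound $\tfrac{1}{2}\binom{|\cN|}{d-M}^{-1}e^{(d-M)h}$. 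The main technical subtlety is the single-sensor estimate: the reflected CUSUM statistic $Y^k_t(\hat{q},q)$ does not itself admit a clean (super)martingale structure under $\Pq$, so the detour through the L\'evy time-reversal is essential to expose the honest exponential martingale to which Doob's inequality applies.
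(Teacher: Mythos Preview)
Your proof is correct and follows essentially the same route as the paper's: obtain the tail bound $\Pq(S^{\hat q}_{d-M}(h)\le t)\le \binom{|\cN|}{d-M}\, t\, e^{-(d-M)h}$ by union bounds over subsets and time together with independence across honest sensors, then convert to an expectation lower bound. The paper carries out the last step by the integral comparison $\sum_{t\ge 0}(1-pt)^+\ge\int_0^\infty(1-pt)^+\,dt=\tfrac{1}{2p}$ (mirroring Theorem~\ref{Thm_one_QaryFalse}), whereas you optimize the truncated sum over $N$; both give $\tfrac{1}{2p}$.

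There is one small but genuine improvement in your argument. For the single-sensor estimate you bound $Y^k_{t,\hat q}\le Y^k_t(\hat q,q)$, taking the \emph{actual} hypothesis $q$ as the reference, and then derive $\Pq(Y^k_t(\hat q,q)\ge h)\le e^{-h}$ self-containedly via the L\'evy duality and Doob's inequality applied to the exponential martingale $\exp(S^k_m)$ (which has unit mean under $\Pq$ precisely because $q$ is the true hypothesis). The paper instead bounds by $Y^k_t(\hat q,0)$ and cites~\cite{bounds} for the $e^{-h}$ estimate; that citation covers the false-alarm case $q=0$ directly, but for false isolation ($q\in[Q]$, $q\ne\hat q$) the increments $\log(P_{\hat q}/P_0)$ need not yield a $\Pq$-supermartingale after exponentiation, so the estimate would require additional justification. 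Your choice of reference hypothesis sidesteps this neatly and makes the bound uniform over $q\in[Q]^+$.
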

\begin{IEEEproof}
    See Appendix~\ref{apx:proof_false}.
\end{IEEEproof}


Although the delay upper bounds in Theorems~\ref{Thm_QaryDelay}
and~\ref{CoroSimu_Qarydelay} are identical, we will show the
superiority of the simultaneous rule when detection delay and mean
time to false alarm/isolation are jointly considered, which also
validates Theorem \ref{thm:scaling_M_achiv}. 
For simultaneous
$d$-th alarm $\tau^s_{(d)}(h)$, one can ensure
$\A[\tau^s_{(d)}](h) \geq \gamma$ from \eqref{eqn:simul_false}
by selecting local threshold
\begin{equation}\label{eqn:simul_h}
\frac{1}{d-M}\left(\log \gamma+\log\left(2\left(\begin{array}{c} |\cN| \\ d-M
\end{array} \right)\right)\right).
\end{equation}
Also by plugging $h$ in \eqref{eqn:simul_h} into
\eqref{eqn:simul_delay}, with $\gamma \rightarrow \infty$,
\begin{align}
    &\cD[\tau^s_{(d)}(h)]  \lesssim \max_q \left( \frac{\log \gamma}{(d-M)I^{q}} \right).
\end{align}
Then part (a) of Theorem \ref{thm:scaling_M_achiv} is valid since
$d=|\cN|$ minimizes the right hand side above. Moreover, for the
multi-shot $d$-th alarm $\tau^m_{(d)}(h)$, one can ensure
$\A[\tau^m_{(d)}(h)] \geq \gamma$ from \eqref{eq_QaryFalse} by
selecting local threshold
\begin{equation} \label{eq_multih1}
    h=\log \gamma+\frac{1}{d-M}\log\left(\begin{array}{c} |\cN| \\ d-M
\end{array} \right)+\log\left(\frac{d-M+1}{d-M}\right).
\end{equation}
Plugging $h$ in \eqref{eq_multih1} into \eqref{eq_Qarydelay} and
let $\gamma \rightarrow \infty$, we have
\begin{align}
  \cD[\tau^m_{(d)}(h)] &\lesssim  \max_q \left( \frac{\log
  \gamma}{I^{q}}\right).
\end{align}
This
validates part (b) of Theorem \ref{thm:scaling_M_achiv}.

\begin{remark}
Apart from being the building block of our proof, Lemma~\ref{lem:sigmas equal} also reveals another practical benefit. It basically confirms that for large $h$, for each row of the matrix CUSUM, i.e., each abnormal hypothesis $H_q$, an honest sensor only has to compute and update the CUSUM statistics $Y^k_t(q,j^*_q)$. This is particularly useful in applications where sensors are subject to stringent energy constraints. Moreover, although Lemma~\ref{lem:sigmas equal} only shows asymptotic optimality of this approach, we provide in Fig. \ref{Fig_reduced_Delay_single_meter} an example showing that the intuition of updating only the closest hypothesis also applies to small $h$.
In this example, we consider $Q=2$ with $P_0,P_1,P_2$ the PDFs of Gaussian random variables with means $0,1,-1$ and same variances $\sigma^2$, respectively. Instead of the original CUSUM matrix, we use the following reduced one
\begin{eqnarray} \label{eq_matrix_CUSUM_matirxreduce}
\begin{bmatrix}
  Y^k_t(1,0) & \infty  \\
  \infty & Y^k_t(2,0). \\
\end{bmatrix}
\end{eqnarray}
As shown in Fig. \ref{Fig_reduced_Delay_single_meter}, the detection performances of full CUSUM matrix and the reduced one are almost identical, as expected.
\begin{figure}[h]
    \centering
    \includegraphics[scale=0.5]{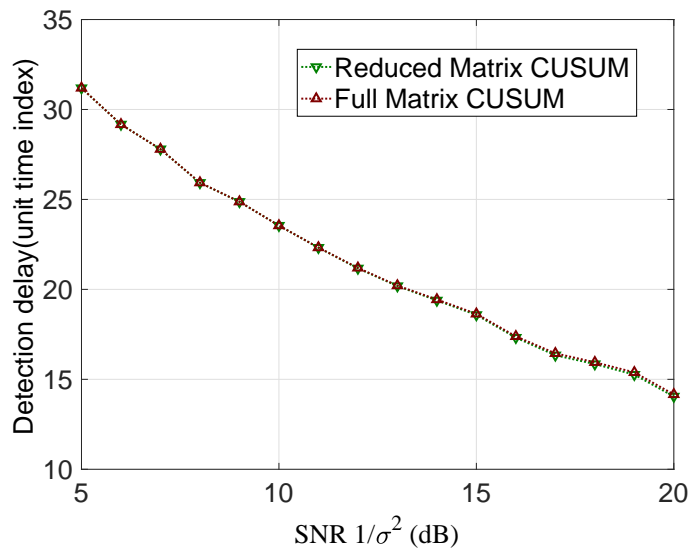}
    \caption{Detection delay of full Matrix CUSUM and that of the reduced one in \eqref{eq_matrix_CUSUM_matirxreduce} for an honest sensor. The local threshold is set such that $\exp(h) = 10^4$ ($h\approx 9.21$) and each curve is calculated based on 20000 realizations.}
     \label{Fig_reduced_Delay_single_meter}
\end{figure}
\end{remark}


\section{Game-theoretic formulation}\label{sec:game}
In this section, we formulate a leader-follower Stackelberg game \cite[Section 3.6]{basar1999dynamic} for the considered BDQCD, where the fusion center and honest sensors act as the leader and the compromised sensors act as the follower. It turns out that the characterization of the first-order optimality in the previous sections will help us characterize the Stacklberg equilibrium.

In the game, the information available to the two players are as follows:
\begin{itemize}
  \item The follower knows the leader's strategy $g_1$, all the current and past local observations $\mathbf{X}_1^t$, the change time $\nu$, and the actual hypothesis.
  \item The leader is oblivious of the exact indexes of compromised sensors, but knows the maximum number of compromised sensors $M$. 
\end{itemize}
Now we define the strategy spaces of the two players. With $Q+1$
hypotheses, we assume the noiseless link of each sensor can support (at least) $Q$ bits and use the $K\times 1$ vector
$\hat{\boldsymbol\lambda}_t\in \mathbb{Z}_{2^Q}^K$ to denote the
$Q$-bit local decisions at time $t$, where $\mathbb{Z}_{2^Q}$ is the integer ring modulo $2^Q$. The compromised sensors
cooperatively form the attack vector
$\mathbf{e}_t\in \mathbb{Z}_{2^Q}^K$ which has at most $M$ non-zero
components, reflecting that there are at most $M$ compromised
sensors. At time $t$, the fusion center receives
\begin{equation} \label{eq_receiveFusion}
     \hat{\boldsymbol\lambda}_t + \mathbf{e}_t,
\end{equation}
where the addition is over $\mathbb{Z}_{2^Q}$. A strategy $g_1$ of the leader at time $t$ includes a local decision rule that maps $X^k_1,\ldots,X^k_t$ into the $k$th entry of $\hat{\boldsymbol\lambda}_t$ at each sensor $k \in [K]$ (the leader treats all $K$ sensors as honest), and a stopping rule at the fusion center which maps $K \times t$ matrix $[\hat{\boldsymbol\lambda}_1 + \mathbf{e}_1, \ldots, \hat{\boldsymbol\lambda}_t + \mathbf{e}_t]$ to a decision $\hat{q}_t \in [Q]^+$. An alarm is fired if $\hat{q}_t \neq 0$. The stopping time for type $q\in[Q]$ is given in \eqref{eq:stopping_rule_multi} and again let $T$ be the first alarm time. A strategy $g_2$ of the follower at time $t$ is the vector $\mathbf{e}_t$ in \eqref{eq_receiveFusion}, where all $K$ elements are from $\mathbb{Z}_{2^Q}$ but elements in a subset of indices with size $|\cN|$ (corresponding to indexes of honest sensors) is deterministically 0. We use $\cG_1$ and $\cG_2$ to denote the pure-strategy spaces of aforementioned $g_1$ and $g_2$, respectively.

We first focus on the binary case as in Section \ref{sec:B_BDQCD}, i.e., $Q=1$, and refer  to the game as the {\it binary BDQCD Stackelberg game}. From the detection delay in \eqref{eq:delay_binary} and false alarm in \eqref{eq:falseAlarmTime_multiple}, we define the corresponding
performance metrics under strategies $(g_1,g_2)$ as
\begin{align}
    \cD(g_1,g_2) &:= \sup_{\nu} \esssup \E^{g_2}_{\nu}[ (T-\nu)^+ |\mathbf{X}_1^{\nu}], \quad\text{and}\notag\\
\A(g_1,g_2) &:= \E^{g_2}_{\infty}[ T],
\end{align}
respectively, where we note that the stopping time $T$ is a function of $(g_1,g_2)$. Since we wish the mean time to false alarm to be larger than a given $\gamma$, the cost for the leader is then defined as
\begin{equation} \label{eq_multigame_cost_leader}
 J^1(g_1,g_2) \triangleq \lim_{\gamma \rightarrow \infty} \left(\frac{\cD(g_1,g_2)}{\log \gamma}+ I_{-}\left(\gamma-\mathcal{A}(g_1,g_2)\right)\right),
\end{equation}
where function $I_{-}(u)$, as defined in \cite[Section 11.2]{BookBoyd}, is $\infty$ when $u>0$ and zero otherwise. Note that the change point $\nu$ is known at the follower, and thus $g_2$ can be different before and after the change time $\nu$. The cost for the follower is the mean time to false alarm
$
 J^2(g_1,g_2)\triangleq\mathcal{A}(g_1,g_2)
$
as the follower wants to sabotage detection by making false alarm more frequent.

For our game, the Stackelberg equilibrium strategy for the leader
and cost are defined as follows. As \cite[Definition
4.1]{basar1999dynamic}, we define
\begin{definition}
Fix $\eps>0$. For any $g_1 \in \cG_1$, the set $R_\eps^2(g_1)
\subseteq \cG_2$ defined by
\begin{equation}\label{eq_game_response}
R_\eps^2(g_1)=\left\{g_2 \in \cG_2 : J^2(g_1, g_2) \leq \inf_{\xi \in \cG_2} J^2(g_1,\xi)+\eps\right\}
\end{equation}
is the $\eps$-optimal response set of the follower to the strategy $g_1$ of the leader.
\end{definition}
\begin{definition} \label{Def_leaderCostSequence}
The Stackelberg cost of the leader is defined as
\begin{equation}\label{eq_leaderCostSequence}
J^{1*}=\inf_{g_1 \in \cG_1} \inf_{\eps>0} \sup_{g_2 \in R^2_{\eps}(g_1)}J^1(g_1,g_2).
\end{equation}
For any $\eps>0$, a
strategy $g^{1*}_\eps \in \cG_1$ is an $\eps$-Stackelberg
equilibrium strategy for the leader if
\[
\inf_{\bar{\eps}>0} \sup_{g_2 \in R_{\bar{\eps}}^2(g^{1*}_\eps)}J^1(g^{1*}_\eps,g_2) \leq
J^{1*}+\eps
\]
\end{definition}

Based on the above definitions, we first prove the following lemma which results in the game solution later in Theorem \ref{Thm_Binarygame_cost}.
\begin{lemma}\label{lma:scaling_lower_bound}
    For the binary BDQCD Stackelberg game, if there exists a pure strategy $\acute{g}_2\in\mathcal{G}_2$ that results in a lower bound $J^1(g_1,\acute{g}_2)\geq \eta$ for any $g_1\in\mathcal{G}_1$, then
    \begin{equation}
    \sup_{g_2 \in R^2_{\eps}(g_1)}J^1(g_1,g_2)\geq J^1(g_1,\acute{g}_2)\geq\eta
    \end{equation}
    for any $\eps>0$.
\end{lemma}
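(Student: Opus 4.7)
The plan is to proceed by a case analysis on $\A^*(g_1) := \inf_{\xi \in \cG_2} J^2(g_1,\xi)$, exploiting the $\{0,\infty\}$-type structure induced by the barrier $I_{-}$ in the leader's cost \eqref{eq_multigame_cost_leader}. A key observation is that the lemma is abstract: only the hypothesis $J^1(g_1,\acute{g}_2)\geq\eta$ is needed, so no structural property of $\acute{g}_2$ beyond that inequality is required (in the intended application in Theorem~\ref{Thm_Binarygame_cost}, $\acute{g}_2$ will be instantiated as the reverse attack from Section~\ref{sec:rev_attack}).

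First I would handle the degenerate case $\A^*(g_1)=\infty$. In this case every $\xi\in\cG_2$ trivially satisfies $J^2(g_1,\xi)\leq \A^*(g_1)+\eps$, so $R_\eps^2(g_1)=\cG_2$. In particular $\acute{g}_2\in R_\eps^2(g_1)$, and the chain
$$\sup_{g_2\in R_\eps^2(g_1)} J^1(g_1,g_2) \;\geq\; J^1(g_1,\acute{g}_2) \;\geq\; \eta$$
follows immediately from the definition of supremum over a set containing $\acute{g}_2$ together with the hypothesis.

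Next I would treat the principal case $\A^*(g_1)<\infty$. Here the definition of infimum guarantees that for any $\eps>0$ an $\eps$-minimizer of $J^2(g_1,\cdot)$ exists, so $R_\eps^2(g_1)$ is non-empty; moreover every $g_2\in R_\eps^2(g_1)$ satisfies $\A(g_1,g_2) = J^2(g_1,g_2)\leq \A^*(g_1)+\eps<\infty$. The plan is to push this finiteness through the leader's cost: for any such finite $\A(g_1,g_2)$, the barrier term $I_{-}(\gamma-\A(g_1,g_2))$ equals $+\infty$ once $\gamma>\A(g_1,g_2)$, so the $\gamma\to\infty$ limit in \eqref{eq_multigame_cost_leader} yields $J^1(g_1,g_2)=\infty$. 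Hence $\sup_{g_2\in R_\eps^2(g_1)} J^1(g_1,g_2)=+\infty$, which is trivially at least $J^1(g_1,\acute{g}_2)\geq\eta$, completing the chain.

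The main difficulty is more notational than conceptual: careful bookkeeping with the extended-real arithmetic of $I_{-}$ and verifying that the limit defining $J^1$ behaves as expected when $\A$ is finite. Notably, there is no need to check whether $\acute{g}_2\in R_\eps^2(g_1)$ in the second case, since the supremum is already $+\infty$ for reasons independent of $\acute{g}_2$; only in the first (degenerate) case is the role of $\acute{g}_2$ as a witness actually used.
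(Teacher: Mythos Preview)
Your argument is valid under the literal reading of \eqref{eq_multigame_cost_leader}, and under that reading it is shorter than the paper's proof. The paper takes a different route: it splits on whether $\acute{g}_2\in R_\eps^2(g_1)$, and in the nontrivial subcase where $\acute{g}_2\notin R_\eps^2(g_1)$ yet both barrier terms vanish (i.e.\ $\gamma\le \A(g_1,g_2)\le \A(g_1,\acute g_2)$), it \emph{constructs} a hybrid attack $\hat g_2$ that acts as an $\eps$-optimal $g_2$ when $\nu=\infty$ and as $\acute g_2$ when $\nu<\infty$. Because $\A$ depends only on the $\nu=\infty$ behaviour while $\cD$ depends only on the finite-$\nu$ behaviour, this $\hat g_2$ belongs to $R_\eps^2(g_1)$ and satisfies $J^1(g_1,\hat g_2)=J^1(g_1,\acute g_2)$, which yields the claim.

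The reason the paper does not take your shortcut is that the literal reading you exploit collapses the theory: your Case~2 shows $\sup_{g_2\in R_\eps^2}J^1(g_1,g_2)=\infty$ whenever $\A^*(g_1)<\infty$, and in fact under that reading $J^1$ can only take the values $0$ or $\infty$, which is incompatible with the finite Stackelberg cost $1/((|\cN|-M)I)$ claimed in Theorem~\ref{Thm_Binarygame_cost}. The paper is implicitly working under an interpretation in which the leader's rule (through the local threshold $h$) is allowed to scale with $\gamma$, so that $\A(g_1,\cdot)$ can keep pace with $\gamma$ and the barrier remains zero; under that intended reading your ``finite $\A$ forces $J^1=\infty$'' step is no longer available, while the paper's hybrid construction---which uses the structural fact that the attacker knows $\nu$ and can splice strategies accordingly---still goes through. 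So your proof establishes the lemma as literally stated, but it sidesteps precisely the mechanism that makes the lemma useful for the non-degenerate Stackelberg analysis.
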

\begin{proof}
We prove that $\sup_{g_2 \in R_\eps^2(g_1)}J^1(g_1,g_2)\geq
J^1(g_1,\acute{g}_2)$. If $\acute{g}_2 \in  R_\eps^2(g_1)$, then
the inequality is trivial. If not, for any $g_2 \in R_\eps^2(g_1)$
we have $\mathcal{A}(g_1,g_2) \leq
\mathcal{A}(g_1,\acute{g}_2)$. If
$\mathcal{A}(g_1,\acute{g}_2)<\gamma$, then
$J^1(g_1,g_2)=J^1(g_1,\acute{g}_2)=\infty$ from
\eqref{eq_multigame_cost_leader}. Now consider
$\mathcal{A}(g_1,\acute{g}_2) \geq \gamma$. If
$\mathcal{A}(g_1,g_2) < \gamma  \leq
\mathcal{A}(g_1,\acute{g}_2)$, from
\eqref{eq_multigame_cost_leader},
\[
J^1(g_1,\acute{g}_2)\leq J^1(g_1,g_2) =\infty.
\]
Otherwise, if $\gamma \leq \mathcal{A}(g_1,g_2)  \leq \mathcal{A}(g_1,\acute{g}_2)$, then
\[
I_{-}\left(\gamma-\mathcal{A}(g_1,\acute{g}_2)\right)=I_{-}\left(\gamma-\mathcal{A}(g_1,g_2)\right)=0,
\]
we construct an attack $\hat{g}_2$ acting as $g_2$ when $\nu = \infty$ and as $\acute{g}_2$ otherwise. This $\hat{g}_2$
will lie in $R_\eps^2(g_1)$ and result in $J^1(g_1,\hat{g}_2)=
J^1(g_1,\acute{g}_2)$, which results in
$J^1(g_1,\acute{g}_2)=J^1(g_1,\hat{g}_2)\leq \sup_{g_2 \in
R_\eps^2(g_1)}J^1(g_1,g_2)$.
\end{proof}

\begin{theorem} \label{Thm_Binarygame_cost}
  For the binary BDQCD Stackelberg game, the Stackelberg cost $J^{1*}$ is $\frac{1}{(|\cN|-M)I}$ when $|\cN|>M$ and zero
  elsewhere; and for any $\eps>0$, the $|\cN|$-voting rule is the $\eps$-Stackelberg
  equilibrium strategy for the leader.
\end{theorem}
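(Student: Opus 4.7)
The plan is to combine the achievability in Theorem~\ref{thm:achieve} with the converse in Theorem~\ref{thm:new_converse}, transporting them into the Stackelberg framework via Lemma~\ref{lma:scaling_lower_bound}. I focus on the substantive case $|\cN|>M$; the regime $|\cN|\le M$ violates the standing assumption of Section~\ref{subsec:B_BDQCD} and is treated as a degenerate convention for which the natural reading is $J^{1*}=0$.

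For the upper bound $J^{1*}\le \tfrac{1}{(|\cN|-M)I}$, I propose $g_1=\tau_{(|\cN|)}$, the consensus rule of Theorem~\ref{thm:achieve}, with its local threshold calibrated so that $\A[\tau_{(|\cN|)}]\ge \gamma$. Because $\A[\cdot]$ and $\cD[\cdot]$ are, respectively, an infimum and a supremum over $g_2\in\cG_2$, Theorem~\ref{thm:achieve} yields, for every $g_2\in\cG_2$, both $\mathcal{A}(\tau_{(|\cN|)},g_2)\ge \gamma$---so that the barrier term $I_-(\gamma-\mathcal{A}(g_1,g_2))$ in \eqref{eq_multigame_cost_leader} vanishes---and $\mathcal{D}(\tau_{(|\cN|)},g_2)\le \cD[\tau_{(|\cN|)}]\sim \tfrac{\log\gamma}{(|\cN|-M)I}$. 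Dividing by $\log\gamma$ and sending $\gamma\to\infty$ in \eqref{eq_multigame_cost_leader} yields $J^1(\tau_{(|\cN|)},g_2)\le \tfrac{1}{(|\cN|-M)I}$ uniformly in $g_2$, so $\sup_{g_2\in R^2_\eps(\tau_{(|\cN|)})} J^1(\tau_{(|\cN|)},g_2)\le \tfrac{1}{(|\cN|-M)I}$ for every $\eps>0$. This simultaneously establishes $J^{1*}\le \tfrac{1}{(|\cN|-M)I}$ and that $\tau_{(|\cN|)}$ is an $\eps$-Stackelberg equilibrium strategy.

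For the matching lower bound, I apply Lemma~\ref{lma:scaling_lower_bound} with $\acute{g}_2$ chosen as the reverse attack of Section~\ref{sec:rev_attack}. The key input is a byproduct of the proof of Theorem~\ref{thm:new_converse}: the converse was established pointwise at the reverse attack, so for every leader rule $g_1$ with $\mathcal{A}(g_1,\acute{g}_2)\ge \gamma$ one has $\mathcal{D}(g_1,\acute{g}_2)\gtrsim \tfrac{\log\gamma}{(|\cN|-M)I}$, whereas if instead $\mathcal{A}(g_1,\acute{g}_2)<\gamma$ the penalty $I_-$ forces $J^1(g_1,\acute{g}_2)=+\infty$. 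Either way, $J^1(g_1,\acute{g}_2)\ge \tfrac{1}{(|\cN|-M)I}$ as $\gamma\to\infty$, uniformly in $g_1$. Lemma~\ref{lma:scaling_lower_bound} then promotes this to $\sup_{g_2\in R^2_\eps(g_1)} J^1(g_1,g_2)\ge \tfrac{1}{(|\cN|-M)I}$ for every $g_1$ and every $\eps>0$, and taking $\inf_{g_1}\inf_\eps$ in \eqref{eq_leaderCostSequence} gives the matching $J^{1*}\ge \tfrac{1}{(|\cN|-M)I}$.

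The main obstacle is invoking Lemma~\ref{lma:scaling_lower_bound} cleanly, since the reverse attack $\acute{g}_2$ need not lie in $R^2_\eps(g_1)$: the follower minimizes $\mathcal{A}$ rather than maximizes $J^1$, and $\acute{g}_2$ may be far from the follower's $\mathcal{A}$-minimizer. The splicing construction inside the proof of that lemma---using a near-$\mathcal{A}$-minimizer before the change and switching to $\acute{g}_2$ after---handles this, but it relies on the converse of Theorem~\ref{thm:new_converse} remaining valid under an attack that behaves as $\acute{g}_2$ only after $\nu$. I expect this to hold because the reverse attack enters the delay analysis only through the post-change log-likelihood in \eqref{eq_mixtureCUSUM1}, so altering pre-change behaviour leaves the asymptotic lower bound on $\mathcal{D}$ intact.
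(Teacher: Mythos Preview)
Your proposal is correct and follows the paper's own proof: the lower bound comes from Lemma~\ref{lma:scaling_lower_bound} applied with $\acute g_2$ the reverse attack (invoking the proof of Theorem~\ref{thm:new_converse}), and the upper bound together with the equilibrium claim comes from Theorem~\ref{thm:achieve} at $d=|\cN|$. One small correction to your final paragraph: the splicing inside the proof of Lemma~\ref{lma:scaling_lower_bound} is on the \emph{value} of $\nu$, not on time relative to $\nu$---the constructed $\hat g_2$ equals $g_2$ when $\nu=\infty$ and equals $\acute g_2$ in its entirety (both pre- and post-change) when $\nu<\infty$---so the concern you raise about the converse surviving a hybrid attack does not actually arise.
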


\begin{proof}
From Lemma~\ref{lma:scaling_lower_bound}, for the converse $J^{1*} \geq \frac{1}{(|\cN|-M)I}$ it suffices to construct an attack $\acute{g}_2$ such that for any $g_1\in\mathcal{G}_1$, we have
\begin{equation} \label{eq_binary_limitedBW_reverse_attack}
    J^1(g_1,\acute{g}_2)\geq \frac{1}{(|\cN|-M)I},
\end{equation}
when $|\cN|>M$. This is valid from the proof of Theorem
\ref{thm:new_converse} by choosing $\acute{g}_2$ as the proposed
reverse attack. On the other hand, the achievability comes from
\eqref{eqn:opt_scale}. That is, the $|\cN|$-voting rule, which
uses only 1 bit from each sensor, achieves
\begin{equation} \label{eq_binary_achieve}
 \max_{g_2 \in R^2_\eps\left(\tau^s_{(|\cN|)}(h)\right)} J^1\left(\tau^s_{(|\cN|)}(h),g_2\right) = \frac{1}{(|\cN|-M)I},\ \ \forall
 \eps>0,
\end{equation}
by selecting local threshold $h$ for the worst case attack (where all compromised sensors send ``1" always)  such that
\begin{equation}
\mathcal{A}(\tau^s_{(|\cN|)}(h),g_2) \geq \gamma, \; \; \forall g_2 \in
R^2_\eps\left(\tau^s_{(|\cN|)}(h)\right).
\end{equation}
Then $J^{1*} \leq \frac{1}{(|\cN|-M)I}$ and it concludes the
proof.
\end{proof}

We now consider $Q>1$ and define the multi-hypothesis BDQCD Stackelberg game. From the detection delay defined in \eqref{eq:ess_delay_multiple} and false alarm/isolation defined in
\eqref{eq:falseIsolationTime_multiple}, we define the corresponding performance metrics under strategy $(g_1,g_2)$ as
\begin{align}
    \cD(g_1,g_2) &:= \sup_{\nu}\sup_{q \in [Q]}\esssup \mathbb{E}^{q,g_2}_{\nu}[
    (T-\nu)^+|\mathbf{X}_1^{\nu}].  \\
\A(g_1,g_2) &:= \inf_{q \in [Q]^+}
\inf_{\hat{q}\in[Q]\setminus\{q\}}\mathbb{E}^{q,g_2}_{0}[
T^{\hat{q}} ].
\end{align}
Unfortunately in this case, we are unable to prove results similar to Lemma \ref{lma:scaling_lower_bound} with the follower's cost $J^2(g_1,g_2)=\mathcal{A}(g_1,g_2)$. The main difficulty is that when $\nu=0$, one needs to consider  jointly  the mean time to a false alarm and that to a false isolation, when it comes to constructing an attack $\acute{g}_2$. Therefore, we instead define $J^2(g_1,g_2)=-J^1(g_1,g_2)$ and show the following result.
\begin{theorem} \label{Thm_multigame_cost}
For multi-hypothesis BDQCD Stackelberg game, the Stackelberg
cost is $J^{1*}=\frac{1}{(|\cN|-M)I^*}$ when $|\cN|>M$ and
zero elsewhere; and for any $\eps>0$, the simultaneous
$|\cN|$-th alarm is the $\eps$-Stackelberg
  equilibrium strategy for the leader.
\end{theorem}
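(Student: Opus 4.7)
The plan is to exploit the choice $J^2 = -J^1$ to reduce the Stackelberg problem to a pure min-max over $J^1$, and then recycle the converse and achievability machinery already developed for Theorem \ref{thm:scaling_M} and Theorem \ref{thm:scaling_M_achiv}. The first observation I would make is that with $J^2(g_1,g_2) = -J^1(g_1,g_2)$, the $\eps$-optimal response set in \eqref{eq_game_response} simplifies to
\[
R^2_\eps(g_1) = \Bigl\{g_2 \in \cG_2 \;:\; J^1(g_1,g_2) \geq \sup_{\xi\in\cG_2}J^1(g_1,\xi) - \eps\Bigr\},
\]
so for every $\eps>0$ and every $g_1$, $\sup_{g_2\in R^2_\eps(g_1)}J^1(g_1,g_2) = \sup_{g_2\in\cG_2}J^1(g_1,g_2)$. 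Plugging this into \eqref{eq_leaderCostSequence} yields $J^{1*} = \inf_{g_1\in\cG_1}\sup_{g_2\in\cG_2}J^1(g_1,g_2)$, so it suffices to match converse and achievability bounds on this min-max. In particular, the auxiliary Lemma \ref{lma:scaling_lower_bound} needed in the binary case is avoided here.

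For the converse in the regime $|\cN|>M$, I would fix an arbitrary leader strategy $g_1\in\cG_1$ and choose $\acute{g}_2\in\cG_2$ to be the multi-hypothesis reverse attack of Section \ref{sec:mul_rev_attack}, where each compromised sensor fabricates i.i.d.\ samples from the distribution $P_{q,2}$ specified in \eqref{eq_mul_rev_attacksym}. The genie-aided reduction of Section \ref{sec:mul_genie}, the masked-symmetric transformation of Section \ref{sec:mul_tQCD}, and the KL-divergence evaluation of Section \ref{sec:mul_est_converse} use nothing about $g_1$ beyond its general stopping-rule structure, so together they deliver $\cD(g_1,\acute{g}_2) \gtrsim \frac{\log\gamma}{(|\cN|-M)I^*}$ under the constraint $\A(g_1,\acute{g}_2)\geq\gamma$. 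Substituting into \eqref{eq_multigame_cost_leader} and sending $\gamma\to\infty$ gives $J^1(g_1,\acute{g}_2) \geq \frac{1}{(|\cN|-M)I^*}$; hence $\sup_{g_2\in\cG_2}J^1(g_1,g_2) \geq \frac{1}{(|\cN|-M)I^*}$, and minimizing over $g_1$ establishes $J^{1*} \geq \frac{1}{(|\cN|-M)I^*}$.

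For the matching upper bound I would let the leader adopt the simultaneous $|\cN|$-th alarm $\tau^s_{(|\cN|)}(h)$ with local threshold set as in \eqref{eqn:simul_h} with $d=|\cN|$. By Theorem \ref{CoroSimu_QaryFalse} this choice guarantees $\A(\tau^s_{(|\cN|)}(h),g_2)\geq\gamma$ uniformly in $g_2\in\cG_2$, which forces the $I_-$ penalty in $J^1$ to vanish; by Theorem \ref{CoroSimu_Qarydelay} (equivalently Theorem \ref{thm:scaling_M_achiv}(a)) it also yields $\cD(\tau^s_{(|\cN|)}(h),g_2) \lesssim \frac{\log\gamma}{(|\cN|-M)I^*}$ uniformly in $g_2$. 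Consequently $\sup_{g_2\in\cG_2}J^1(\tau^s_{(|\cN|)}(h),g_2) \leq \frac{1}{(|\cN|-M)I^*}$, which gives $J^{1*}\leq \frac{1}{(|\cN|-M)I^*}$ and certifies $\tau^s_{(|\cN|)}(h)$ as an $\eps$-Stackelberg equilibrium strategy for every $\eps>0$.

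The main obstacle is the converse step: one must verify that the chain of reductions in Sections \ref{sec:mul_genie}--\ref{sec:mul_est_converse} lower-bounds $\cD(g_1,\acute{g}_2)$ for \emph{every} admissible leader strategy $g_1$ (an arbitrary local map at each sensor composed with an arbitrary stopping rule at the fusion center), not merely for the specific rules appearing in Theorem \ref{thm:scaling_M_achiv}. This is handled by invoking the general masked-symmetric symmetrization behind Lemma \ref{lem:general_mult_false}, which collapses any $g_1$ to a symmetrized one without lowering the adversary's damage, so that the transformed-QCD delay lower bound carries over intact. The corner case $|\cN|\leq M$ is degenerate: the converse scaling $\frac{1}{(|\cN|-M)I^*}$ becomes vacuous and the adversary can completely mask the honest population, so the Stackelberg cost collapses to zero by convention, matching the theorem statement.
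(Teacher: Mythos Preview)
Your proposal is correct and follows essentially the same approach as the paper: reduce the Stackelberg problem to the pure min-max $\inf_{g_1}\sup_{g_2}J^1(g_1,g_2)$ via the choice $J^2=-J^1$, then invoke the reverse attack of Section~\ref{sec:mul_rev_attack} together with the converse machinery of Theorem~\ref{thm:scaling_M} for the lower bound and the simultaneous $|\cN|$-th alarm from Theorem~\ref{thm:scaling_M_achiv}(a) for the upper bound. Your write-up simply supplies more detail (the explicit simplification of $R^2_\eps(g_1)$, the uniform-in-$g_2$ bounds from Theorems~\ref{CoroSimu_Qarydelay} and~\ref{CoroSimu_QaryFalse}, and the degenerate case) than the paper's terse proof, but the argument is the same.
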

\begin{proof}
In this case, $J^{1*}$ in \eqref{eq_leaderCostSequence} becomes
\begin{equation}
\inf_{g_1 \in \cG_1}\sup_{g_2 \in \cG_2} J^1(g_1,g_2)
\end{equation}
If we can construct an attack $\acute{g}_2$ such that for any $g_1\in\mathcal{G}_1$, $J^1(g_1,\acute{g}_2)\geq \frac{1}{(|\cN|-M)I}$, and then $\sup_{g_2 \in \cG_2} J^1(g_1,g_2) \geq J^1(g_1,\acute{g}_2)\geq \frac{1}{(|\cN|-M)I}$ by definition. The game solution simply follows from the proof of Theorem \ref{thm:scaling_M} and \ref{thm:scaling_M_achiv} (a) by choosing $\acute{g}_2$ as our reverse attack for the multi-hypothesis case in Section \ref{sec:mul_rev_attack}.
\end{proof}

\section{Conclusions}\label{sec:conclude}
In this paper, the problem of BDQCD has been studied, where a
fusion center sequentially monitors an abrupt event via
distributed sensors which might be compromised. Both the binary
hypothesis and multi-hypothesis cases have been considered. For
the binary case, a novel converse bound for the first-order
asymptotic detection delay performance in the large mean time to a
false alarm regime has been proved. By comparing the converse
bound and the first-order scaling achieved by the existing
consensus rule, we have characterized the fundamental limit of
binary BDQCD in the large mean time to a false alarm regime (or
the small false alarm rate regime in a sense). For the
multi-hypothesis BDQCD, the novel converse has been generalized
from the binary case and the optimal first-order asymptotic
performance has again been characterized. Along with establishing
this fundamental result, two novel families of stopping rules have
been proposed, namely the multi-shot $d$-th alarm and the
simultaneous $d$-th alarm. The former is much more
energy-efficient and bandwidth efficient while the latter can
achieve asymptotically optimal performance under sufficient link
bandwidth whenever there are more honest sensors than compromised
ones. Finally, a leader-follower Stackelberg game has been
formulated based on the BDQCD problem discussed. The
asymptotically optimal stopping rule and the asymptotically worst
attack proposed for BDQCD have led us to the game solution, in
which the leader adopts the proposed asymptotically optimal
stopping rule (i.e., the simultaneous rule) and the follower
employs the corresponding asymptotically worst attack.

\appendices
\section{Lemmas}\label{apx:lemma}
In this appendix, some useful lemmas are presented and their proofs are given.

\begin{lemma}\label{lma:equivalence}
For any decision rule $T$, $\nu\ge 0$, $q\in[Q]$, and attack strategy $g$,
\begin{equation}
    \esssup \E^{q,g}_\nu[(T-\nu)^+\mid  \mathbf{X}_1^{\nu}] = \esssup \E^{q,g}_\nu[T-\nu\mid T>\nu,\ \mathbf{X}_1^{\nu}].
\end{equation}
\end{lemma}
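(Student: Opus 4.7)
The plan is to exploit the fact that the event $\{T > \nu\}$ is measurable with respect to $\sigma(\mathbf{X}_1^\nu)$, so that the indicator of this event can be pulled out of the conditional expectation. This reduces the identity to a routine manipulation, after which the two essential suprema can be compared sample-path by sample-path.

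First I would observe that the stopping time $T$ is adapted to the filtration $\{\mathcal{F}_t\}_{t\ge 1}$ generated by the reports received at the fusion center. Since every honest sensor's local report at time $t$ is a (possibly randomized) function of $X^k_1,\ldots,X^k_t$, and since, by the assumption on the attack model, each compromised sensor's report at time $t$ is a function of $\nu$ and $\mathbf{X}_1^t$, the reports produced up to time $\nu$ are (after augmenting with any independent randomization seed) $\sigma(\mathbf{X}_1^\nu)$-measurable. Consequently $\{T \leq \nu\}$, and hence $\{T > \nu\}$, lies in $\sigma(\mathbf{X}_1^\nu)$.

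Next, using $(T-\nu)^+=(T-\nu)\mathbf{1}_{\{T>\nu\}}$ and the fact that $\mathbf{1}_{\{T>\nu\}}$ is $\sigma(\mathbf{X}_1^\nu)$-measurable, I would pull the indicator outside the conditional expectation to obtain
\begin{equation}\label{eq:plan_factor}
\mathbb{E}^{q,g}_{\nu}\!\left[(T-\nu)^+ \,\middle|\, \mathbf{X}_1^\nu\right] \;=\; \mathbf{1}_{\{T>\nu\}}\,\mathbb{E}^{q,g}_{\nu}\!\left[T-\nu \,\middle|\, \mathbf{X}_1^\nu\right].
\end{equation}
On the set $\{T>\nu\}$ the conditional probability $\mathbb{P}^{q,g}_\nu(T>\nu\mid \mathbf{X}_1^\nu)$ equals $1$ (being the indicator of a $\sigma(\mathbf{X}_1^\nu)$-measurable event), so the conditional expectation given the additional event $\{T>\nu\}$ coincides with the conditional expectation given $\mathbf{X}_1^\nu$ alone:
\begin{equation}\label{eq:plan_cond}
\mathbb{E}^{q,g}_{\nu}\!\left[T-\nu \,\middle|\, T>\nu,\, \mathbf{X}_1^\nu\right] \;=\; \mathbb{E}^{q,g}_{\nu}\!\left[T-\nu \,\middle|\, \mathbf{X}_1^\nu\right] \quad \text{on } \{T>\nu\}.
\end{equation}

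Finally I would take essential suprema on both sides. The left-hand side of the claimed identity equals, by \eqref{eq:plan_factor}, the essential supremum of a nonnegative quantity that vanishes on $\{T\le\nu\}$; hence the essential supremum is attained on $\{T>\nu\}$, where by \eqref{eq:plan_cond} the integrand coincides with the one appearing on the right-hand side. The right-hand side is by definition the essential supremum of $\mathbb{E}^{q,g}_\nu[T-\nu\mid \mathbf{X}_1^\nu]$ over trajectories in $\{T>\nu\}$. Thus the two essential suprema agree. The main obstacle I anticipate is purely formal: being careful about (i) independent randomization that honest sensors or the fusion center might employ, which should be absorbed into an enlarged $\sigma$-algebra under which $\{T>\nu\}$ is still measurable and independent of $\mathbf{X}_{\nu+1}^\infty$, and (ii) the convention for defining the conditional expectation $\mathbb{E}^{q,g}_\nu[\,\cdot\mid T>\nu,\mathbf{X}_1^\nu]$ on the negligible piece $\{T\le\nu\}$, which does not affect the essential supremum.
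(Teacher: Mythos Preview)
Your proposal is correct and follows essentially the same approach as the paper's proof: both arguments hinge on the fact that $\{T>\nu\}$ is $\sigma(\mathbf{X}_1^\nu)$-measurable, so that the two conditional expectations agree on $\{T>\nu\}$ while the left-hand side vanishes on $\{T\le\nu\}$. The paper phrases the same computation via the law of total expectation rather than by factoring out the indicator, but the content is identical; you are, if anything, slightly more explicit about why the measurability holds.
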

\begin{IEEEproof}
For any decision rule $T$, $\nu\ge 0$, $q\in[Q]$, and attack strategy $g$, consider the subset $S$ of $\R^{K\times\nu}$ on which $T> \nu$. 
By the definition of $S$, observe that
\begin{equation}\label{prob}
\mathsf{P}^{q,g}_\nu[T>\nu \mid  \mathbf{X}_1^{\nu}] =
\begin{cases}
1,\quad &\hbox{on}\ S,\\
0,\quad &\hbox{on}\ S^c.
\end{cases}
\end{equation}
Also, note that
$\E^{q,g}_\nu[(T-\nu)^+\mid  \mathbf{X}_1^{\nu}]$
is well-defined on $\R^{K\times\nu}$, and by definition constantly zero on $S^c$. On the other hand,
\begin{equation}\label{2}
\E^{q,g}_\nu[T-\nu\mid T>\nu,\ \mathbf{X}_1^{\nu}]
\end{equation}
is well-defined only on $S$; specifically, since $\mathsf{P}^{q,g}_\nu[T>\nu \mid  \mathbf{X}_1^{\nu}] =0$ on $S^c$, the conditional probability needed to evaluate \eqref{2} is not well-defined on $S^c$.

Now, by the law of total expectation,
\begin{align}
\E^{q,g}_\nu &[(T-\nu)^+\mid  \mathbf{X}_1^{\nu}] \notag\\
&= \E^{q,g}_\nu[(T-\nu)^+\mid T>\nu,\ \mathbf{X}_1^{\nu}]\ \mathsf{P}^{q,g}_\nu[T>\nu\mid \mathbf{X}_1^{\nu}]\notag\\
&\hspace{0.2in} + \E^{q,g}_\nu[(T-\nu)^+\mid T\le \nu,\ \mathbf{X}_1^{\nu}]\ \mathsf{P}^{q,g}_\nu[T\le\nu\mid \mathbf{X}_1^{\nu}]\notag\\
&= \E^{q,g}_\nu[T-\nu\mid T>\nu,\ \mathbf{X}_1^{\nu}]\quad \hbox{on}\ S, \label{equal on S}
\end{align}
where the last equality follows from \eqref{prob}. Since $\E^{q,g}_\nu[(T-\nu)^+\mid  \mathbf{X}_1^{\nu}]$ is constantly zero on $S^c$, the desired result is a direct consequence of \eqref{equal on S}.
\end{IEEEproof}

\begin{lemma}\label{lem:general}
    In binary genie-aided BCQCD, for any general (not necessarily masked symmetric) fusion rule $T'(\{\mathbf{X}_t\}_{t \geq 1})$, there is a masked symmetric rule $T(\{\mathbf{X}_t\}_{t \geq 1})$ that is not worse than $T'(\{\mathbf{X}_t\}_{t \geq 1})$.
\end{lemma}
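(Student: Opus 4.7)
The plan is to symmetrize $T'$ by applying a uniformly random masked permutation before feeding the observations into $T'$. Recall from Section~\ref{sec:est_converse} that $\Pi_{2M}$ denotes the set of permutations $\pi:[K]\to[K]$ that permute the first $2M$ coordinates arbitrarily and fix the remaining $|\cN|-M$ coordinates, so $|\Pi_{2M}|=(2M)!$. Draw $\pi$ uniformly from $\Pi_{2M}$, independent of all observations, and define the (randomized) rule
\begin{equation*}
T(\{\mathbf{X}_t\}_{t\ge 1}) := T'(\{\pi(\mathbf{X}_t)\}_{t\ge 1}).
\end{equation*}
For any $\pi'\in\Pi_{2M}$, replacing each $\mathbf{X}_t$ by $\pi'(\mathbf{X}_t)$ merely replaces $\pi$ by $\pi\circ\pi'$, whose law is still uniform on $\Pi_{2M}$. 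Hence the law of $T$ is invariant under $\pi'$, and $T$ is masked symmetric in the sense of Section~\ref{sec:tQCD}.

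The engine of the proof is an orbit identity. For any $\pi\in\Pi_{2M}$, $s\in\mathcal{S}$, and $\theta\in\{0,1\}$, the random vector $\pi(\mathbf{X}_t)$ under $P_{\theta,s}$ has the same product density as $\mathbf{X}_t$ under $P_{\theta,s\circ\pi^{-1}}$; moreover, as $\pi$ ranges over $\Pi_{2M}$, each $s'\in\mathcal{S}$ is realized as $s\circ\pi^{-1}$ exactly $M!\,M!=(2M)!/|\mathcal{S}|$ times. Combined with the tower property this gives, for every $s\in\mathcal{S}$,
\begin{equation*}
\E_\infty^s[T] \;=\; \frac{1}{(2M)!}\sum_{\pi\in\Pi_{2M}}\E_\infty^{s\circ\pi^{-1}}[T'] \;=\; \frac{1}{|\mathcal{S}|}\sum_{s'\in\mathcal{S}}\E_\infty^{s'}[T'] \;\ge\; \inf_{s'\in\mathcal{S}}\E_\infty^{s'}[T'] \;=\; \A_{\textrm{genie}}[T'],
\end{equation*}
so that $\A_{\textrm{genie}}[T]\ge \A_{\textrm{genie}}[T']$.

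The detection-delay bound follows the same template, but inside an essential supremum. For each fixed $\pi$, the change-of-variable computation leading to \eqref{eq_no_infProd1} gives
\begin{equation*}
\E^s_\nu\!\left[(T'(\{\pi(\mathbf{X}_t)\}_{t\ge 1}) - \nu)^+\mid \mathbf{X}_1^\nu = x_1^\nu\right] \;=\; \E^{s\circ\pi^{-1}}_\nu\!\left[(T'(\{\mathbf{Y}_t\}_{t\ge 1})-\nu)^+\mid \mathbf{Y}_1^\nu = \pi(x_1^\nu)\right],
\end{equation*}
and since $\pi$ pushes $\P^s$ to $\P^{s\circ\pi^{-1}}$, the $\esssup$ over $x_1^\nu$ under $\P^s$ equals the $\esssup$ over $y_1^\nu$ under $\P^{s\circ\pi^{-1}}$, exactly in the spirit of the argument surrounding~\eqref{1}. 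Averaging over $\pi$, using the subadditivity of $\esssup$ under convex combinations, and invoking the orbit equidistribution once more yields $\cD_{\textrm{genie}}[T]\le \cD_{\textrm{genie}}[T']$. The main obstacle is making this essential-supremum manipulation rigorous: because $\esssup$ is taken with respect to a measure that itself depends on $s$, one must combine a pathwise change of variable with an identification of pushforward measures before pulling $\esssup$ outside the $\pi$-average. Once the template of \eqref{1} is adapted to the $\pi$-twisted rule, the remaining bookkeeping reduces to the orbit counting already used for the false-alarm step.
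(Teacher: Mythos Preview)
Your proposal is correct and follows essentially the same symmetrization-over-$\Pi_{2M}$ argument as the paper, including the key change-of-variable identity $\E^s[\,\cdot\,(\pi(\mathbf{X}))\,]=\E^{s\circ\pi^{-1}}[\,\cdot\,(\mathbf{X})\,]$ and the orbit observation that $s\circ\pi^{-1}$ sweeps $\mathcal{S}$ uniformly. The only cosmetic difference is that you build $T$ as a \emph{randomized} rule (apply a uniformly drawn $\pi$ before feeding into $T'$), whereas the paper defines $T$ deterministically as the average $\frac{1}{(2M)!}\sum_{\pi\in\Pi_{2M}} T'(\{\pi(\mathbf{X}_t)\})$ and then invokes convexity of $(\cdot)^+$; your packaging sidesteps that convexity step at the price of external randomization, but the substance is identical.
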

\begin{IEEEproof}
The proof is a constructive one. Recall we have defined $\pi:[K]\rightarrow[K]$ a masked permutation function that permutes the first $2M$ entries while keeps the remaining $|\cN|-M$ entries unchanged. Let $\Pi_{2M}$ be the collection of all ($2M!$ in total) such $\pi$. For $\mathbf{X}_t=[X_t^1,\ldots,X_t^K]$, we slightly abuse the notation to write $\pi(\mathbf{X}_t)=[X_t^{\pi(1)},\ldots,X_t^{\pi(K)}]$. Let
    \begin{equation} \label{eq_symmetric_rule}
        T(\{\mathbf{X}_t\}_{t \geq 1}) = \frac{1}{2M!}\sum_{\pi\in\Pi_{2M}}T'(\{\pi(\mathbf{X}_t)\}_{t \geq 1}).
    \end{equation}
    Following the proof of part 1 of \cite[Lemma 4.2]{IHsiang18}, one can show that $T(\{\mathbf{X}_t\}_{t \geq 1})$ is indeed a masked symmetric strategy. Now the detection delay $\cD_{\textrm{genie}}[T(\{\mathbf{X}_t\}_{t \geq 1})]$ is
    \[
    \sup_{s \in \mathcal{S},\nu} \esssup\E^{s}_{\nu}\left[  \left(\frac{1}{2M!}\sum_{\pi\in\Pi_{2M}}  T'(\{\pi(\mathbf{X}_t)\}_{t \geq 1})-\nu\right)^+ \bigg|\mathbf{X}_1^{\nu} \right].
    \]
    Thus $\cD_{\textrm{genie}}[T(\{\mathbf{X}_t\}_{t \geq 1})]$ is no longer than
    \begin{align}\label{eqn:sym_Delay}
       &\frac{1}{2M!} \sum_{\pi\in\Pi_{2M}} \sup_{s \in \mathcal{S},\nu} \esssup \E^{s}_{\nu}\left[ (T'(\{\pi(\mathbf{X}_t)\}_{t \geq 1})-\nu)^+ |\mathbf{X}_1^{\nu}\right]\nonumber \\
       &\overset{(a)}{=} \frac{1}{2M!} \sum_{\pi\in\Pi_{2M}} \sup_{s \in \mathcal{S},\nu} \esssup \E^{s\circ\pi^{-1}}_{\nu}\left[ (T'(\{\mathbf{X}_t\}_{t \geq 1})-\nu)^+ |\mathbf{X}_1^{\nu}\right]\nonumber \\
       &\overset{(b)}{=} \frac{1}{2M!} \sum_{\pi\in\Pi_{2M}} \sup_{s' \in \mathcal{S},\nu} \esssup \E^{s'}_{\nu}\left[ (T'(\{\mathbf{X}_t\}_{t \geq 1})-\nu)^+ |\mathbf{X}_1^{\nu}\right]\nonumber \\
       &= \frac{1}{2M!} \sum_{\pi\in\Pi_{2M}} \cD_{\textrm{genie}}[T'(\{\mathbf{X}_t\}_{t \geq 1})] = \cD_{\textrm{genie}}[T'(\{\mathbf{X}_t\}_{t \geq 1})].
    \end{align}
Note that essential supremum of the right-hand side of (a) is taken under the probability measure whose density is specified by \eqref{eqn:prod_distribution} under $\theta=0$ and the compromised group assignment $s \circ \pi^{-1}$. Then (a) can be proved similar to (\ref{eq_no_infProd}a) by the fact
\begin{align}
& \E^s_\nu \left[\left(T'(\{\pi(\mathbf{X}_t)\}_{t\ge 1})-\nu\right)^+ \ \middle| \ \mathbf{X}_1^\nu\right] (x) =\notag \\
& \E^{s\circ \pi^{-1}}_\nu \left[\left(T'(\{\mathbf{X}_t\}_{t\ge 1})-\nu\right)^+ \ \middle| \ \mathbf{X}_1^\nu\right] (\pi(x)), \forall x\in \R^{K\times\nu} \notag
\end{align}
since the permutation $\pi(.)$ is one-to-one; and (b) is due to the fact that $\{s\circ\pi^{-1}|s\in\mathcal{S}\}=\mathcal{S}$. We can similarly show that for the mean time to false alarm of the new rule, $\A_{\textrm{genie}}[T(\{\mathbf{X}_t\}_{t \geq 1})] \geq   \A_{\textrm{genie}}[T'(\{\mathbf{X}_t\}_{t \geq 1})]$.
Then we conclude that the masked symmetric strategy $T(\{\mathbf{X}_t\}_{t \geq 1})$ is at least as good as $T'(\{\mathbf{X}_t\}_{t \geq 1})$.
\end{IEEEproof}

\begin{lemma}\label{lem:general_mult_false}
    In multiple-hypothesis genie-aided BCQCD, for any general (not necessarily masked symmetric) fusion rule $T'(\{\mathbf{X}_t\}_{t \geq 1})$, there is a masked symmetric rule $T(\{\mathbf{X}_t\}_{t \geq 1})$ that has longer mean time to a false alarm or a false isolation than $T'(\{\mathbf{X}_t\}_{t \geq 1})$.
\end{lemma}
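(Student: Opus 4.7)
The plan is to adapt the symmetrization construction used in Lemma~\ref{lem:general} for the binary case to the multi-hypothesis setting, where the rule must now be described by a family of stopping times, one for each candidate decision $\hat{q}\in[Q]$. Given an arbitrary rule $T'$ that produces alarm times $T'_\rho$ with decisions $\hat{q}_\rho$, let $T'^{\hat{q}}(\{\mathbf{X}_t\}_{t\ge 1})=\inf_{\rho\ge 1}\{T'_\rho : \hat{q}_\rho=\hat{q}\}$ for each $\hat{q}\in[Q]$. Defining $T$ entry-by-entry through
\begin{equation}\notag
T^{\hat{q}}(\{\mathbf{X}_t\}_{t\ge 1}) := \frac{1}{2M!}\sum_{\pi\in\Pi_{2M}} T'^{\hat{q}}(\{\pi(\mathbf{X}_t)\}_{t\ge 1}),\qquad \hat{q}\in[Q],
\end{equation}
gives a rule that depends on $\{\mathbf{X}_t\}$ only through $\{\tau_{2M}(\mathbf{X}_t)\}$. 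Indeed, for any $\pi'\in\Pi_{2M}$, the substitution $\pi\mapsto \pi\circ\pi'$ leaves $\Pi_{2M}$ invariant, and so $T^{\hat{q}}(\{\pi'(\mathbf{X}_t)\})=T^{\hat{q}}(\{\mathbf{X}_t\})$; this is exactly the verification of masked symmetry carried out in Lemma~\ref{lem:general}.

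To compare $\A_{\textrm{genie}}[T]$ with $\A_{\textrm{genie}}[T']$, I would first use linearity of expectation and then the change-of-variables identity already established in the derivations for \eqref{eq_no_infProd1} and (\ref{eqn:sym_Delay}a), which is valid at the level of each individual $T'^{\hat{q}}$ because it only relies on the product structure of $P_{q,s}$ in \eqref{eq:til_P_s} together with the fact that $\pi$ is a bijection on $\R^K$. The identity reads
\begin{equation}\notag
\mathbb{E}_0^{q,s}\bigl[T'^{\hat q}(\{\pi(\mathbf{X}_t)\})\bigr] = \mathbb{E}_0^{q,\,s\circ\pi^{-1}}\bigl[T'^{\hat q}(\{\mathbf{X}_t\})\bigr],
\end{equation}
so that, for every fixed $q,\hat{q}$, averaging over $\pi\in\Pi_{2M}$ produces the same quantity regardless of the starting $s$, namely
\begin{equation}\notag
\mathbb{E}_0^{q,s}[T^{\hat q}]=\frac{1}{2M!}\sum_{\pi\in\Pi_{2M}}\mathbb{E}_0^{q,\,s\circ\pi^{-1}}[T'^{\hat q}]=\frac{1}{\binom{2M}{M}}\sum_{s'\in\mathcal{S}}\mathbb{E}_0^{q,s'}[T'^{\hat q}],
\end{equation}
where the last equality uses $\{s\circ\pi^{-1}:\pi\in\Pi_{2M}\}=\mathcal{S}$ with each $s'$ appearing $2M!/\binom{2M}{M}$ times, exactly as in step (a) of \eqref{eqn:cusum_to_mcusum}.

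Finally, bounding the uniform average by the minimum, for every $q\in[Q]^+$ and $\hat{q}\in[Q]\setminus\{q\}$ we obtain $\mathbb{E}_0^{q,s}[T^{\hat q}]\ge \inf_{s'\in\mathcal{S}}\mathbb{E}_0^{q,s'}[T'^{\hat q}]$. Taking the triple infimum $\inf_{q}\inf_{s}\inf_{\hat{q}}$ on both sides preserves the inequality and gives $\A_{\textrm{genie}}[T]\ge \A_{\textrm{genie}}[T']$, which is the desired conclusion. The main obstacle I anticipate is the careful verification of the change-of-variables identity separately for each $T'^{\hat{q}}$: unlike the binary case where only a single first-alarm time is involved, here one must be sure that the event $\{\hat{q}_\rho=\hat{q}\}$ transforms correctly under permutation, which is handled by expressing $T'^{\hat{q}}$ through a tail-probability sum as in \eqref{eq:P_Ez}--\eqref{eq:symmetric_false_iso} and integrating against the permuted density $P_{q,s}$, the absolute Jacobian of a permutation being $1$.
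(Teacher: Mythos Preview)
Your proposal is correct and follows essentially the same approach as the paper: both define the symmetrized rule via $T^{\hat q}=\frac{1}{2M!}\sum_{\pi\in\Pi_{2M}}T'^{\hat q}(\{\pi(\mathbf X_t)\})$, invoke the change-of-variables identity $\mathbb{E}^{q,s}_0[T'^{\hat q}(\{\pi(\mathbf X_t)\})]=\mathbb{E}^{q,s\circ\pi^{-1}}_0[T'^{\hat q}(\{\mathbf X_t\})]$, and use $\{s\circ\pi^{-1}:\pi\in\Pi_{2M}\}=\mathcal S$ to conclude $\A_{\textrm{genie}}[T]\ge\A_{\textrm{genie}}[T']$. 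The only cosmetic difference is that you first compute the exact equality $\mathbb{E}^{q,s}_0[T^{\hat q}]=\binom{2M}{M}^{-1}\sum_{s'}\mathbb{E}^{q,s'}_0[T'^{\hat q}]$ and then bound the average by the infimum, whereas the paper pulls the infimum inside the sum over $\pi$ first; both orderings are valid.
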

\begin{IEEEproof}
Again, the symmetrized rule $T$ is formed as \eqref{eq_symmetric_rule} in the binary BCQCD. It is obvious that
\begin{align}
    &\A_{\textrm{genie}}[T] = \inf_{q\in[Q]^+}\inf_{s\in\mathcal{S}}\inf_{\hat{q}\in[Q]\setminus\{q\}}\mathbb{E}^{q,s}_{0}[ T^{\hat{q}}(\{\mathbf{X}_t\}_{t\geq 1}) ] \nonumber \\
    &= \inf_{q\in[Q]^+}\inf_{s\in\mathcal{S}}\inf_{\hat{q}\in[Q]\setminus\{q\}}\mathbb{E}^{q,s}_{0}\left[ \frac{1}{2M!}\sum_{\pi\in\Pi_{2M}} T'^{\hat{q}}(\{\pi(\mathbf{X}_t)\}_{t\geq 1}) \right] \nonumber \\
    &\geq \frac{1}{2M!}\sum_{\pi\in\Pi_{2M}} \inf_{q\in[Q]^+}\inf_{s\in\mathcal{S}}\inf_{\hat{q}\in[Q]\setminus\{q\}}\mathbb{E}^{q,s}_{0}\left[ T'^{\hat{q}}(\{\pi(\mathbf{X}_t)\}_{t\geq 1}) \right] \nonumber \\
    &=\frac{1}{2M!}\sum_{\pi\in\Pi_{2M}} \inf_{q\in[Q]^+}\inf_{s\in\mathcal{S}}\inf_{\hat{q}\in[Q]\setminus\{q\}}\mathbb{E}^{q,s\circ\pi^{-1}}_{0}\left[ T'^{\hat{q}}(\{\mathbf{X}_t\}_{t\geq 1}) \right] \nonumber \\
    &=\frac{1}{2M!}\sum_{\pi\in\Pi_{2M}} \inf_{q\in[Q]^+}\inf_{s'\in\mathcal{S}}\inf_{\hat{q}\in[Q]\setminus\{q\}}\mathbb{E}^{q,s'}_{0}\left[ T'^{\hat{q}}(\{\mathbf{X}_t\}_{t\geq 1}) \right] \nonumber \\
    &=\frac{1}{2M!}\sum_{\pi\in\Pi_{2M}} \A_{\textrm{genie}}[T'] = \A_{\textrm{genie}}[T'].
\end{align}
As a remark, the above proof generalizes that for the mean time to false alarm part for the binary BCQCD which is omitted in Lemma \ref{lem:general}.
\end{IEEEproof}

\begin{lemma}\label{lem:extend_Nikogonov}
    For any centralized multi-sensor and $(Q+1)$-hypothesis QCD rule $\tilde{T}$, subject to $\A[\tilde{T}] \geq \gamma$, the detection delay is lower-bounded by
    \[
     \mathcal{D}[\tilde{T}]\gtrsim \frac{\log\gamma}{\tilde{I}^*}, \; \mbox{as} \; \gamma \rightarrow \infty,
    \]
where $\tilde{I}^*$ is defined in \eqref{eq:til_I*}.
\end{lemma}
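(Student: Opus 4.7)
The plan is to recognize that, once the genie-aided transformation is in place, the transformed problem is a centralized QCD with i.i.d.\ vector observations $\tilde{\mathbf{X}}_t$ drawn from $\tilde{P}_0$ before the change and from one of $\tilde{P}_1,\ldots,\tilde{P}_Q$ after it. Since the distributed structure has been collapsed into a single vector-valued observation per time step, the problem is formally identical to the single-sensor multi-hypothesis QCD studied by Nikiforov and Tartakovsky--Veeravalli, only with the alphabet enlarged from $\R$ to $\R^K$. The idea is therefore to invoke \cite[Theorem 2]{nikiforov95} (or equivalently the converse implicit in \cite{TV02}) directly, with the KL divergences $\tilde{I}(q,j)$ playing the role of the single-sensor divergences and $\tilde{I}^*$ playing the role of the minimum divergence.

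Concretely, I would proceed as follows. First, I would verify that the mild regularity conditions required by Nikiforov's converse are inherited from Assumption~\ref{asm:I}; in particular, for each pair $(q,j)$ with $q\in[Q]$ and $j\in[Q]^+\setminus\{q\}$, I would check $0<\tilde{I}(q,j)<\infty$ and a second-moment bound on $\log(\tilde{P}_q/\tilde{P}_j)$. Both should follow from convexity of the log-sum expression in \eqref{eq:til_P} combined with a Jensen-style bound (yielding $\tilde{I}(q,j)\le \sum_{k} I(q,j)$ for the honest coordinates and an analogous mixture bound for the first $2M$ coordinates), using the per-sensor moment finiteness that Assumption~\ref{asm:I}(i) provides. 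Next, I would apply the classical converse: for each fixed hypothesis index $q\in[Q]$, choose the ``nearest alternative'' $j_q^*$ in the transformed divergence, and use a change-of-measure / Wald-identity argument showing that if $\esssup\E^q_\nu[(\tilde{T}-\nu)^+\mid \tilde{\mathbf{X}}_1^\nu]$ were $o(\log\gamma/\tilde{I}(q,j_q^*))$, then the mean time to a false isolation from $H_{j_q^*}$ to $H_q$ would fall below $\gamma$, contradicting $\A[\tilde{T}]\ge\gamma$. Taking the worst $q$ and minimizing over the alternative $j$ yields the stated bound with $\tilde{I}^*=\min_{q\in[Q]}\min_{j\neq q}\tilde{I}(q,j)$.

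The main obstacle I anticipate is technical rather than conceptual: Nikiforov's theorem is stated for a scalar observation sequence, so one needs to make sure that the proof carries over verbatim when the observation is a random vector, and that the essential supremum conditioning on $\tilde{\mathbf{X}}_1^\nu$ (rather than the standard $T>\nu$ form) does not cause any trouble. For the first point, the argument in \cite{nikiforov95} only uses the i.i.d.\ and finite-variance structure of the log-likelihood increments, so it is insensitive to whether the underlying sample space is $\R$ or $\R^K$. For the second point, the two forms of essential supremum are equivalent by \lemref{lma:equivalence} (which was stated for the original BDQCD but whose proof is purely measure-theoretic and transfers unchanged to the transformed centralized problem). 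Once these two routine reductions are dispatched, the bound $\mathcal{D}[\tilde{T}]\gtrsim \log\gamma/\tilde{I}^*$ is exactly the statement of Nikiforov's converse applied to the transformed QCD, and the proof is complete.
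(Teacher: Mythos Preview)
Your proposal is correct and follows essentially the same approach as the paper: both recognize that the transformed problem is a single-sensor (vector-valued) multi-hypothesis QCD and invoke \cite[Theorem~2]{nikiforov95} directly, noting that Nikiforov's argument uses only the i.i.d.\ structure of the log-likelihood increments and is therefore insensitive to the dimension of the observation space. The paper's proof is slightly more explicit in writing out the multi-sensor extension of Nikiforov's auxiliary stopping variables $\tilde{T}_{a,i}$, while you spell out more of the regularity verification and the $\esssup$ equivalence via \lemref{lma:equivalence}; these are complementary details rather than differing strategies.
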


\begin{IEEEproof}
One can follow the same arguments in the proof of \cite[Theorem 2]{nikiforov95} to show this Lemma. Specifically, for any $q\in[Q]$,
take an arbitrary $\eps_q\in (0,1)$. We extend the sequence of
additional stopping variables $\tilde{T}_{a,0}:= 0 <\tilde{T}_{a,1}<\tilde{T}_{a,2}<...$  introduced in the beginning of the proof of \cite[Theorem 2]{nikiforov95} into multi-sensor version as
\begin{align*}
\tilde{T}_{a,i+1} &= \max_{q\in [Q]} \tilde{T}^q_{a,i+1},\\
\tilde{T}^q_{a,i+1} &= \inf\left\{n\ge \tilde{T}_i+1: \frac{\tilde{P}_q(\tilde{\mathbf{X}}_{\tilde{T}_i+1}) ... \tilde{P}_q(\tilde{\mathbf{X}}_{n})}{\tilde{P}_0(\tilde{\mathbf{X}}_{\tilde{T}_i+1}) ... \tilde{P}_0(\tilde{\mathbf{X}}_{n})}\le \eps_q\right\}.
\end{align*}
Then the rest of the proof simply follows \cite{nikiforov95}.
\end{IEEEproof}

\section{Proof of Proposition~\ref{Thm_multi_finite_h}}\label{apx:proof_prop}
To prove \eqref{eq_MultishotWorstI} and \eqref{eq_SimutWorstI}, we note that in the worst case, all the compromised sensors can raise alarms about the same hypothesis continuously. This implies that as soon as $d-M$, $d\in\{M+1,...,|\cN|\}$, honest sensors raise alarms of the same hypothesis, the compromised sensors can cooperatively enforce a false alarm event. If the fusion center stops only once, for the false isolation, it may declare the correct decision before the first $\sigma^{\hat{q}}_{(d-M)}(h)$ for $\tau^m_{(d)}(h)$ (or $S^{\hat{q}}_{d-M}(h)$ for $\tau^s_{(d)}(h)$) and then $T^{\hat{q}}=\infty$ in \eqref{eq:falseIsolationTime_multiple}, which results in a lower bound instead of equality in \eqref{eq_MultishotWorstI} (and \eqref{eq_SimutWorstI}). When the fusion center stops multiple times, recall that in both $\tau_{d}(h)$ and $\tau^s_{(d)}(h)$, the fusion center resets local CUSUM matrices of all sensors to the all zero matrix after each stop time. The mean of false alarm or isolation time $T^{\hat{q}}$ of $\tau_{d}(h)$ (respectively $\tau^s_{(d)}(h)$) is clearly lower bounded by that obtained by applying $\tau_{d}(h)$ (respectively $\tau^s_{(d)}(h)$) but without reset, which corresponds to the right hand side of \eqref{eq_MultishotWorstI} (respectively \eqref{eq_SimutWorstI}).


For the detection delay \eqref{eq:ess_delay_multiple}, the worst-case attack $\cG$ for both the proposed stopping algorithms happens when all compromised sensors always output local decisions corresponding to $H_0$. Moreover, we note that both the global decision rules mentioned above are non-decreasing functions in each entry of local CUSUM matirx $Y^k_t(q,j)$ in \eqref{eq_CUSUM_element} and the worst CUSUM statistic that the pre-change observations can impose is $Y^k_t(q,j)=0, t \leq \nu $. Hence, for the proposed algorithms, in the worst case, $T$ is not a function of previous observations and Lemma 3 in \cite{BFL16} can be applied to show the equivalence between \eqref{eq:ess_delay_multiple} and
\begin{equation}\label{eq:delay_multiple}
    \cD[T] = \sup_{q\in[Q]}\sup_g \Eqg[T],
\end{equation}
which corresponds to the scenario where the change occurs at $t=0$. We therefore only have to consider \label{eq:delay_multiple} as the worst-case expected detection delay in the sequel. For mulit-shot $\cD[\tau^m_{(d)}(h)]$ in \eqref{eq_MultishotWorst}, the fusion center has to wait for $d$ honest sensors accepting the true $H_q$. However, false isolation $q_f \neq q$ may still happens if $\sigma^{q'}_{(d)}(h)<\sigma^q_{(d)}(h)$, $q' \in [Q] \setminus \{q\}$. Moreover, the longest extra delay caused by ties is $Q-1$. Thus we have the upper-bound in \eqref{eq_MultishotWorst}. The upper-bound for simultaneous $\cD[\tau^s_{(d)}(h)]$ in \eqref{eq_SimutWorst} can be obtained similarly. 

\section{Proof of Lemma~\ref{lem:sigmas equal}}\label{sec:app}
Fix $q\in[Q]$ and $k\in \cN$. For any $j \in [Q]^+$ with $j\neq q$, the CUSUM statistics $Y^k_{t}(q,j)$ at sensor $k$ can be decomposed as $Y^k_t(q,j) = Z^k_t(q,j) +\xi^k_t(q,j)$, where
\[
Z^k_t(q,j) := \sum_{s=1}^t \log\left(\frac{P_q(X^k_s)}{P_j(X^k_s)}\right),\xi^k_t(q,j) := -\min_{0\le s<t} Z^k_s(q,j).
\]
Under $\P_q$, $Z^k(q,j)$ is a random walk with drift $I(q,j)>0$ and variance $\sigma^2(q,j)<\infty$. It follows that $Z^k_t(q)$, defined as
$(Z^{k}_t(q,1),\ldots, Z^{k}_t(q,q-1), Z^{k}_t(q,0), Z^{k}_t(q,q+1),\ldots, Z^{k}_t(q,Q)),$
is a $Q$-dimensional random walk. Also $Y^k_t(q)$, which is similarly defined as $Z^k_t(q)$ by replacing $Z^{k}_t(q,j)$ with $Y^{k}_t(q,j)$, is a $Q$-dimensional perturbed random walk, as discussed in \cite[Section 6.10]{Gut-book-09}\footnote{Note that while the exposition in \cite[Section 6.10]{Gut-book-09} focuses on two-dimensional perturbed random walks, the same results there can be generalized to multi-dimensional cases as stated in \cite[Remark 10.1, p. 208]{Gut-book-09}.}.

Now for any $j \in [Q]^+$ with $j\notin\{q,{j^*_q}\}$, at time index $\sigma^{q,{j^*_q}}_k(h)$ in \eqref{eq_closet_accept_time}, the CUSUM statistics for hypotheses $(q,j)$
\[
\frac{Y^k_{\sigma^{q,{j^*_q}}_k(h)}(q,j)}{h} \to \frac{I(q,j)}{I(q,{j^*_q})}=\frac{I(q,j)}{I^q}\quad \hbox{as $h\to\infty$},\quad \ \P_q\hbox{-a.s.},
\]
by \cite[Theorem 10.1, p.206]{Gut-book-09}. This, together with Assumption~\ref{asm:I}, implies that it holds $\P_q$-a.s. that $\forall j \in [Q]^+ \setminus  \{q,{j^*_q}\}$
\begin{equation}\label{>1}
\frac{Y^k_{\sigma^{q,{j^*_q}}_k(h)}(q,j)}{h}>1,
\end{equation}
as $h$ is large enough.
Now, observe that from \eqref{eq_MatrixCUSUMacceptabletime},
$\sigma^q_k(h) = \inf\left\{t\in\N : \min_{0\le j\le Q,\ j\neq q} Y^k_t(q,j) \ge h\right\},$
and the RHS equals to
\begin{align*}
 & \inf \! \left\{ \!t\in\N \! : \min_{0\le j\le Q,\ j\notin\{q,{j^*_q}\}} Y^k_t(q,j) \ge h\ \hbox{and}\ Y^k_t(q,{j^*_q})\ge h \! \right\}\\
& =  \sigma^{q,{j^*_q}}_k(h),\quad \hbox{as $h$ is large enough},\qquad \P_q\hbox{-a.s.},
\end{align*}
where the last line follows from \eqref{>1}. Since this relation is true for all $k\in\cN$ and $\cN$ is a finite set, we conclude that $\sigma^q_k(h)= \sigma^{q,{j^*_q}}_k(h)$ for all $k\in \cN$ as $h$ is large enough, $\P_q$-a.s. This concludes the proof for part (i).

For part (ii), from \eqref{eq_simuMatrixCUSUMacceptabletime}, $S^q_d(h)$ is equal to
\begin{align*}
   \inf\big\{t\in\N : \min_{0\le j\le Q,\ j\neq q} & Y^{k}_t(q,j) \ge h\ \ \forall k\in \mathcal{L},\ \notag \\
  & \hbox{for some}\ \mathcal{L}\subset [\cN],\ |\mathcal{L}|=d \big\}.
\end{align*}
Then from \eqref{>1}, $S^q_d(h)$ becomes
\begin{align*}
& \inf\!\left\{t\in\N \!: \! Y^{k}_t(q,{j^*_q}) \ge h\ \forall k\in \! \mathcal{L},\ \!\!\mbox{for}~\mbox{some}\!\ \mathcal{L} \! \subset \! [\cN],\ \!\!|\mathcal{L}|\!=\!\ell\right\}\\
&= \inf\left\{t\in\N : Y^{(K-d+1)}_t(q,{j^*_q})\ge h\right\}.
\end{align*}
Then as $h \rightarrow \infty$, $\P_q\hbox{-a.s.}$ we have $S^q_d(h)= S^{q,{j^*_q}}_d(h)$.

\section{Proofs of Theorems~\ref{Thm_QaryDelay} and \ref{CoroSimu_Qarydelay}}\label{apx:proof_delay}
We first prove Theorem~\ref{Thm_QaryDelay}.
\begin{IEEEproof}
Fix a $1\leq d\leq |\cN|$. 
From  \cite[Theorem 3.1]{BF16}, we know that as $h\to\infty$,
\begin{equation}
    \Eq[\sigma^{q,{j^*_q}}_{(d)}(h)] = \frac{h}{I^q} + D^q_{d:|\cN|} \sqrt{h}(1+o(1)).
\end{equation}
Since $\sigma^{q}_{(d)}(h)$ and $\sigma^{q,{j^*_q}}_{(d)}(h)$ are both nonnegative and non-decreasing in $h$, the monotone convergence theorem yields
\begin{equation}
    \lim_{h\to\infty} \Eq[\sigma^q_{(d)}(h)] = \Eq\left[\lim_{h\to\infty}\sigma^q_{(d)}(h)\right] = \Eq\left[\lim_{ h \to\infty}\sigma^{q,{j^*_q}}_{(d)}(h)\right] = \lim_{ h \to\infty}\Eq\left[\sigma^{q,{j^*_q}}_{(d)}(h)\right],
\end{equation}
where the second equality follows from Part (i) of Lemma \ref{lem:sigmas equal}. The above two equations together show \eqref{eq_Qarydelayq}. Finally, plugging \eqref{eq_Qarydelayq} into \eqref{eq_MultishotWorst} and observing that $Q-1$ vanishes as $h\rightarrow\infty$ results in \eqref{eq_Qarydelay}.
\end{IEEEproof}

We then provide a proof to Theorem~\ref{CoroSimu_Qarydelay}.
\begin{IEEEproof}
Fix a $1\leq d\leq|\cN|$. From \cite[Theorem 3.2]{BF16}, it follows that as $h\to\infty$,
\begin{equation}
    \Eq[S^{q,{j^*_q}}_{d}(h)] \le \frac{h}{I^q} + D^q_{d:|\cN|} \sqrt{h}(1+o(1)).
\end{equation}
Since $S^{q}_{d}(h)$ and $S^{q,{j^*_q}}_{d}(h)$ are both nonnegative and non-decreasing in $h$, the monotone convergence theorem yields
\begin{equation}
    \lim_{h\to\infty} \Eq[ S^q_{d}(h)] = \Eq\left[\lim_{h\to\infty} S^q_{d}(h)\right] = \Eq\left[\lim_{ h \to\infty} S^{q,{j^*_q}}_{d}(h)\right] = \lim_{ h \to\infty}\Eq\left[ S^{q,{j^*_q}}_{d}(h)\right],
\end{equation}
where the second equality follows from Part (ii) of Lemma~\ref{lem:sigmas equal}. The previous two equations together then give \eqref{eqn:simul_delay0}. Finally, plugging \eqref{eqn:simul_delay0} into \eqref{eq_SimutWorst} results in \eqref{eqn:simul_delay}. 
\end{IEEEproof}

\section{Proofs of Theorems~\ref{Thm_one_QaryFalse} and \ref{CoroSimu_QaryFalse}}\label{apx:proof_false}
We first provide a proof to Theorem~\ref{Thm_one_QaryFalse}.
\begin{IEEEproof}
We assume that whenever a tie happens, every competing hypothesis becomes acceptable simultaneously at the fusion center. This would only make the mean time to a false alarm or a false isolation smaller; hence, is valid for deriving lower bounds.

Let $q_a$ be the actual hypothesis index and
\begin{equation}
    q^*= \arg \min_{\hat{q}\in [Q]\setminus\{q_a\}}\mathbb{E}_0^{q_a,\emptyset}[\sigma^{\hat{q}}_{(d-M)}(h)].
\end{equation}
Recall that $\mathsf{P}^{q_a,\emptyset}_0$ is the probability measure when the change of type $H_{q_a}$ happens at $\nu=0$ and the compromised sensors are absent. With a fixed $d$, we have
\begin{equation}\label{eq_layer cake expectation}
\mathbb{E}_0^{q_a,\emptyset}[\tilde{\sigma}^{q^*}_{(d-M)}(h)]=\sum^\infty_{t=0}\mathsf{P}^{q_a,\emptyset}_0\left(\tilde{\sigma}^{q^*}_{(d-M)}(h)>t\right).
\end{equation}
Now, let $\cN_q(s)\triangleq\{k\in \cN:\sigma^{q}_k(h)\leq s\}$ be the set of honest sensor indices with $\sigma^{q}_{k}(h) \leq s$. For every $t\in\N$, the event $\sigma^{q^*}_{(d-M)}(h) \leq t$ happens if and only if the following is true, 
\begin{equation}
    \bigcup^t_{s=1}\left\{\left ( \bigcap^{Q}_{q\neq q^*} |\cN_q(s)|<d-M \right ) \bigcap \big \{ |\cN_{q^*}(s)| \geq d-M \big \}\right\}.
\end{equation}
Then, we have
\begin{align}
\mathsf{P}^{q_a,\emptyset}_0 \left( \sigma^{q^*}_{(d-M)}(h) \leq t \right) & \leq \mathsf{P}^{q_a,\emptyset}_0 \left( \bigcup^t_{s=1}  |\cN_{q^*}(s)| \geq d-M \right) \notag \\ & = \mathsf{P}^{q_a,\emptyset}_0 \left( |\cN_{q^*}(t)| \geq d-M \right) \label{eq_QaryFalse_dsensor1}
\end{align}
Also, we know that $|\cN_{q^*}(t)| \geq d-M$ happens if and only if there are sensor indices $k_1,\ldots, k_{d-M} \in \cN$ with $\sigma^{q^*}_{k_j}(h) \leq t$ for $j \in [d-M]$. We further bound \eqref{eq_QaryFalse_dsensor1} by union bound as follows,
\begin{align}
  \mathsf{P}^{q_a,\emptyset}_0 \left( |\cN_{q^*}(t)| \geq d-M \right) & \leq \sum_{k_1,\ldots, k_{d-M} \in \cN} \mathsf{P}^{q_a,\emptyset}_0 \left( \bigcap_{j=1}^{d-M}  \sigma^{q^*}_{k_j}(h) \leq t \right) \notag \\
   & = \sum_{k_1,\ldots, k_{d-M} \in \cN} \prod_{j=1}^{d-M} \mathsf{P}^{q_a,\emptyset}_0 \left( \sigma^{q^*}_{k_j}(h) \leq t \right) \notag \\
   & = \left(\begin{array}{c}
|\cN| \\ d-M \end{array} \right)\Big ( \mathsf{P}^{q_a,\emptyset}_0 \left( \sigma^{q^*}_{1}(h) \leq t \right) \Big)^{d-M}. \label{eq_QaryFalse_dsensor2}
\end{align}
where the first and second equalities are from the independent and identical distributions of different sensor observations, respectively.

Note that from the definition of matrix CUSUM in \eqref{eq_MatrixCUSUMacceptabletime}, it follows that
\begin{align}
  \mathsf{P}^{q_a,\emptyset}_0 \left( \sigma^{q^*}_{1}(h) \leq t \right) & =\mathsf{P}^{q_a,\emptyset}_0 \left( \bigcup^t_{s=1} \left \{ \left \{ Y^1_{s,q^*} \geq h,  q^*=\arg \max_{q' \in [Q]} Y^1_{t,q'} \right\}\bigcap^{s-1}_{s'=1} \max_{q' \in [Q]} Y^1_{s',q'}<h \right \} \right) \notag \\
   & \leq \sum_{s=1}^{t} \mathsf{P}^{q_a,\emptyset}_0 \left ( \left \{ Y^1_{s,q^*} \geq h,  q^*=\arg \max_{q' \in [Q]} Y^1_{t,q'} \right\}\bigcap^{s-1}_{s'=1} \max_{q' \in [Q]} Y^1_{s',q'}<h \right )  \notag \\
   & \leq \sum_{s=1}^{t} \mathsf{P}^{q_a,\emptyset}_0 \left ( Y^1_{s,q^*} \geq h\right ) =\sum_{s=1}^{t}\mathsf{P}^{q_a,\emptyset}_0 \left ( \bigcap_{0 \leq j \leq Q, j \neq q^*} Y^1_s(q^*,j) \geq h \right ) \notag \\
   & \leq \sum_{s=1}^{t} \mathsf{P}^{q_a,\emptyset}_0 \left (Y^1_s(q^*,0) \geq h \right ). \label{eq_QaryFalse_fsensor2}
\end{align}
Now, we know from \cite{bounds} that $\mathsf{P}^{q_a,\emptyset}_0 \left (Y^1_s(q^*,0) \geq h \right ) \leq e^{-h}$. Thus, 
from \eqref{eq_QaryFalse_dsensor2} and \eqref{eq_QaryFalse_fsensor2}, we have
\begin{equation}\label{eqn:prob_false_iso}
\mathsf{P}^{q_a,\emptyset}_0 \left( \sigma^{q^*}_{(d-M)}(h) \leq t \right) \leq \left(\begin{array}{c}
|\cN| \\ d-M \end{array} \right) t^{d-M}e^{-(d-M)h}.
\end{equation}

Plugging \eqref{eqn:prob_false_iso} into \eqref{eq_layer cake expectation} results in
\begin{align}
  \mathbb{E}_0^{q_a,\emptyset}[\sigma^{q^*}_{(d-M)}(h)] & > \sum^\infty_{t=0} \left(1-\left(\begin{array}{c}
|\cN| \\ d-M \end{array} \right) t^{d-M}e^{-(d-M)h} \right)^+  \notag \\
   & \geq \int_{0}^{\infty} \left(1-\left(\begin{array}{c}
|\cN| \\ d-M \end{array} \right) t^{d-M}e^{-(d-M)h} \right)^+dt, \label{eq_QaryFalse_fsensor3}
\end{align}
where the second inequality comes from the non-increasing property in $t$ of
\begin{equation}
1-\left(\begin{array}{c}
|\cN| \\ d-M \end{array} \right) t^{d-M}e^{-(d-M)h}.
\end{equation}

Finally, noticing that the lower bound in \eqref{eq_QaryFalse_fsensor3} is not a function of the actual hypothesis $q_a$ concludes the proof of $\A[\tau^m_{(d)}(h)]$. 
\end{IEEEproof}

In what follows, we present a proof to Theorem~\ref{CoroSimu_QaryFalse}.
\begin{IEEEproof}
Again, let $q_a$ be the actual hypothesis and let
\begin{equation}
    q^*=\arg \min_{\hat{q}\in[Q]\setminus\{q_a\}}\mathbb{E}_0^{q_a,\emptyset}[S^{\hat{q}}_{d-M}(h)].
\end{equation}
With a fixed $d$, we have
\begin{equation}\label{eq_simu_layer cake expectation}
\mathbb{E}_0^{q_a,\emptyset}[S^{\;q^*}_{(d-M)}(h)]=\sum^\infty_{t=0}\mathsf{P}^{q_a,\emptyset}_0\left(S^{\;q^*}_{(d-M)}(h)>t\right).
\end{equation}
Note that for every $t\in\N$, the event $S^{\;q^*}_{(d-M)}(h) \leq t$ happens if and only if the following event is true,
\begin{equation}
    \bigcup^t_{s=1}\left\{\left ( \bigcap^{Q}_{q\neq q^*} Y^{(K-(d-M)+1)}_{s,q} < h \right ) \bigcap Y^{(K-(d-M)+1)}_{s,q^*} \geq h \right\}.
\end{equation}
Then, we have
\begin{align}
\mathsf{P}^{q_a,\emptyset}_0 \left( S^{\;q^*}_{(d-M)}(h) \leq t
\right) & \leq \mathsf{P}^{q_a,\emptyset}_0 \left(
\bigcup^t_{s=1}  Y^{(K-(d-M)+1)}_{s,q^*} \geq h  \right) \notag \\
& \leq \sum_{s=1}^{t} \; \mathsf{P}^{q_a,\emptyset}_0 \left(
Y^{(K-(d-M)+1)}_{s,q^*} \geq h \right)
\label{eq_simuQaryFalse_dsensor1}
\end{align}
Also, we know that event $Y^{(K-(d-M)+1)}_{s,q^*} \geq h$ happens if and only if there are $d-M$ sensors with indexes $k_1,\ldots, k_{d-M} \in \cN$ which have local decisions $q^*$ at time index $s$. Therefore,
\begin{align}
  \mathsf{P}^{q_a,\emptyset}_0 \left( Y^{(K-(d-M)+1)}_{s,q^*} \geq h \right)  &= \sum_{k_1,\ldots, k_{d-M} \in \cN} \prod_{j=1}^{d-M} \mathsf{P}^{\emptyset}_{q_a} \left ( Y^{k_j}_{s,q^*} \geq h  \right ) \notag \\
   & = \left(\begin{array}{c}
|\cN| \\ d-M \end{array} \right)\Big ( \mathsf{P}^{q_a,\emptyset}_0 \left(  Y^{1}_{s,q^*} \geq h \right) \Big)^{d-M} \label{eq_simuQaryFalse_dsensor2}
\end{align}
where the first and second equalities are from the independent and identical distributions of different sensor observations, respectively. Now as in \eqref{eq_QaryFalse_fsensor2}, it follows that
\begin{equation}\label{eq_simuQaryFalse_fsensor3}
  \mathsf{P}^{q_a,\emptyset}_0 \left(  Y^{1}_{s,q^*} \geq h \right) \leq \mathsf{P}^{q_a,\emptyset}_0 \left (Y^1_s(q^*,0) \geq h \right ) \leq e^{-h}
\end{equation}
Thus, from \eqref{eq_simuQaryFalse_dsensor1}-\eqref{eq_simuQaryFalse_fsensor3},
\begin{equation}\label{eq_simuQaryFalse_fsensor4}
\mathsf{P}^{q_a,\emptyset}_0 \left( S^{\;q^*}_{(d-M)}(h) \leq t
\right) \leq \left(\begin{array}{c} |\cN| \\ d-M \end{array}
\right) t e^{-(d-M)h}.
\end{equation}
Plugging \eqref{eq_simuQaryFalse_fsensor4} into \eqref{eq_simu_layer cake expectation} and noticing that the bound in \eqref{eq_simuQaryFalse_fsensor4} is independent of $q_a$ completes the proof for the lower bound on $\A(\tau^s_{(d)}(h))$. 
\end{IEEEproof}


\end{document}